 \newenvironment{proofof}[1]{\noindent{\bf Proof of #1:}}{$\qed$\par}
\newtheorem{thm}{\protect\theoremname}
\newtheorem{lem}[thm]{\protect\lemmaname}
\newtheorem{cor}[thm]{\protect\corollaryname}
\newtheorem{defn}[thm]{\protect\definitionname}
\newtheorem{rem}{\protect\remarkname}
\providecommand{\claimname}{Claim}
\providecommand{\lemmaname}{Lemma}
\providecommand{\propositionname}{Proposition}
\providecommand{\theoremname}{Theorem}
\providecommand{\corollaryname}{Corollary}
\providecommand{\definitionname}{Definition}
\providecommand{\assumptionname}{Assumption}
\providecommand{\remarkname}{Remark}
\providecommand{\factname}{Fact}
\providecommand{\obsname}{Observation}
\providecommand{\questionname}{Question}
\newcommand{\wt}{\widetilde}
\newcommand{\KD}{\textsc{KD}}
\title{Differentially Private Synthetic Data Using KD-Trees}
\author[1]{\href{mailto:<eleonora.kreacic@jpmchase.com>?Subject=Your UAI 2023 paper}{Eleonora~Krea\v{c}i\'{c}}}
\author[1]{Navid~Nouri}
\author[1]{Vamsi~K.~Potluru}
\author[1]{Tucker~Balch}
\author[1]{Manuela~Veloso}
\affil[1]{%
    J.P. Morgan AI Research
}
\begin{document}
\maketitle

\begin{abstract}
    Creation of a synthetic dataset that faithfully represents the data distribution and simultaneously preserves privacy is a major research challenge. Many space partitioning based approaches have emerged in recent years for answering statistical queries in a differentially private manner. However, for synthetic data generation problem, recent research has been mainly focused on deep generative models. In contrast, we exploit space partitioning techniques together with noise perturbation and thus achieve intuitive and transparent algorithms. We propose both data independent and data dependent algorithms for $\epsilon$-differentially private synthetic data generation whose kernel density resembles that of the real dataset. Additionally, we provide theoretical results on the utility-privacy trade-offs and show how our data dependent approach overcomes the curse of dimensionality and leads to a scalable algorithm. We show empirical utility improvements over the prior work, and discuss performance of our algorithm on a downstream classification task on a real dataset. 
\end{abstract}

\section{Introduction}

Publishing data of a highly sensitive nature in domains of finance or health, carries a risk of compromising privacy of individuals and therefore a breach of privacy regulations (e.g. HIPPA, FCRA, GDPR). This limitation can be potentially circumvented by the use of synthetic data. However, synthetic data per se is not inherently private \citep{jordon2022synthetic}. In this paper, we study the problem of publishing a synthetic dataset that faithfully represents the original data whilst at the same time does not comproimise privacy of individuals in the original. 

\begin{figure}[h]
    \centering
\includegraphics[width=0.3\textwidth]{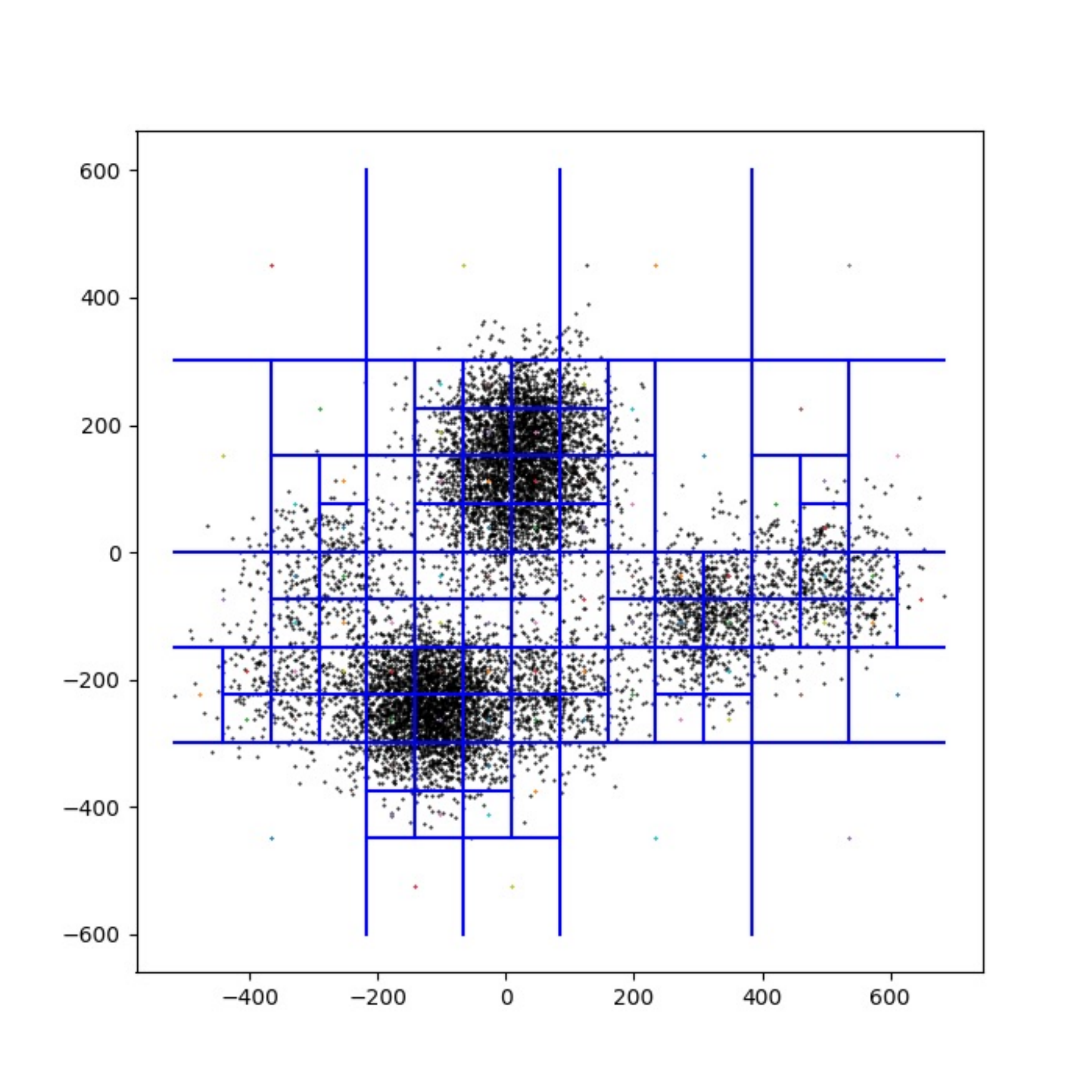}
    \caption{Data dependent partitioning in $\mathbb{R}^2$. Our data dependent algorithm achieves more refined partitioning in densely populated areas and thus better utility of DP synthetic data.} 
    \label{fig:binning}
\end{figure}
Preserving privacy of individuals while publishing a dataset for public use is a known challenge \citep{DBLP:conf/icml/BalogTS18}. A de facto standard for privacy is differential privacy (DP), which is widely used in the literature and in practice. 
Many existing works aim to preserve quality of differentially private answers for a certain class of queries \citep{DBLP:conf/kdd/MohammedCFY11, DBLP:journals/tdp/XiaoXFGL14,NIPS2012_208e43f0}; see \cite{7911185} for a survey. However, a more general problem studies the release of a differentially private synthetic dataset that can be used for downstream tasks without additional privacy leaks. Recent work mainly applied generative adversarial networks (GAN) \citep{DBLP:conf/nips/GoodfellowPMXWOCB14} and utilized divergence metrics such as Jensen-Shannon divergence and Wasserstein distance as a metric of quality to compare synthetic and original datasets \citep{DBLP:journals/pvldb/ParkMGJPK18,DBLP:conf/cvpr/Torkzadehmahani19,DBLP:conf/iclr/JordonYS19a,DBLP:journals/corr/abs-1802-06739,DBLP:conf/sec/FrigerioOGD19, https://doi.org/10.48550/arxiv.1610.05755, https://doi.org/10.48550/arxiv.1802.08908}. Another class of utility metrics is based on kernels e.g., distance in reproducing kernel Hilbert space (RKHS) or similarly maximum mean discrepancy (MMD), see Section~\ref{sect: KME and RKHS distance} for a detailed discussion. The advantage of kernel based metrics is that they compare two probability measures in terms of all possible moments \citep{DBLP:conf/aistats/HarderAP21} and thus capture a wide class of statistical properties of the dataset.

We propose algorithms that for an input dataset output its differentially private synthetic counterpart. The key techniques used in this work are space partitioning~\citep{ram2013space} and noisy perturbation. Similar ideas have been used in the literature for designing differentially private statistical query algorithms and histogram release applications, e.g., \citep{DBLP:conf/icde/QardajiYL13,DBLP:journals/tdp/XiaoXFGL14,DBLP:conf/sigmod/ZhangXX16}. However, naive implementation of space partitioning techniques for DP dataset release problem suffers from the curse of dimensionality. 
We show how our data-dependent approach solves this issue and also leads to scalable algorithms. As our method is based on the simple idea of space partitioning, the advantage in comparison to the state of the art generative models is interpretability of our algorithms.

\subsection{Related work}
Prior work which consider similar settings to ours is relatively sparse. The closest are \cite{DBLP:conf/icml/BalogTS18} and \cite{DBLP:conf/aistats/HarderAP21}, although the latter is more focused on labeled data as well as image data. 
Both of these papers aim to release a dataset that is close to the original one in terms of kernel RKHS distance while preserving privacy. \cite{DBLP:conf/icml/BalogTS18} presents two algorithms. Their first algorithm either needs a small fraction of the dataset to be publicly available, or reweights a sampled set of points from the support of the underlying distribution. The requirement on the public input data represents a limitation which we overcome.
Their second algorithm is based on random features and an iterative gradient based optimization procedure to apply reduced set method which has the twin issues of being slow in high dimensions and also suffers from the lack of interpretability. \cite{DBLP:conf/aistats/HarderAP21} improved upon this work and achieved better results, and in particular for lower dimensions, say $5$, but not when the dimensions are much higher. Their approach relies on deep generative model to minimize MMD, and thus again suffers the lack of interpretability. None of these works provides theoretical results on the utility-privacy trade-off. Our algorithms are transparent which enables us to provide theoretical guarantees on utility. Our data dependent algorithm achieves good utility in high dimensions.

\subsection{Our contribution}
Inspired by locality sensitive hash functions and nearest neighbor search algorithms introduced by \cite{DBLP:conf/stoc/IndykM98}, our approach is based on space partitioning schemes and in particular KD-trees \cite{KDtree}. Our goal is to output a synthetic dataset that imposes similar kernel density on the space as the input data, and does not compromise privacy of the input. Intuitively, if we partition the space into small sections (bins) and preserve the ratio of points
after noise addition (for the purpose of privacy) we will approximately preserve the kernel density. Based on this idea, we introduce two algorithms that yield $\epsilon$-differentially private synthetic data outputs. Our data independent approach, Algorithm $1$, implements a naive version of the idea and naturally suffers from the curse of dimensionality. Our data dependent algorithm, Algorithm $2$, overcomes this and achieves better utility. Our main contributions are:
\begin{itemize}
    \item We provide an upper bound on utility loss of our data independent algorithm for a general setting 
    of unknown distribution of input data (Theorem \ref{thm:general-1} and Theorem \ref{thm:general-2}). We improve the bound 
    for a special case of input data from a mixture of Gaussians in $\mathbb{R}^{d}$ (Theorem \ref{thm one gaussian tradeoff}).

    \item 
    Our data dependent algorithm achieves smaller number of empty bins and more refined partitioning in densely populated areas which yields 
    better utility. We overcome the curse of dimensionality using an implicit
    sampling of empty bins (Theorem \ref{thm adaptive binning DP} and Theorem \ref{thm data dependant algo DP}).
    
    \item Unlike previous approaches \citep{DBLP:conf/icml/BalogTS18,DBLP:conf/aistats/HarderAP21}, we do not rely on black box methods, and thus achieve interpretability which is important in many practical settings. 
    In addition, we do not require a fraction of input dataset to be public.
    
    \item In Section \ref{sect experiments} we show how: (i) our algorithms outperform algorithms by \cite{DBLP:conf/icml/BalogTS18}, (ii) our data dependent algorithm overcomes curse of dimensionality, (iii) empirical utility loss compares to our theoretical bound from Theorem \ref{thm one gaussian tradeoff}, (iv) performance of our algorithms on downstream binary classification task compares with \cite{DBLP:conf/aistats/HarderAP21}.
\end{itemize}

\subsection{Background}
We give an introduction to MMD and differential privacy.
\subsubsection{Kernel Density Estimates and RKHS distance} \label{sect: KME and RKHS distance}

For a (unweighted) dataset $P\subset{\mathbb{R}^{d}}$ and a kernel $K:{\mathbb{R}^{d}}\times{\mathbb{R}^{d}}\to \mathbb{R}$, the kernel density (KD) with respect to $P$ is defined at any point $x\in\mathbb{R}^{d}$ as $\text{KD}^{K}_{P}(x) = \frac{1}{\vert P \vert}\sum_{p\in P}K(x,p)$. If $P$ is equipped with weights such that $\sum_{p\in P}w_{p}=1$, then kernel density is given by
$\text{KD}^{K}_{P}(x) = \sum_{p\in P}w_{p}K(x,p)$. For ease of notation, we will often write $\text{KD}_{P}(\cdot)$ instead of $\text{KD}^{K}_{P}(\cdot)$. For the two datasets $P$ and $Q$, $\ell_{\infty}$ distance of two KDs is defined as $\Vert \text{KD}_{P} - \text{KD}_{Q} \Vert_{\infty} = \sup_{x \in \mathbb{R}^{d}} \vert \text{KD}_{P}(x)- \text{KD}_{Q}(x) \vert$. If $K$ is positive definite, then $K(p,x)$ can be represented as an inner product in RKHS $\mathcal{H}_{K}$. That is, there is $\phi_{K}:\mathbb{R}^{d}\to\mathcal{H}_{K}$ such that $\phi_{K}(x) = K(x, \cdot)$. For a positive definite kernel $K$, if $\phi_{K}$ is injective then MMD is given by
\begin{align}
    \text{MMD}(P, Q) = \sqrt{\kappa(P, P)+\kappa(Q,Q)-2\kappa(P,Q)}
\end{align}
where $\kappa(P,Q) = \frac{1}{\vert P \vert }\frac{1}{\vert Q \vert }\sum_{p \in P}\sum_{q \in Q}K(p,q)$
represents a kernel metric between two datasets.
It is possible to convert between bounds on $\ell_{\infty}$-distance of KDs and MMD. More precisely, $\text{MMD}(P, Q)\leq \epsilon$ implies $\Vert \text{KD}_{P} - \text{KD}_{Q} \Vert_{\infty} \leq\epsilon$, and also $\Vert \text{KD}_{P} - \text{KD}_{Q} \Vert_{\infty}\leq \epsilon$  implies $\text{MMD}(P, Q) \leq \sqrt{2\epsilon}$. 
Further details can be found in \cite{DBLP:journals/dcg/PhillipsT20} and \cite{DBLP:conf/aistats/HarderAP21}. We present theoretical analysis on utility guarantees in terms of bounds on KD distance, whilst in experiments we rely on MMD due to the ease of computation.

\subsubsection{Differential Privacy} \label{sect: intro DP}
Differential privacy (DP) \citep{dwork2006calibrating} has become a de facto standard to quantify 
privacy leakage. 
It provides theoretical guarantees that potential adversary with the knowledge of the output is not able to distinguish whether a particular individual was present in the input dataset. 
\begin{defn}\citep{dwork2014algorithmic}
A randomized mechanism $\mathcal{M}:\mathcal{X}^n \rightarrow \mathcal{Y}$ is $\epsilon$-differentially private if for any two datasets $\mathcal{D}, \mathcal{D'}\in \mathcal {X}^n$ that differ in only one entry,
we have
\begin{align}
\forall \mathcal{C}\subseteq \mathcal {Y},~\quad \mathbb{P}(\mathcal{M} (\mathcal{D}) \in \mathcal{C}) \leq e^{\epsilon} \mathbb{P}(\mathcal{M}(\mathcal{D'}) \in \mathcal{C}).
\end{align}
\end{defn}
Standard way to achieve $\epsilon$-DP is to employ Laplace mechanism, i.e., add Laplace noise to the output. More precisely, for a function $f$ computed on sensitive data $\mathcal{D}$, we introduce
$
\mathcal{M}_{Lap}(\mathcal{D}, f(.), \epsilon) = f(\mathcal{D} ) + \text{Lap}(0, \Delta_f/\epsilon),
$
where $\Delta_f = \max_{\mathcal{D} ,\mathcal{D'}}  \|f(\mathcal{D}) - f( \mathcal{D'} )\|_1$  is the $\ell_1$ sensitivity of $f$ with respect to change of a single entry in the dataset ($\mathcal{D},\mathcal{D'}\in \mathcal{ X}^n$ are two neighboring datasets, i.e.,  that they differ in only one entry) and $\text{Lap}$ denotes Laplace distribution parametrized by the mean and scale. 
\textbf{Post-processing property of $\epsilon$-DP} guarantees that composition of any data independent function with the output of $\epsilon$-DP mechanism is also $\epsilon$-DP, i.e., it does not incur additional privacy leaks \citep{dwork2014algorithmic}. This means that differentially private synthetic data can be safely used for downstream tasks. \textbf{Composition of $\epsilon$-DP} guarantees that combination of $\epsilon_{1}$-DP algorithm $\mathcal{M}_{1}$ and $\epsilon_{2}$-DP algorithm $\mathcal{M}_{2}$ defined by $\mathcal{M}_{1,2} = (\mathcal{M}_{1}, \mathcal{M}_{2})$ is $(\epsilon_{1}+\epsilon_{2})$-DP \citep{dwork2014algorithmic}.

\section{Problem formulation}

We are given a multidimensional numerical dataset $P=\{p_1,p_2,\ldots,p_n\}$ of $n$ records in $\mathbb{R}^{d}$. Our task is to design a differentially private algorithm that outputs (possibly weighted) dataset $Q = \{ (q_{1}, w_{1}), \dots, (q_{m}, w_{m})\}$ where $q_{i}\in \mathbb{R}^{d}$, $w_{i}\in \mathbb{R^{+}}$, such that for any $x\in \mathbb{R}^d$
    \begin{align*}
        \KD_P^K(x) \approx \KD_Q^K(x),
    \end{align*}
where $K$ is some positive definite kernel.
The closeness of $\KD_P$ and $\KD_Q$ in $\ell_\infty$-distance implies that relying on $Q$ instead of $P$ leads consistent estimation of population statistics of original dataset $P$ (see \cite{DBLP:conf/icml/BalogTS18} for discussion) i.e., synthetic dataset $Q$ faithfully represents the original $P$. In other words, both $\ell_\infty$-distance of KDs and MMD represent good utility measures when evaluating quality of synthetic datasets. In the rest of the paper, for simplicity of presentation we focus on Gaussian kernel $K(x,p)=e^{-\frac{\Vert x- p\Vert^{2}_{2}}{2\sigma^2}}$. It is however straightforward to adapt our analysis to a wider class of kernels.

\section{Our Algorithms}
We propose two algorithms for synthetic data generation in the next sections: (i) Data independent (ii) Data dependent.
\subsection{Data independent}
In this section, we present data independent algorithm for synthetic dataset release with DP guarantees. Inspired by the widely used idea of space partitioning, we want to partition the space into a number of bins, e.g., $J$ cubes of width $w$. Then we count the number of points inside each bin  and present these counts on a $J$-dimensional vector. 
Any single data point can affect this vector at most by a constant in terms of $\ell_1$ distance. In other words, it has a bounded $\ell_1$ sensitivity with respect to any two neighbouring datasets (see Section \ref{sect: intro DP}). Thus, we can employ Laplace mechanism in order to achieve $\epsilon$-DP. Bins with the noisy count below input threshold $t$ will be removed i.e., filtered out. The algorithm outputs the dataset consisting of centers of the bins that survived filtering step and the corresponding noisy point counts. See Algorithm~\ref{alg:rounding-new}.

\begin{algorithm}
	\caption{Data independent binning}  
	\label{alg:rounding-new} 
	\begin{algorithmic}[1]

		\Procedure{$\textsc{DataIndependent}(P,\epsilon , t, w)$}{}\\ \Comment{$P$ is the original dataset, $\epsilon$ is privacy budget}\\
  \Comment{$t$ is filtering threshold, $w$ is bin width}
		\State $R \gets $ the edge length of the axes aligned hypercube that encompasses $P$
		\State Apply binning using bins of width $w$ 
		\State $J \gets \left({R}/{w}\right)^d$ \Comment{Number of bins}
		\State $\mathbf{v} \in \mathbb{R}^J \gets$ vector of point counts bin by bin \label{line:v}
		\State $c_1, c_2,\ldots, c_J \gets$ centers of bins  
         \label{line:c}
		\State $\wt{\mathbf{v}} \gets \mathbf{v} + Lap(\frac{2}{\epsilon}I_{J\times J})$ \label{line:laplace-noise} \Comment{Noisy point counts}
		\State For any $i \in [J]$, if $\wt{\mathbf{v}}_i < t$, then $\wt{\mathbf{v}}_i \gets 0$\label{line:filtering} \\
		\Comment{\textbf{Filtering step:} Removing bins below $t$}\label{line:t}
		\State Output $Q:=\left\{(c_i,\wt{\mathbf{v}}_i) \text{ for $i \in [J]$ if $\wt{\mathbf{v}}_i>0$} \right\}$ 
    \label{line:output}
		\EndProcedure
	\end{algorithmic}
\end{algorithm}

First we prove that the output of Algorithm~\ref{alg:rounding-new} is differentially private. Then, we analyze the performance of Algorithm~\ref{alg:rounding-new} when $t = 0$ in terms of the worst-case utility-privacy trade-off, i.e. the case of a general input dataset where we do not impose any assumptions on its distribution. Finally, we present the utility-privacy trade-off of Algorithm~\ref{alg:rounding-new} for the special case of input data coming from a mixture of Gaussians with a positive filtering threshold. Theorems \ref{thm:general-1} and \ref{thm:general-2} give clues on how to set up width $w$ and threshold $t$ in order to achieve better utility.

\subsubsection{Differential privacy}
\begin{thm}[DP]\label{lem:DP-simple-alg}
Output of Algorithm~\ref{alg:rounding-new} is $\epsilon$-DP.
\end{thm}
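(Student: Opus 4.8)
The plan is to isolate the single data-dependent computation in Algorithm~\ref{alg:rounding-new}, recognize it as an instance of the Laplace mechanism, and then invoke the post-processing property for everything downstream. The only place the sensitive dataset $P$ enters a computation that feeds the randomized output is the count vector $\mathbf{v}$ on line~\ref{line:v}; the partition itself --- the number of bins $J$, the bin geometry, and the centers $c_1,\dots,c_J$ on line~\ref{line:c} --- is a function of $R$ and $w$ only. I would treat the bounding scale $R$ (equivalently, an a priori bound on the domain of the records) as public, so that $J$ and the $c_i$ are data-independent; if instead one insists on computing $R$ from $P$, it must be released under its own privacy budget and the claim then holds with an adjusted $\epsilon$ by composition, a caveat I would state explicitly.

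The first substantive step is to bound the $\ell_1$-sensitivity of $\mathbf{v}$. Let $P,P'$ be neighboring, i.e.\ they differ in exactly one record, say $p$ replaced by $p'$. Passing from $P$ to $P'$ decrements the coordinate of $\mathbf{v}$ indexed by the bin containing $p$ by one and increments the coordinate indexed by the bin containing $p'$ by one (these coincide in the degenerate case, giving no change), so $\| \mathbf{v}(P) - \mathbf{v}(P') \|_1 \le 2$ and hence $\Delta_{\mathbf{v}} \le 2$. Consequently, line~\ref{line:laplace-noise} computes $\widetilde{\mathbf{v}} = \mathbf{v} + \mathrm{Lap}(0,\tfrac{2}{\epsilon} I_{J\times J})$, which is exactly $\mathcal{M}_{Lap}(P, \mathbf{v}(\cdot), \epsilon)$ for this sensitivity bound (the Laplace coordinates are independent, so the scalar argument applies coordinatewise and the $\ell_1$ sensitivity is the relevant quantity); therefore $\widetilde{\mathbf{v}}$ is $\epsilon$-DP.

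It then remains to observe that the filtering on line~\ref{line:filtering} (zeroing coordinates below $t$) and the assembly of $Q$ on line~\ref{line:output} (pairing the surviving coordinates of $\widetilde{\mathbf{v}}$ with the fixed centers $c_i$ and discarding zeros) are deterministic functions of $\widetilde{\mathbf{v}}$ that never re-access $P$; by the post-processing property of $\epsilon$-DP, $Q$ is $\epsilon$-DP, which is the claim. I expect the main obstacle --- really the only place where the argument can go wrong --- to be the status of $R$: if the bounding box is genuinely a function of the data then the released centers leak information and the bare Laplace-plus-post-processing argument is insufficient, so the clean resolution (standard for histogram-type DP mechanisms) is to posit a public domain bound so that $\mathbf{v}$ is the sole sensitive object. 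A secondary point to get right is that the neighboring relation here is ``replace one entry,'' which is precisely why $\Delta_{\mathbf{v}}=2$ rather than $1$ and hence why the noise scale in line~\ref{line:laplace-noise} is $2/\epsilon$ rather than $1/\epsilon$.
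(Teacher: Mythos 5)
Your proof is correct and follows essentially the same route as the paper: bound the $\ell_1$-sensitivity of the count vector by $2$ under the replace-one-entry neighboring relation, apply the Laplace mechanism with scale $2/\epsilon$, and invoke post-processing for the filtering and output assembly. Your caveat about $R$ is exactly what the paper handles via its footnote assuming any two neighboring datasets lie in the same axes-aligned hypercube, so the bin centers are data-independent.
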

\begin{proof}
Note that bins centers are picked independently of the input data.\footnote{Here we impose a mild assumption that any two neighboring datasets will be in the same axes aligned hypercube.} 
For $J$-dimensional point counts (line~\ref{line:v} of Algorithm~\ref{alg:rounding-new}) $v$ and $\hat{v}$ corresponding to two datasets $P$ and $\hat{P}$ that differ in exactly one element, we have
$||v-\hat{v}||_1\le 2 $.
 Thus, the $\ell_1$-sensitivity is at most $2$, and we can get $\epsilon$-DP by adding $\text{Lap}(2/\epsilon)$ noise
 to each entry of the $J$ dimensional embedding (line \ref{line:laplace-noise}). Post processing feature gives that removing bins with noisy counts less than threshold $t$  (line \ref{line:filtering}) does not yield additional privacy leaks.
\end{proof}
\subsubsection{Worst-case utility-privacy trade-off ($t = 0$ case)}
We now analyze the worst case utility of Algorithm~\ref{alg:rounding-new}, i.e., the general case when we do not impose any assumptions on the distribution of the input dataset $P$.

\begin{thm}[Worst-case trade-off of Algorithm~\ref{alg:rounding-new}]\label{thm:general-1}
Suppose that dataset $P$ lies on an axes aligned hypercube of edge length $R$ in $\mathbb{R}^d$. Let $\delta>0$ be such that
$\left(\frac{R}{w}\right)^d<\frac{\epsilon n}{4\log \frac{1}{\delta}}$.
Then Algorithm~\ref{alg:rounding-new} outputs $\epsilon$-DP dataset $Q$ such that
\begin{align*}
   \sup_{x\in\mathbb{R}^d} |\KD_Q(x)-\KD_{P}(x)|\le \frac{2}{\frac{\epsilon n}{4J\log \frac{1}{\delta}} - 1} + \frac{w}{2}\sqrt{\frac{d}{e}},
\end{align*}
with probability at least $1-\delta$, where $J=\left(\frac{R}{w}\right)^d$. 
\end{thm}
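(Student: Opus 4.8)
The plan is to split the error $\sup_x |\KD_Q(x) - \KD_P(x)|$ into two sources: (i) the noise added by the Laplace mechanism together with the filtering step, and (ii) the spatial discretization error incurred by replacing each point of $P$ by the center of its bin. Writing $\KD_P$ via the intermediate quantity $\KD_{\bar P}$, where $\bar P$ is the dataset obtained by snapping every $p \in P$ to its bin center, the triangle inequality gives
\[
\sup_x |\KD_Q(x) - \KD_P(x)| \le \sup_x |\KD_Q(x) - \KD_{\bar P}(x)| + \sup_x |\KD_{\bar P}(x) - \KD_P(x)|.
\]
The second term is the ``$\tfrac{w}{2}\sqrt{d/e}$'' piece; the first is the ``noise'' piece.

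For the discretization term, I would bound $|K(x,p) - K(x,\bar p)|$ for the Gaussian kernel by a Lipschitz-type estimate. Since moving $p$ within its bin moves it by at most $\tfrac{w}{2}\sqrt{d}$ in Euclidean norm, and the Gaussian kernel $K(x,\cdot) = e^{-\|x-\cdot\|^2/(2\sigma^2)}$ has gradient bounded in norm by $\sup_{r\ge 0}\tfrac{r}{\sigma^2}e^{-r^2/(2\sigma^2)} = \tfrac{1}{\sigma\sqrt{e}}$, we get $|K(x,p)-K(x,\bar p)| \le \tfrac{w\sqrt d}{2\sigma\sqrt e}$ pointwise, hence the same bound on $|\KD_{\bar P}(x) - \KD_P(x)|$ after averaging. (I would double-check the normalization constants against the kernel the paper actually uses; the stated bound has no $\sigma$, so presumably $\sigma$ is absorbed or the bin width is measured in units of $\sigma$, but the mechanism of the argument is unchanged.)

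For the noise term, note that $\KD_Q(x) = \tfrac{1}{\|\wt{\mathbf v}\|_1}\sum_i \wt{\mathbf v}_i K(x, c_i)$ while $\KD_{\bar P}(x) = \tfrac{1}{n}\sum_i \mathbf v_i K(x,c_i)$ (here, since $t=0$, no filtering occurs, so $\wt{\mathbf v} = \mathbf v + \text{Lap}(2/\epsilon)^{\otimes J}$ with possibly negative entries thresholded — but with $t=0$ the relevant normalization is $\|\wt{\mathbf v}\|_1$ or $\sum_i \wt{\mathbf v}_i$, a detail I would handle carefully). Since $0 \le K(x,c_i) \le 1$, for any fixed $x$,
\[
\left|\sum_i (\wt{\mathbf v}_i - \mathbf v_i) K(x,c_i)\right| \le \sum_i |\wt{\mathbf v}_i - \mathbf v_i| = \|\text{Lap}(2/\epsilon)^{\otimes J}\|_1.
\]
A concentration bound on a sum of $J$ i.i.d.\ $\text{Lap}(2/\epsilon)$ variables (e.g.\ via a Chernoff/MGF argument, or the standard sub-exponential tail) shows that with probability $\ge 1-\delta$ this $\ell_1$ norm is at most something like $\tfrac{2J}{\epsilon} + \tfrac{c}{\epsilon}\sqrt{J \log(1/\delta)}$, and under the hypothesis $J < \tfrac{\epsilon n}{4\log(1/\delta)}$ this is comfortably bounded by a constant times $\tfrac{n}{\cdots}$ — the precise form $\tfrac{4J\log(1/\delta)}{\epsilon}$ appearing in the theorem suggests they use a clean (possibly slightly lossy) bound of the form $\|\text{noise}\|_1 \le \tfrac{4J\log(1/\delta)}{\epsilon}$ w.p.\ $1-\delta$. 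Then, since this bound on the numerator perturbation also controls the denominator perturbation $|\|\wt{\mathbf v}\|_1 - n|$, a short algebraic manipulation of the form
\[
\left|\frac{\sum_i \wt{\mathbf v}_i a_i}{\|\wt{\mathbf v}\|_1} - \frac{\sum_i \mathbf v_i a_i}{n}\right| \le \frac{2\,\|\wt{\mathbf v} - \mathbf v\|_1}{n - \|\wt{\mathbf v}-\mathbf v\|_1}
\]
for $a_i \in [0,1]$ yields, after plugging in $\|\wt{\mathbf v}-\mathbf v\|_1 \le \tfrac{4J\log(1/\delta)}{\epsilon}$ and factoring out $n$, exactly the first term $\tfrac{2}{\frac{\epsilon n}{4J\log(1/\delta)} - 1}$.

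The main obstacle I anticipate is the bookkeeping around the normalization (dividing by the noisy total count $\|\wt{\mathbf v}\|_1$ rather than $n$): getting the ratio bound above with the right constants, and ensuring the denominator stays positive, requires the hypothesis $J < \tfrac{\epsilon n}{4\log(1/\delta)}$ precisely so that $\|\wt{\mathbf v}-\mathbf v\|_1 < n$. The concentration step itself is routine (sub-exponential tail for sums of Laplaces), and the discretization step is a one-line Lipschitz estimate; the delicate part is combining the additive numerator/denominator errors into a single clean multiplicative-looking bound and tracking that the ``$-1$'' in the denominator is exactly what the ratio manipulation produces. I would also want to be careful that the supremum over $x \in \mathbb R^d$ passes through both bounds uniformly — which it does, since the noise bound is on $\|\wt{\mathbf v}-\mathbf v\|_1$ independent of $x$, and the discretization bound is pointwise uniform in $x$ — so no union bound over $x$ is needed.
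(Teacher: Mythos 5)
Your proposal is correct and follows essentially the same route as the paper: the same triangle-inequality split into a rounding-to-centers term (bounded via the Lipschitz estimate $\sup_r r e^{-r^2/2}=1/\sqrt{e}$, exactly the paper's Lemma~\ref{lem:rounding}) and a noise term handled by the same ratio manipulation, yielding $\frac{2E}{n-E}$ with $E=\frac{4J\log(1/\delta)}{\epsilon}$. The only cosmetic difference is that the paper obtains the $\ell_1$ noise bound by a per-coordinate Laplace tail bound of $\frac{4\log(1/\delta)}{\epsilon}$ followed by a union bound over the $J$ bins, rather than a sub-exponential concentration bound on the sum, and like you it glosses over the removal of negatively weighted bins at $t=0$.
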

\begin{proof}[Proof sketch] For a detailed proof see Section B 
of Supplementary Material. Theorem~\ref{lem:DP-simple-alg} guarantees that the output of Algorithm~\ref{alg:rounding-new} is $\epsilon$-DP. Let $P':=\{(c_1,v_1),\ldots,(c_J,v_J)\}$, where $v$, $c$ are defined as in line~\ref{line:v} and \ref{line:c} of Algorithm~\ref{alg:rounding-new}, respectively.
We have the following sources of error.
\begin{itemize}
    \item Rounding to the bin centers. We prove that $\sup_{x\in \mathbb{R}^d}|\KD_{P'}(x) -\KD_{P}(x)|\le \frac{w\sqrt{d}}{2\sqrt{e}}$.
    \item Adding noise and removing negatively weighted bins. We prove $\sup_{x\in\mathbb{R}^d}|\KD_Q(x)-\KD_{P'}(x)| \le \frac{8J\log \frac{1}{\delta}}{\epsilon n - 4J\log \frac{1}{\delta}}$as a consequence of upper bound on $J$.  
\end{itemize}
Triangle inequality over the sources of error completes the proof.
\end{proof}

For $t=0$, Algorithm~\ref{alg:rounding-new} suffers some obvious shortcomings. If data is well spread, there will be many bins with small number of points. After Laplace noise addition, corresponding point counts would often be negative and consequently the bins would be removed. On the other hand, with probability $0.5$ empty bins would exhibit positive noisy point counts and would thus falsely be represented in the output. Both aspects hurt utility. A natural way to overcome these shortcomings is to increase the cut-off threshold. 
This approach is particularly well suited if we know that (most) non-empty bins are densely populated. 

\subsubsection{Beyond worst-case utility-privacy trade-off ($t>0$ case)}
For appropriate non-zero threshold, we present utility-privacy trade-off in the case of a general input, i.e. with no assumptions on input distribution.
\begin{defn}
For $t>0$, bins with noiseless count less (greater than or equal) than $t$ will be called $t$-light ($t$-heavy). 
\end{defn}

\begin{thm}[Beyond worst-case trade-off of Algorithm~\ref{alg:rounding-new}]\label{thm:general-2}
Suppose that input dataset $P$ lies on an axes aligned hypercube of edge length $R$ in $\mathbb{R}^d$. Assume that $\delta>0$ is such that $\left(\frac{R}{w}\right)^d \le \frac{1}{\delta}$. For  $t = \frac{8}{\epsilon}\log ({1}/{\delta})$,
let $M$ and $m$ be 
the total number of $t/2$-heavy bins and 
the total number of points in $3t/2$-light bins, respectively.
Then, Algorithm~\ref{alg:rounding-new} outputs $\epsilon$-DP dataset $Q$ such that 
\begin{align*}
\sup_{x\in\mathbb{R}^{d}}    |\KD_{Q}(x)-\KD_{P}(x)|\le &\frac{\epsilon m + 8M\log \frac{1}{\delta}}{\epsilon n - \epsilon m - 4M\log \frac{1}{\delta}} +\frac{m}{n}\\
    &+ \frac{w\sqrt{d}}{2\sqrt{e}},
\end{align*}
with probability at least $1-\delta$.
\end{thm}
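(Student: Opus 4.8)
The plan is to follow the same three-source-of-error decomposition used in the proof sketch of Theorem~\ref{thm:general-1}, but now carefully track how the non-zero threshold $t = \frac{8}{\epsilon}\log(1/\delta)$ interacts with the $t/2$-heavy and $3t/2$-light bins. Write $P' := \{(c_1,v_1),\ldots,(c_J,v_J)\}$ for the rounded (noiseless) dataset as before, and let $Q$ be the algorithm's output. By the triangle inequality it suffices to bound $\sup_x |\KD_{P}(x) - \KD_{P'}(x)|$ and $\sup_x |\KD_{P'}(x) - \KD_{Q}(x)|$ separately. The first term is handled exactly as in Theorem~\ref{thm:general-1}: rounding each point to its bin center moves it by at most $\frac{w\sqrt d}{2}$, and since the Gaussian kernel (with $\sigma=1$ in the normalization implicit here) has Lipschitz constant $1/\sqrt e$, this contributes the $\frac{w\sqrt d}{2\sqrt e}$ term. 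The assumption $J = (R/w)^d \le 1/\delta$ is what we will need to make the union bound over bins succeed at confidence $1-\delta$.

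The heart of the argument is the second term, $\sup_x|\KD_{P'}(x) - \KD_{Q}(x)|$, and here the key is a high-probability event controlling all $J$ Laplace draws simultaneously. Since $\text{Lap}(2/\epsilon)$ has tails $\mathbb{P}(|Z| > s) = e^{-\epsilon s/2}$, choosing $s = \frac{2}{\epsilon}\log(J/\delta) \le \frac{2}{\epsilon}\log(1/\delta^2) = \frac{4}{\epsilon}\log(1/\delta) = t/2$ gives, by a union bound, that with probability at least $1-\delta$ every bin's noise satisfies $|\wt{v}_i - v_i| \le t/2$. On this event: (i) every $t/2$-heavy bin has noisy count $\ge t/2 - t/2$... — more care is needed, so instead note that a bin with $v_i \ge 3t/2$ has $\wt v_i \ge t$ and survives filtering; a bin with $v_i < t/2$ has $\wt v_i < t$ and is filtered out; and bins with $t/2 \le v_i < 3t/2$ may or may not survive. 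The points lost are therefore: all points in $3t/2$-light bins that get filtered (at most $m$ of them, contributing the $\frac{m}{n}$ term directly), and, for surviving bins, the discrepancy $|\wt v_i - v_i| \le t/2$ per bin. Summing the latter over the at-most-$M$ surviving bins (each such bin is $t/2$-heavy on this event) and normalizing — accounting for the fact that $\KD$ uses the empirical measure so the weights must be renormalized by the surviving total mass, which is at least $n - m - M\cdot(t/2)$ in the denominator — produces, after bounding $K \le 1$ pointwise, the first displayed term $\frac{\epsilon m + 8M\log\frac1\delta}{\epsilon n - \epsilon m - 4M\log\frac1\delta}$ once one substitutes $t = \frac{8}{\epsilon}\log(1/\delta)$ (so $Mt/2 = \frac{4M}{\epsilon}\log\frac1\delta$ and $Mt = \frac{8M}{\epsilon}\log\frac1\delta$).

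Concretely, the steps in order are: (1) invoke Theorem~\ref{lem:DP-simple-alg} for the $\epsilon$-DP claim; (2) establish the rounding bound $\sup_x|\KD_P - \KD_{P'}| \le \frac{w\sqrt d}{2\sqrt e}$ via the kernel's Lipschitz constant; (3) define the good event $E = \{\forall i,\ |\wt v_i - v_i| \le t/2\}$ and show $\mathbb{P}(E) \ge 1-\delta$ using the Laplace tail bound, a union bound over $J$ bins, and $J \le 1/\delta$; (4) on $E$, classify bins into surely-filtered ($v_i < t/2$), surely-surviving ($v_i \ge 3t/2$), and ambiguous, and bound the total unnormalized $\ell_1$ change between the point-mass vectors of $P'$ and $Q$ by $m + Mt/2$ (lost light-bin points plus per-surviving-bin noise); (5) convert this $\ell_1$ bound on counts into a sup-norm bound on the KD difference — this is the one genuinely fiddly step, since $\KD_{P'}$ and $\KD_Q$ are normalized by different total masses ($n$ versus the noisy surviving total $\ge n - m - Mt/2$), so one writes $\KD_Q(x) - \KD_{P'}(x)$ as a sum of a "mass transported/lost" term and a "renormalization" term, bounds each using $0 \le K \le 1$, and collects; (6) substitute $t = \frac{8}{\epsilon}\log(1/\delta)$ and combine with step (2) via the triangle inequality.

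The main obstacle I anticipate is step (5): getting the normalization bookkeeping exactly right so that the denominator comes out as $\epsilon n - \epsilon m - 4M\log\frac1\delta$ rather than something looser. The subtlety is that the ambiguous bins ($t/2 \le v_i < 3t/2$) could go either way, so one must bound the worst case over their survival pattern, and the $3t/2$-light count $m$ includes points in ambiguous bins — this is presumably exactly why the theorem is phrased with the asymmetric thresholds $t/2$-heavy and $3t/2$-light, giving a $t/2$-wide buffer on each side that the noise cannot cross on event $E$. I would double-check that the quantity being normalized away in the denominator is an \emph{upper} bound on lost mass (so it appears with a minus sign correctly) and that no surviving bin contributes more than $t/2$ to the numerator's noise term, which holds precisely because survival plus $E$ forces $v_i \ge t - t/2 = t/2$, i.e.\ $t/2$-heavy, matching the definition of $M$.
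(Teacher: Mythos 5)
Your proposal is correct and follows essentially the same route as the paper's proof in Section C of the supplement: the same decomposition into a rounding term and a noise/filtering term, the same high-probability event that every Laplace draw stays within $t/2$ (union-bounded over $J\le 1/\delta$ bins so that surviving bins are $t/2$-heavy and filtered bins are $3t/2$-light), and the same renormalization bookkeeping that splits $\KD_Q-\KD_{P'}$ into a lost/perturbed-mass term and a $\bigl|\tfrac{n}{|\wt{\mathbf v}|}-1\bigr|$ term, which is exactly where the second $Mt/2$ in the numerator comes from. No gaps.
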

\begin{proof}[Proof sketch]
For detailes see Section C 
of Supplementary Material.
Theorem~\ref{lem:DP-simple-alg} guarantees that the output of Algorithm~\ref{alg:rounding-new} is $\epsilon$-DP. Let $P':=\{(c_1,v_1),\ldots,(c_J,v_J)\}$, where $v$, $c$ are defined as in lines ~\ref{line:v} and \ref{line:c} of Algorithm~\ref{alg:rounding-new}. We have the following sources of error.
\begin{itemize}
    \item Rounding to the bins centers. We prove that
$\label{eq:tripart1-alg-filtering}
 \sup_{x\in\mathbb{R}^{d}}|\KD_{P}(x)-\KD_{P'}(x)|\le \frac{w\sqrt{d}}{2\sqrt{e}}$.
\item With probability $1-{\delta}/{2}$, all bins that are removed in filtering step are $3t/2$-light. Thus, there are at most $m$ points that are filtered out which contributes to
$\sup_{x\in\mathbb{R}^d}|\KD_{P'}(x)-\KD_{Q}(x)|$  by at most $\frac{m}{n}$. 
\item With probability $1-{\delta}/{2}$, all bins that survive filtering step are $t/2$-heavy. Before noise addition step, number of points in these bins is at least $n-m$ and at most $n$. As there are at most $M$ such bins, the total noisy count in these bins is at least $n-m-\frac{4M}{\epsilon} \log ({\frac{1}{\delta}})$ and at most $n+\frac{4M}{\epsilon} \log ({\frac{1}{\delta}})$. This yields additional $\frac{\epsilon m + 8M\log (1/\delta)}{\epsilon n - \epsilon m - 4M\log (1/\delta)}$ term for the upper bound. 
\end{itemize}
Union bound and triangle inequality
over the sources of error complete the proof.
\end{proof}

The threshold $t$ in Theorem \ref{thm:general-2} depends on privacy level $\epsilon$, and so both $M$ and $m$ depend on $\epsilon$. With no assumptions on distribution of $P$, it is not possible to provide meaningful bounds on $M$ and $m$. We next study a special case. 

\subsubsection{Mixture of Gaussians input data}

We analyze performance of Algorithm \ref{alg:rounding-new} for the special case of input data from multivariate Gaussian distribution. More precisely, we consider dataset $P$ of $n$ records in $\mathbb{R}^d$ with Gaussian distribution $\mathcal{N}(\mathbf{c},\sigma^2I)$, $\mathbf{c}\in\mathbb{R}^d$ i.e., from density $f(X=x)=\frac{1}{(2\pi\sigma^2)^{d/2}}e^{-\frac{||x-\mathbf{c}||_2^2}{2\sigma^2}}$. This straightforwardly generalizes to a mixture of multivariate Gaussians.

\begin{thm}[Gaussian trade-off using Algorithm~\ref{alg:rounding-new}]\label{thm one gaussian tradeoff}
Suppose that input dataset $P$ lies on an axes aligned hypercube of edge length $R$ in $\mathbb{R}^d$. If $n\geq \left(\frac{w}{\sigma \sqrt{2\pi}}\right)^d$, for $\delta>0$ such  that $\frac{n}{(\log n)^{d/2}}\geq 16\cdot \log \frac{1}{\delta}\cdot (\frac{12\sigma}{w})^2$ and threshold $t = \frac{8}{\epsilon}\log ({1}/{\delta})$, Algorithm~\ref{alg:rounding-new} outputs $\epsilon$-DP dataset $Q$ s.t.
\begin{align*}
\sup_{x\in\mathbb{R}^d}    |\KD_{Q}(x)-\KD_{P}(x)| &\le \frac{8(\log \frac{1}{\delta})^{1/3}\cdot e^{-\frac{d}{3}(\log \frac{w}{\sigma \sqrt{2 \pi}}-2)}}{(\epsilon n)^{1/3}}\\
    &+\frac{16 \log \frac{1}{\delta}\cdot(\frac{12\sigma}{w})^d (\log n)^{d/2}}{\epsilon n}\\
    &+ \frac{w\sqrt{d}}{2\sqrt{e}},
\end{align*}
with probability at least $1-\delta$.
\end{thm}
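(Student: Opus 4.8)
\medskip
\noindent\textbf{Proof proposal.} The plan is to obtain this statement as a corollary of Theorem~\ref{thm:general-2} by estimating, for data $P$ drawn i.i.d.\ from $\mathcal{N}(\mathbf{c},\sigma^{2}I)$, the two distribution-dependent quantities appearing there: $M$, the number of $t/2$-heavy bins, and $m$, the number of sample points contained in $3t/2$-light bins. The $\epsilon$-DP guarantee is immediate from Theorem~\ref{lem:DP-simple-alg}, and the stated extension to a mixture of Gaussians follows by applying the single-component estimates below around each component mean and summing. Throughout, $t=\frac{8}{\epsilon}\log(1/\delta)$, the rounding error $\frac{w\sqrt{d}}{2\sqrt{e}}$ is carried over verbatim from Theorem~\ref{thm:general-2}, and the hypothesis $(R/w)^{d}\le 1/\delta$ is used to control the cost of union bounds over all $J=(R/w)^{d}$ bins.

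First I would bound $M$. For a bin $B$ whose closest point to $\mathbf{c}$ is at distance $\rho$, upper bounding the Gaussian density on $B$ by its value at that closest point and integrating over the cube gives the mass bound $\mu(B)\le \left(\frac{w}{\sigma\sqrt{2\pi}}\right)^{d}e^{-\rho^{2}/(2\sigma^{2})}$; in particular a bin with mass at least $\gamma$ must lie within distance $\sigma\sqrt{2\log\big((w/\sigma\sqrt{2\pi})^{d}/\gamma\big)}$ of $\mathbf{c}$ (here the hypothesis $n\ge(w/\sigma\sqrt{2\pi})^{d}$ keeps the prefactor in check). A multiplicative Chernoff bound together with a union bound over the $\le 1/\delta$ bins (this is where the numerical constant in front of $M$ gets absorbed) shows that, with probability at least $1-\delta/2$, every $t/2$-heavy bin has true mass at least $\Omega\!\left(\frac{\log(1/\delta)}{\epsilon n}\right)$. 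Combining the two facts, all $t/2$-heavy bins lie in a ball $B(\mathbf{c},\rho_{0})$ with $\rho_{0}=O\!\left(\sigma\sqrt{\log(\epsilon n)+d\log^{+}(w/\sigma\sqrt{2\pi})}\right)$, so $M$ is at most the number of width-$w$ bins meeting that ball, namely $(2\rho_{0}/w+1)^{d}$; under the standing hypotheses (which keep the $d\log^{+}(w/\sigma)$ term from dominating $\log n$) this is at most $\left(\frac{12\sigma}{w}\right)^{d}(\log n)^{d/2}$, which is what turns the $\frac{8M\log(1/\delta)}{\epsilon n-\cdots}$ contribution of Theorem~\ref{thm:general-2} into the second summand of the claim.

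Next I would bound $m$. Fix a cutoff radius $r$, to be optimized. A $3t/2$-light bin meeting $B(\mathbf{c},r)$ contributes fewer than $3t/2=\frac{12}{\epsilon}\log(1/\delta)$ points, and there are at most $(2r/w+1)^{d}$ such bins; a $3t/2$-light bin disjoint from $B(\mathbf{c},r)$ has all of its points among the sample points at distance $>r$ from $\mathbf{c}$, whose number is, with probability at least $1-\delta/2$, at most $n\cdot\mathbb{P}_{X\sim\mathcal{N}(\mathbf{c},\sigma^{2}I)}\!\left(\|X-\mathbf{c}\|>r\right)+O(\log(1/\delta))$ by a $\chi^{2}_{d}$ tail bound followed by a Chernoff concentration step, and the per-bin mass bound above turns this tail into $\left(\frac{w}{\sigma\sqrt{2\pi}}\right)^{d}e^{-c r^{2}/\sigma^{2}}$ for a universal $c>0$. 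Thus
\[
 m \;\le\; \frac{12}{\epsilon}\log\!\frac{1}{\delta}\,(2r/w+1)^{d}\;+\;n\left(\frac{w}{\sigma\sqrt{2\pi}}\right)^{d}e^{-c r^{2}/\sigma^{2}}\;+\;O\!\left(\log\frac{1}{\delta}\right).
\]
Choosing $r$ of order $\sigma\sqrt{\log(\epsilon n)}$ so that the middle term becomes $n\cdot\frac{8(\log(1/\delta))^{1/3}}{(\epsilon n)^{1/3}}e^{-\frac{d}{3}(\log\frac{w}{\sigma\sqrt{2\pi}}-2)}$ forces the first term to be of order $\frac{\log(1/\delta)}{\epsilon}\left(\frac{c'\sigma}{w}\right)^{d}(\log n)^{d/2}$; dividing by $n$, these are precisely the first and second summands of the stated bound, and the condition $\frac{n}{(\log n)^{d/2}}\ge 16\log\frac1\delta\,(\frac{12\sigma}{w})^{2}$ is exactly what guarantees that the optimizing $r$ lies in the regime where the $\chi^{2}_{d}$ tail estimate is valid and where the additive $O(\log(1/\delta))$ slack is lower order. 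Finally, substituting the bounds on $M$ and $m$ into Theorem~\ref{thm:general-2}, lower bounding its denominator crudely (e.g.\ by $\epsilon n/2$), and union bounding over the two $\delta/2$ failure events yields the claim.

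The main obstacle is the $m$ estimate: one needs a $\chi^{2}_{d}$ (equivalently, Gaussian-norm) tail bound clean enough that, after the choice of $r$ and the ensuing algebra, the far-bin term lands exactly on $\frac{8(\log(1/\delta))^{1/3}e^{-\frac{d}{3}(\log\frac{w}{\sigma\sqrt{2\pi}}-2)}}{(\epsilon n)^{1/3}}$ while the near-bin term lands on the $(\log n)^{d/2}$-term. The $M$ estimate (density bound, Chernoff, counting lattice cells in a ball) and the final substitution into Theorem~\ref{thm:general-2} are routine by comparison.
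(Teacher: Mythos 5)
Your proposal is correct and follows essentially the same route as the paper: specialize Theorem~\ref{thm:general-2}, bound $M$ by combining the Gaussian density upper bound on a bin's mass with a Chernoff bound and a count of width-$w$ cells meeting a ball of radius $O\bigl(\sigma\sqrt{\log n + d\log(w/\sigma\sqrt{2\pi})}\bigr)$, and bound $m$ via the $\chi^2_d$ tail (Corollary~\ref{cor:tailboundschisquared}) plus Chernoff, which is exactly where the $(\epsilon n)^{-1/3}$ and $e^{-\frac{d}{3}(\log\frac{w}{\sigma\sqrt{2\pi}}-2)}$ factors arise. The only (immaterial) difference is your near/far decomposition of $m$: the paper instead shows that with high probability every $3t/2$-light bin lies entirely beyond the cutoff radius, so $m$ is bounded purely by the number of far sample points and the $(\log n)^{d/2}$ term comes solely from $M$.
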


\begin{proof}[Proof sketch]
For a detailed proof see Section D 
of Supplementary Material. This is a special case of Theorem \ref{thm:general-2}. The Gaussian distribution assumption enables us to provide upper bounds on the number of $t/2$-heavy bins $m$, and total number of points in $3t/2$-light bins $M$, for $t=\frac{8}{\epsilon}\log(1/\delta)$. Loosely speaking, in this case majority of points lives within densly populated areas, and so there are neither too many points in the light bins nor the number of heavy bins is too large.
\end{proof}
In Section \ref{sect:experiment_theoretical_scaling} we compare empirical MMD to the bound $(O(\epsilon n)^{-1/3})$ from Theorem \ref{thm one gaussian tradeoff}. Results suggest that lower bound is smaller which is in line with results from \cite{Duchi} where the gap is of order $O(n^{-1/2})$. Their setting is different as they study local differential privacy in the context of distribution estimation (not synthetic data release). 
We also highlight recent work of  \cite{DBLP:journals/corr/abs-1805-00216} on DP learning parameters of multivariate Gaussian distribution. Their approach learns parameters and is thus not directly comparable as we output a dataset.

\subsection{Data dependent algorithm}

In  the general case, data independent approaches suffer from the curse of dimensionality. The reason is that as opposed to traditional applications of hash functions, we need to keep track of empty bins in order to treat them similarly to non empty ones, as a bin that is empty with respect to $P$ is not necessarily empty with respect to a neighbouring dataset. In high dimensions, this makes the data independent binning impractical, as there are typically many empty bins. Moreover, in densely populated areas finer grid would incur smaller error due to the rounding to the centers, and thus yield a higher utility. 

We propose differentially private algorithm based on \textit{adaptive binning} i.e., recursive partitioning of the space. Before we proceed, we introduce notion of a \textit{decision tree}. We assume arbitrary but fixed enumeration of $d$ dimensions denoted by $i$ where $i\in[d]$. The root of the tree is characterized by the initial dataset $P$, the center $c$ and the the edge $R$ of the smallest axis aligned cube that contains whole $P$, and axes $0$ to split along. If decision at the root is to proceed with recursion (we discuss decision making below), we proceed as follows. Initial recursion splits the dataset along axes $0$ and divides the corresponding edge of the cube in two equal $R/2$ parts which results in the creation of two \textit{children} nodes. Each of them is characterized by a fraction of the dataset that ended up in the corresponding part, center and the radius of the new cube that contains that fraction of the dataset, and new axes to cut along. The new axes to cut along is always previous axes $+1$ i.e., the next axes in ordering $[d]$. The recursion proceeds on the newly created nodes, subject to a positive decision on whether to recurse further. Nodes on which recursion does not proceed do not have any children and represent \textit{leaves}.

Note that due to the data dependent aspect, we need to ensure that adaptive binning is differentially private. Thus, each decision on whether to recurse or not has to be based on noisy point counts. By composition property of DP, if there are $l$ levels of data dependent decisions in the decision tree, one needs to guarantee ${\epsilon'}{/l}$-DP for each recursion, so that once binning algorithm terminates we achieve $\epsilon'$-DP. Note that the output of adaptive binning is partitioning of the space, not the synthetic dataset, and thus once the binning is done we are in the setting from the beginning of Algorithm \ref{alg:rounding-new}. That is, in order to release differentially private synthetic dataset, we have to obtain noisy versions of the point counts bin by bin and output bins that pass certain threshold. By ensuring a $\epsilon''$-DP for this part of the procedure, one will achieve $(\epsilon'+\epsilon'')$-DP guarantee for the whole algorithm by the composition property of DP. Before we formally introduce adaptive binning, we discuss some desirable settings.
\paragraph{Avoiding large bins} In high dimensions, we will frequently observe large bins where recursion stops due to small noisy point counts. This would however yield large error due to rounding to the center, since the bin has large edge lengths. Thus it would be beneficial to have the algorithm run bin splitting \textit{independently} of data for a few rounds, e.g., until it reaches a maximum edge length for all bins below some threshold $s_{1}$. After that, the algorithm would run in data dependent regime. Note that the depth of data independent part in this setting is $h=d\log_{2}({R}/{s_{1}})$, and up until that level there are no privacy leaks. 
\paragraph{Avoiding decision trees with large depth}
Privacy cost of adaptive binning is determined by the number of data dependent levels in decision tree. Thus, even if noisy point counts are large, it might be beneficial to stop the recursion once the number of data dependent levels passes certain threshold. Equivalently, we stop the recursion over a bin when its maximal edge length is below certain threshold $s_{2}$, regardless of the value of the noisy point count. In this setting, the maximum tree depth is $h'=d\log_{2}({R}/{s_{2}})$, and having in mind above discussion, number of data dependent levels is at most $h'-h=\log_{2}({s_{1}}/{s_{2}})$.

Algorithm \ref{alg:adaptive-binning}  formally describes our adaptive binning. More precisely, it identifies the root of the tree as discussed above, and passes it to Algorithm \ref{alg:recursive-binning}  (discussed below). Lines \ref{line: min along dim} - \ref{line: max along dim} identify the boundaries of the dataset along each of $d$ dimensions, line \ref{line: center along dim} identifies the center of the cube, and line \ref{line: edge cube} the edge of the cube that contains the whole dataset. Axes for next split is set to $0$, and the root node is passed to Algorithm \ref{alg:recursive-binning} .
\begin{algorithm}
	\caption{Adaptive binning}  
	\label{alg:adaptive-binning} 
	\begin{algorithmic}[1]
  \Procedure{$\textsc{Adaptive-Binning}(P,\epsilon',\tau)$}{} 
		\\
		\Comment {dataset $P$, DP budget $\epsilon'$,cut off level $\tau$}
 		\State For any $i \in [d]$, $\text{low}_i \gets \min_{p \in P}(p_i)$ 
        \label{line: min along dim}
		\State For any $i \in [d]$, $\text{high}_i \gets \max_{p \in P}(p_i)$
        \label{line: max along dim}
		\State For any $i \in [d]$, $\text{c}_i \gets \frac{\text{low}_i+\text{high}_i}{2}$ 
  \label{line: center along dim}
 		\State $w \gets \max_i(\text{high}_i-\text{low}_i)$
        \label{line: edge cube}
        \State $\text{curr-axis} \gets 0$ \Comment{\text{ axis the bin will be cut along}} 
        \State $\text{node} \gets \textsc{Node}(P, c, w, \text{curr-axis})$
        \State $\textsc{Recursive-Binning}(\text{node},\epsilon',\tau)$
		\EndProcedure
	\end{algorithmic}
\end{algorithm}
Finally, Algorithm \ref{alg:recursive-binning} implements differentially private recursive binning. It takes as an input a node, total privacy budget (until the recursion stops), threshold for the noisy point counts, and the maximum and minimum allowed edge lengths of final bins. The output of the algorithm is the set of bins. According to previous discussion, the algorithm will recurse if either noisy point count is larger than the threshold, or the bin's largest edge is too large (line \ref{line:condition}), with the exception of that if the largest edge is too small (line \ref{line:condition_bin_too_small}), recursion stops regardless of the value of the noisy point count. If recursion proceeds, the current node is split in the two and recursion proceeds on each of them.

\begin{algorithm}
	\caption{Recursive binning}  
	\label{alg:recursive-binning} 
	\begin{algorithmic}[1]
		
		\Procedure{$\textsc{Recursive-Binning}(\text{node},\epsilon',\tau, s_{1}, s_{2})$}{}\\
		\Comment{Node to recurse on, DP budget $\epsilon'$, threshold $\tau$}\\
        \Comment{max and min edge length for final bins $s_{1}$ and $s_{2}$}
		\State $P,c,w,\text{curr-axis} \gets \text{node}$
		\If {$|P| \le \tau + \textsc{Lap}(2(h'-h)/\epsilon')$ and the bin's largest edge length $\le s_1$}
            \Return \label{line:condition}\\  
           \Comment{$h'-h= d\log_2(s_{1}/s_2)$ is the max depth of data dependent part of the tree}
		\EndIf
		\If {the bin's largest edge length $< s_2$}
		\Return\ \label{line:condition_bin_too_small}
		\EndIf
		
		\State $P_{\ell}, P_r \gets \emptyset,\emptyset$
		
		\For{ $p \in P$}
		\If {$p_{curr-axis} \le c_{\text{curr-axis}}$}
		$P_{\ell}\gets P_{\ell} \cup \{ p \}$
		\Else
		 \ $ P_{r}\gets P_{r} \cup \{ p \}$
		\EndIf 
		\EndFor
		\State $c_\ell , c_r \gets c , c$
		\State $c_{\ell,{curr-axis}} \gets c_{curr-axis} - w/4$ 
		\State $c_{r,{curr-axis}} \gets c_{curr-axis} + w/4$ 
  \Comment{coordinates of newly formed bins centers}
		\If {$\text{curr-axis} = d-1$} 
		 $w \gets w/2$
		\EndIf
		    \State $\text{node.left} \gets \textsc{Node}(P_\ell,c_\ell,w,(\text{curr-axis}+1)\% d)$
		    \State $\textsc{Recursive-Binning}(\text{node.left},\epsilon',\tau)$
		    \State $\text{node.right} \gets \textsc{Node}(P_r,c_r,w,(\text{curr-axis}+1)\% d)$
		    \State $\textsc{Recursive-Binning}(\text{node.right},\epsilon',\tau)$
		\EndProcedure
	\end{algorithmic}
\end{algorithm}

\begin{thm} \label{thm adaptive binning DP}
For any dataset $P$, $\epsilon'>0$, $\tau > 0$, $s_{1},s_{2}>0$, Algorithm \ref{alg:adaptive-binning} returns a tree such that the set of bins determined by its leaves is $\epsilon'$-differentially private.
\end{thm}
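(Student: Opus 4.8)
The plan is to derive the claim from the composition and post-processing properties of differential privacy, applied level by level to the decision tree built by Algorithm~\ref{alg:recursive-binning}. First I would fix the combinatorial skeleton of the tree: as in Theorem~\ref{lem:DP-simple-alg}, adopt the mild assumption that two neighboring datasets $P,P'$ share the same bounding box, so that $\text{low}_i,\text{high}_i,c_i,w$ computed in Algorithm~\ref{alg:adaptive-binning} agree on $P$ and $P'$. Since every split in Algorithm~\ref{alg:recursive-binning} is a data-independent midpoint cut, the set of all potential nodes up to depth $h'$, together with their geometric regions, is then a fixed object that does not depend on the input dataset; only the point counts attached to those nodes do. By construction the first $h=d\log_2(R/s_1)$ levels are data independent (the conjunct ``largest edge length $\le s_1$'' in line~\ref{line:condition} fails, so every such node recurses regardless of its noisy count), and recursion is forced to halt by depth $h'=d\log_2(R/s_2)$ via line~\ref{line:condition_bin_too_small}; hence along any root-to-leaf path there are at most $h'-h=d\log_2(s_1/s_2)$ \emph{data-dependent} decisions.

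Next I would show that the collection of decisions taken at any single data-dependent level is $\tfrac{\epsilon'}{h'-h}$-DP. Condition on all decisions made at strictly shallower levels; because the skeleton is fixed, this determines the set $\{P_1,\dots,P_k\}$ of nodes active at the current level and their regions, which therefore coincide for $P$ and $P'$. The only fresh randomness at this level is the i.i.d.\ Laplace noise of scale $2(h'-h)/\epsilon'$ added, one draw per active node, to the count vector $v=(|P_1|,\dots,|P_k|)$. Since the regions $P_1,\dots,P_k$ are disjoint, the single element in which $P$ and $P'$ differ lies in at most one region in each dataset, so $\|v-v'\|_1\le 2$; the Laplace mechanism $v\mapsto v+\mathrm{Lap}(2(h'-h)/\epsilon')^{\otimes k}$ is therefore $\tfrac{\epsilon'}{h'-h}$-DP (sensitivity $2$ over scale $2(h'-h)/\epsilon'$), and the level's decisions, being a deterministic post-processing of this noisy vector (thresholding against $\tau$), inherit the same guarantee. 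The data-independent levels contribute nothing, as their decisions are constant functions of the data.

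Finally I would assemble the bound: the tree is produced by at most $h'-h$ successive rounds of such $\tfrac{\epsilon'}{h'-h}$-DP mechanisms, each possibly depending on the outputs of the previous rounds, so by adaptive composition the entire transcript of decisions is $\epsilon'$-DP; the set of leaf bins is a deterministic function of that transcript, and one last application of post-processing closes the argument. The step I expect to be the main obstacle is the second one: one must ensure that ``one independent Laplace draw per node'' does not inflate the per-level cost with the (data-dependent) number $k$ of active nodes, which requires the parallel-composition observation that the level's count vector has $\ell_1$-sensitivity $2$ no matter how large $k$ is, together with the conditioning argument that pins down the active set and its geometry so that $P$ and $P'$ are even comparable at that level. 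The fixed-skeleton consequence of the bounding-box assumption is precisely what makes that conditioning argument legitimate.
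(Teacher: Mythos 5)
Your proposal is correct and follows essentially the same route as the paper: both rest on the observation that a changed record perturbs the counts of at most one node per data-dependent level (equivalently, at most two root-to-leaf paths), so that the $h'-h=d\log_2(s_1/s_2)$ data-dependent decisions, each protected by $\textsc{Lap}(2(h'-h)/\epsilon')$ noise, compose to $\epsilon'$-DP, with post-processing handling the leaf set. The only difference is bookkeeping --- you compose level by level via the sensitivity-$2$ per-level count vector, whereas the paper composes along the single affected path --- and your version makes explicit the fixed-skeleton and conditioning details that the paper's argument leaves implicit.
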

\begin{proof}
Data dependent part of the decision tree for this algorithm has maximum depth $h'-h = d\log_2(s_{1}/s_2)$. Each neighboring dataset affects only one root to leaf path, and thus it is enough to consider maximal privacy loss incurred along a single path to the leaf. Also, since the $\ell_1$ sensitivity of the point counts is $1$, it suffices to add $\textsc{Lap}(2(h'-h)/\epsilon')$ noise to each decision condition in order to guarantee $\epsilon'$-differential privacy for the entire path. Two neighboring datasets differ in the point counts in at most two paths.
\end{proof}
\begin{thm}[DP of Data Dependent algorithm] \label{thm data dependant algo DP}
    For a dataset $P$ and $\epsilon'>0$, 
    let 
    $c_{1},\dots,c_{J}\in \mathbb{R}^{d}$ denote centers of bins corresponding to the leaves of tree output by Algorithm \ref{alg:adaptive-binning}, and $\mathbf{v} \in \mathbb{R}^J$ be the vector of corresponding point counts in each bin. If $c_{1},\dots,c_{k}$ and $\mathbf{v}$ are passed to line \ref{line:laplace-noise} of Algorithm \ref{alg:rounding-new} with $\epsilon''>0$, then the final output of Algorithm \ref{alg:rounding-new} is $\epsilon'+\epsilon''$-DP dataset $Q$.
\end{thm}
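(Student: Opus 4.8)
The plan is to present the procedure as the sequential composition of two mechanisms and then invoke the composition and post-processing properties of $\epsilon$-DP recalled in Section~\ref{sect: intro DP}. Write $\mathcal{M}_1$ for Algorithm~\ref{alg:adaptive-binning} executed with budget $\epsilon'$; its randomized output is the decision tree, equivalently the induced partition $\Pi$ of $\mathbb{R}^d$ into the bins whose centers are $c_1,\dots,c_J$ (note that both $J$ and the $c_i$ are themselves random and data-dependent). By Theorem~\ref{thm adaptive binning DP}, $\mathcal{M}_1$ is $\epsilon'$-DP. Write $\mathcal{M}_2(\,\cdot\,;\Pi)$ for the map that, given a \emph{fixed} partition $\Pi$ with $J$ bins, takes a dataset, forms the count vector $\mathbf{v}\in\mathbb{R}^J$, adds independent $\mathrm{Lap}(2/\epsilon'')$ noise to each coordinate (line~\ref{line:laplace-noise} of Algorithm~\ref{alg:rounding-new} run with parameter $\epsilon''$), zeroes the coordinates below the threshold (line~\ref{line:filtering}), and returns the surviving centers with their noisy counts (line~\ref{line:output}). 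The output of the whole procedure is exactly $Q=\mathcal{M}_2\bigl(P;\mathcal{M}_1(P)\bigr)$.

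First I would show that for \emph{every} fixed partition $\Pi$ the map $P\mapsto\mathcal{M}_2(P;\Pi)$ is $\epsilon''$-DP. For neighbouring datasets $P,\hat P$ differing in a single record, that record lies in one bin of $\Pi$ under $P$ and in one bin under $\hat P$, so the count vectors differ in at most two coordinates by $1$ each; hence $\|\mathbf{v}-\hat{\mathbf v}\|_1\le 2$, exactly as in the proof of Theorem~\ref{lem:DP-simple-alg}. Adding $\mathrm{Lap}(2/\epsilon'')$ noise coordinate-wise is then the Laplace mechanism at sensitivity $2$, so it releases the noisy count vector with $\epsilon''$-DP; the subsequent thresholding and assembly of $Q$ depend only on that noisy vector and on the fixed $\Pi$, so by post-processing $\mathcal{M}_2(\,\cdot\,;\Pi)$ is $\epsilon''$-DP.

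Then I would glue the two stages. Using that the two stages draw independent randomness and conditioning on the realized partition $\Pi$, the probability of a joint event factors as $\Pr[\mathcal{M}_1(P)=\Pi]\cdot\Pr[\mathcal{M}_2(P;\Pi)\in\,\cdot\,]$; between neighbours the first factor changes by at most $e^{\epsilon'}$ (Theorem~\ref{thm adaptive binning DP}) and the second by at most $e^{\epsilon''}$ (previous paragraph, with the \emph{same} $\Pi$ plugged into both), so summing over $\Pi$ gives an $e^{\epsilon'+\epsilon''}$ bound for $\mathcal{M}(P)=\bigl(\mathcal{M}_1(P),\mathcal{M}_2(P;\mathcal{M}_1(P))\bigr)$ — this is the composition property of $\epsilon$-DP in its adaptive form, and for a continuous partition space the sum is replaced by an integral against the appropriate densities. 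Since $Q$ is a coordinate of $\mathcal{M}(P)$, the post-processing property yields that $Q$ alone is $(\epsilon'+\epsilon'')$-DP, which is the claim.

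I expect the one genuine subtlety — which the write-up must not gloss over — to be the data-dependence of the partition: the bound $\|\mathbf{v}-\hat{\mathbf v}\|_1\le 2$ used inside Algorithm~\ref{alg:rounding-new} was justified there for data-\emph{independent} bins, whereas here the bins come out of $\mathcal{M}_1$ applied to the same sensitive $P$, and a neighbour $\hat P$ may even induce a different $J$. The resolution is precisely that adaptive composition never asks for a sensitivity bound uniform over all partitions at once; it only needs, for each realized first-stage output $\Pi$, that $\mathcal{M}_2(\,\cdot\,;\Pi)$ is $\epsilon''$-DP when the \emph{same} $\Pi$ is used for $P$ and for $\hat P$, which is what the second paragraph establishes. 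A secondary remark worth a sentence is that the "same enclosing hypercube" caveat needed in Theorem~\ref{lem:DP-simple-alg} is not needed again here, since the enclosing cube is now selected inside $\mathcal{M}_1$ and its privacy cost is already absorbed into Theorem~\ref{thm adaptive binning DP}.
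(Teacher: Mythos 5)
Your proof is correct and takes the same route as the paper, whose entire argument is the one-line ``consequence of composition property of DP''; you have simply filled in the details that the paper leaves implicit. Your explicit treatment of the adaptive-composition subtlety --- that the sensitivity-$2$ bound for the Laplace stage only needs to hold for each \emph{fixed} realized partition $\Pi$, with the same $\Pi$ plugged in for both neighbours --- is exactly the point the paper's terse proof relies on without stating.
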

\begin{proof}
    Consequence of composition property of DP.
\end{proof}

\subsubsection{Implicit sampling of empty bins}

As already discussed, number of empty bins grows exponentially in dimension and this represents a challenge as we need to treat empty bins in a same manner as non empty ones for the purpose of achieving DP. Thus, efficient implementation of our algorithms would avoid storing all bins and iterating through them for the purpose of noise addition and filtering. We utilize idea exploited in \cite{cormode2011differentially}, to implicitly implement the noise addition and filtering on empty bins. This benefits both data independent Algorithm \ref{alg:rounding-new} (see Section E.1 
of Supplementary Material) and data dependent Algorithm \ref{alg:adaptive-binning} (discussed below).

\begin{lem} \label{lem:sampling explicit equivalent implicit}
    Explicit implementation of noise addition and filtering on empty bins as per Algorithm \ref{alg:explicit-empty-bins} is equivalent to the implicit implementation provided in Algorithm \ref{alg:implicit-empty-bins} .
\end{lem}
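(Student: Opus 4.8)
The plan is to show that Algorithm~\ref{alg:explicit-empty-bins} and Algorithm~\ref{alg:implicit-empty-bins} induce the \emph{same probability distribution} over the pair (set of retained empty bins, their output weights); since the non-empty bins are treated identically and the remaining operations are a common deterministic post-processing map, equality of these two laws yields equality in distribution of the full synthetic outputs, which is the asserted equivalence. Throughout, let $N_{0}$ be the number of empty bins, let $p := \Pr[\,\textsc{Lap}(0,2/\epsilon)\ge t\,] = \tfrac12 e^{-\epsilon t/2}$ be the probability that a single empty bin survives the noise-addition-and-filtering step, and let $\nu$ be the law of $\textsc{Lap}(0,2/\epsilon)$ conditioned on the event $\{\,\cdot\ge t\,\}$ (a truncated Laplace).

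First I would characterize the explicit procedure. Each empty bin $i$ has noiseless count $0$, so its noisy count is an independent draw $Y_i\sim\textsc{Lap}(0,2/\epsilon)$; bin $i$ is retained iff $Y_i\ge t$, and when retained it is output with weight $Y_i$. Hence the retained set $S$ and the weights $\{Y_i : i\in S\}$ are produced by $N_{0}$ i.i.d.\ experiments, each a Bernoulli$(p)$ survival indicator together with a (conditionally $\nu$-distributed) weight. By exchangeability of the $Y_i$ this is equivalent to a two-stage description: (i) draw $K\sim\mathrm{Binomial}(N_{0},p)$; (ii) conditioned on $K=k$, pick $S$ uniformly among the $\binom{N_{0}}{k}$ size-$k$ subsets of empty bins and draw the weights of the bins in $S$ i.i.d.\ from $\nu$. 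The only nontrivial input here is the standard fact that, for i.i.d.\ Bernoulli$(p)$ indicators, conditioning on the number of successes makes the success pattern uniform over subsets of that size, and that conditioning the retained weights on survival leaves them independent with common law $\nu$.

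Next I would read off Algorithm~\ref{alg:implicit-empty-bins} and verify it samples precisely this two-stage object: it draws the number of surviving empty bins from $\mathrm{Binomial}(N_{0},p)$, samples that many distinct empty-bin centers uniformly without replacement, and attaches to each an independent weight with law $\nu$. Matching the three ingredients --- the Binomial count, the uniform-without-replacement choice of locations, and the i.i.d.\ $\nu$ weights --- shows the two induced laws coincide on the empty-bin part; a coupling then transports this to equality in distribution of the complete outputs.

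I expect the main obstacle to be bookkeeping rather than mathematics: (a) justifying that ``sampling without replacement'' in the implicit algorithm is the correct combinatorial image of the independent per-bin coin flips (the subset-uniformity lemma), and (b) in the data-dependent setting, checking that the $N_{0}$ fed to the implicit routine --- the number of empty \emph{leaf} bins of the tree output by Algorithm~\ref{alg:adaptive-binning}, which one would want to compute without materializing all bins --- genuinely equals the number of empty bins the explicit routine iterates over, and likewise that $p$ uses the correct noise scale for that subroutine. Once $N_{0}$ and $p$ are pinned down, the distributional identity is immediate from the exchangeability argument above.
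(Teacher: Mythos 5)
Your proposal is correct and follows essentially the same route as the paper's (much terser) proof: the paper likewise reduces the equivalence to the independence of the per-bin Bernoulli survival indicators induced by independent Laplace noise and the representation of the Binomial as a sum of such indicators, with the conditional (truncated) Laplace law supplying the weights. Your write-up simply makes explicit the exchangeability/subset-uniformity step that the paper leaves implicit.
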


See Section F 
of Supplementary Material for  proof. In Algorithm \ref{alg:implicit-empty-bins}, $\textsc{ConditionalLap}(2/\epsilon,t)$ denotes a random variable with Laplace distribution conditioned on being greater than or equal to $t$.

\begin{algorithm}
	\caption{Explicit implementation}  
	\label{alg:explicit-empty-bins} 
	\begin{algorithmic}[1]
		\For {any empty bin}
        \State $\eta \gets \textsc{Lap}(2/\epsilon)$
        \If{$\textsc{PointCount}(\text{bin}) + \eta = 0 + \eta \ge t$}
        \State add this bin center with weight $\eta$ to the output 
        \EndIf    
        \EndFor
    \end{algorithmic}
\end{algorithm}

\begin{algorithm}
	\caption{Implicit implementation}  
	\label{alg:implicit-empty-bins} 
	\begin{algorithmic}[1]
		\State $K \gets$ the number of empty bins
		\State $p \gets \Pr[\textsc{Lap}(2/\epsilon) \ge t]$
		\State $m \gets \textsc{Binom}(K,p)$
		\State sample $m$ empty bins out of $K$ empty bins without replacement
		\For {each sampled empty bin}
		\State $\eta \gets $ a sample of $\textsc{ConditionalLap}(2/\epsilon,t)$  
		\State add this bin center with weight $\eta$ to the output 
		\EndFor
    \end{algorithmic}
\end{algorithm}

\subsubsection{Implicit sampling in the data dependent Algorithm \ref{alg:adaptive-binning}} 

The union of the set of empty and non empty bins given as output of Algorithm \ref{alg:adaptive-binning} coincides with the set of leaves in the tree of recursion decisions. 
The question is, whether from the information on the number of data independent levels $h$, and paths from the root to the leaves corresponding to non empty bins, one can recover the remaining set of leaves i.e., those corresponding to the empty bins. In this section, we prove that this is possible for a large enough cut off threshold $\tau$ (see line~\ref{line:condition} in Algorithm~\ref{alg:recursive-binning}) which guarantees that with high probability the algorithm does not partition empty bins further in the data dependent part of the algorithm. See Figure \ref{fig:decision-tree} for an illustration of a decision tree, and Section F 
of Supplementary Material for proof of Lemma \ref{lem:empty bin not split}.

\begin{lem} \label{lem:empty bin not split}
Let $h'$ and $h$ denote total depth and the depth of data independent part of the tree, respectively. If the threshold $\tau$ in line~\ref{line:condition} of Algorithm~\ref{alg:recursive-binning} is set to be greater than 
$$ \frac{2(h'-h)}{\epsilon'}\log \left(\frac{1}{\delta}\cdot\left(2^h + n(h'-h)\right)\right),$$
then with  probability $1-\delta$ the adaptive binning will not divide any empty bin.
\end{lem}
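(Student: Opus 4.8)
The plan is to union-bound over the data-dependent decision nodes at which an empty bin could possibly be split, and to show that each such ``bad split'' is exponentially unlikely in $\tau$.

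First I would pin down when a data-dependent decision can split an empty bin. Throughout the data-dependent part of the tree (depths $h$ to $h'$) every bin has largest edge length at most $s_1$, by construction of the data-independent part, so for a bin $v$ with $|P_v|=0$ the stopping condition in line~\ref{line:condition} of Algorithm~\ref{alg:recursive-binning} reduces to the event $\textsc{Lap}(2(h'-h)/\epsilon') \ge -\tau$; if this holds (or the edge has already dropped below $s_2$) the recursion stops at $v$. Hence an empty bin in the data-dependent part is split only when $\textsc{Lap}(2(h'-h)/\epsilon') < -\tau$, which by the Laplace lower tail has probability at most $\tfrac12\exp(-\tau\epsilon'/(2(h'-h)))$; moreover the noise variable attached to $v$ is independent of the noise at its ancestors, hence of whether $v$ is reached.

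Next I would bound the number of nodes that matter. The key structural observation is that if \emph{some} empty bin is split somewhere in the data-dependent part, then \emph{some} empty bin that is either (i) one of the level-$h$ bins or (ii) an empty child produced by splitting a non-empty bin is already split: starting from any split empty bin and walking toward the root, every ancestor that is still empty must itself have been split in order to exist, and the depth strictly decreases, so one reaches a node of type (i) or (ii). The candidates of type (i) are deterministic and number at most $2^h$. For type (ii): for a fixed dataset $P$, the set of non-empty bins at each of the (at most) $h'-h$ data-dependent decision levels is determined by $P$ alone and has size at most $n$ (the bins of a level are disjoint and each non-empty one contains a distinct point), so there are at most $n(h'-h)$ non-empty decision nodes, and since a non-empty bin cannot have two empty children each such node contributes at most one candidate. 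Thus the relevant candidate set $S$ is deterministic with $|S|\le 2^h+n(h'-h)$.

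Finally I would combine the two steps: by the union bound over $S$,
\[
\Pr\bigl[\text{an empty bin is split in the data-dependent part}\bigr]\;\le\; |S|\cdot\tfrac12\exp\!\Bigl(-\tfrac{\tau\epsilon'}{2(h'-h)}\Bigr)\;\le\;\bigl(2^h+n(h'-h)\bigr)\exp\!\Bigl(-\tfrac{\tau\epsilon'}{2(h'-h)}\Bigr),
\]
and the stated choice $\tau>\tfrac{2(h'-h)}{\epsilon'}\log\bigl(\tfrac1\delta(2^h+n(h'-h))\bigr)$ makes the right-hand side at most $\delta$. I expect the main obstacle to be the node-counting step: the set of decision nodes actually visited is random, since it depends on the split outcomes, so one must be careful to replace it by a deterministic superset — this is exactly what the ``walk up to a type-(i)/(ii) node'' argument, together with the fact that the non-empty bins at each level are fixed once $P$ is fixed, is designed to achieve. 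The remaining Laplace-tail estimate and the arithmetic to invert the inequality for $\tau$ are routine.
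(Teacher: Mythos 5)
Your proof is correct and follows essentially the same route as the paper: a Laplace tail bound of $\tfrac12 e^{-\tau\epsilon'/(2(h'-h))}$ per empty node, combined with a union bound over at most $2^h+n(h'-h)$ candidate empty bins (the paper's Lemma 13 gives exactly this count via the same ``at most one empty sibling per non-empty node per level'' observation). Your ``walk up to a type-(i)/(ii) node'' step is in fact a bit more careful than the paper's argument, which glosses over the fact that the set of visited nodes is random and that the empty-bin count itself presupposes the event being proved.
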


\begin{thm}
If we pick threshold $\tau$ in line~\ref{line:condition} of Algorithm~\ref{alg:recursive-binning} as per Lemma \ref{lem:empty bin not split}, then storing information on the number of data independent levels $h$ and paths to leaf nodes that represent non empty bins, enables implicit sampling of empty bins as per Algorithm \ref{alg:implicit-empty-bins}.
\end{thm}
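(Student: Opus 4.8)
The plan is to show that, on the high-probability event furnished by Lemma~\ref{lem:empty bin not split} (that no empty bin is ever partitioned in the data-dependent regime), the entire leaf set of the recursion tree built by Algorithm~\ref{alg:adaptive-binning} --- empty bins included --- is a deterministic function of the number $h$ of data-independent levels together with the stored root-to-leaf paths of the \emph{non-empty} bins. Reconstructing that leaf set yields in particular the number $K$ of empty bins and their centers, which is exactly the input Algorithm~\ref{alg:implicit-empty-bins} requires; the theorem then follows from Lemma~\ref{lem:sampling explicit equivalent implicit}, since the implicit sampling reproduces the distribution of the explicit noise-addition-and-filtering over all empty bins (with the budget $\epsilon''$ and threshold used in the final step of Algorithm~\ref{alg:rounding-new}), so the overall output distribution, and hence the $\epsilon'+\epsilon''$-DP guarantee of Theorem~\ref{thm data dependant algo DP}, is unchanged.

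First I would pin down the tree structure. At depths $0,\dots,h-1$ the largest edge of every bin exceeds $s_1$ (and a fortiori $s_2$), so neither stopping test in Algorithm~\ref{alg:recursive-binning} can fire and the algorithm always recurses; hence the node set at depths $\le h$ is the full binary tree with $2^h$ nodes at level $h$, recoverable from $h$ alone. A level-$h$ node contains at least one data point iff it is an ancestor of some non-empty leaf, i.e.\ iff it lies on a stored path; so the level-$h$ nodes lying on \emph{no} stored path are exactly the empty level-$h$ nodes, and by Lemma~\ref{lem:empty bin not split} each of these is a leaf. For depths $>h$ I would propagate downward using two facts: (i) on the good event, every empty node at depth $\ge h$ is a leaf; (ii) whenever a non-empty node is split, at least one child is non-empty, since each of its points lands in exactly one child. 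Combining these: a node $v$ at depth $\ge h$ is internal precisely when it lies on a stored path and at least one of its two (deterministically located, by the fixed split rule) children lies on a stored path; for such $v$ the child or children on stored paths are its non-empty children, and the remaining child --- at most one, by (ii) --- is empty and, by (i), a leaf. Iterating this rule from level $h$ downward enumerates and classifies every node of the tree, and thereby produces the complete list of empty bins, i.e.\ their centers and their number $K$.

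Feeding that list into Algorithm~\ref{alg:implicit-empty-bins} and invoking Lemma~\ref{lem:sampling explicit equivalent implicit} completes the argument. I expect the only delicate points to be fact (ii) --- that splitting a non-empty node can never produce two empty children, which is what makes ``internal'' detectable from the stored paths alone --- together with the clean dichotomy at level $h$: below level $h$ empty nodes are always split, so no empty leaf can hide there, while from level $h$ on the good event of Lemma~\ref{lem:empty bin not split} forbids any further splitting of empty nodes, so every empty leaf is reached exactly once by the top-down enumeration and none is missed.
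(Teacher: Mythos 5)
Your proposal is correct and follows essentially the same route as the paper: Lemma~\ref{lem:empty bin not split} forces every empty bin in the data-dependent regime to be a leaf, so each split node has at most one empty child, and the full set of empty leaves is therefore recoverable from $h$ together with the stored root-to-leaf paths of the non-empty bins, after which Lemma~\ref{lem:sampling explicit equivalent implicit} applies. You merely spell out more explicitly the two facts the paper leaves implicit (that a split non-empty node always has a non-empty child, and that the level-$h$ empty bins are exactly the full-binary-tree nodes off all stored paths), which is a welcome tightening rather than a different argument.
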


\begin{proof}
    Lemma \ref{lem:empty bin not split} guarantees that no empty bin is recursed on. Thus, each parent node has at most one child corresponding to an empty bin. In particular, for each two non-empty bins (black nodes in Figure ~\ref{fig:decision-tree}
    ) we can identify their common ancestor and the number of empty bins (gray nodes in Figure \ref{fig:decision-tree}) between them. Thus, set of empty bins can be recovered from the encoding of non empty bins. See Section F 
    of Supplementary Material for details.
\end{proof}

\begin{figure}
\centering
\scalebox{0.5}{
\begin{forest}
for tree={
    grow=south,
    circle, draw, minimum size=3ex, inner sep=0.75pt,
    s sep=2mm
        }
[
    [
        [, edge=dashed,fill=gray
        [,no edge, draw=none]
        [,no edge, draw=none]
        ]
        [ 
            [ 
            [
            [,fill=black[,no edge, draw=none]
        [,no edge, draw=none]]
            [, edge=dashed,fill=gray
            [,no edge, draw=none
            ]
        [,no edge, draw=none]]
            ]
            [, edge=dashed,fill=gray
            [,no edge, draw=none
            ]
        [,no edge, draw=none]]
            ]
            [, edge=dashed,fill=gray
            [,no edge, draw=none]
        [,no edge, draw=none]
        ]
        ]
    ]
    [
        [
            [, edge=dashed,fill=gray
            [,no edge, draw=none]
        [,no edge, draw=none]
        ]
            [,fill=black
            [,no edge, draw=none]
        [,no edge, draw=none]] 
        ]
        [, edge=dashed,fill=gray
        [,no edge, draw=none]
        [,no edge, draw=none]
        ]
    ]
]
\end{forest}}
\vspace{-0.5cm}
\caption{Tree with $h = 2$ and $h' = 5$. Black nodes are non-empty bins, gray nodes are empty bins we need to sample.}\label{fig:decision-tree}
\end{figure}

\section{Experiments}\label{sect experiments}
We provide experimental results on both of our proposed algorithms in various settings.
\paragraph{Privacy-utility trade-off}
\begin{figure*}[!ht]
    \centering
    \begin{subfigure}{0.33\textwidth}
    \centering
    \includegraphics[width=\textwidth]{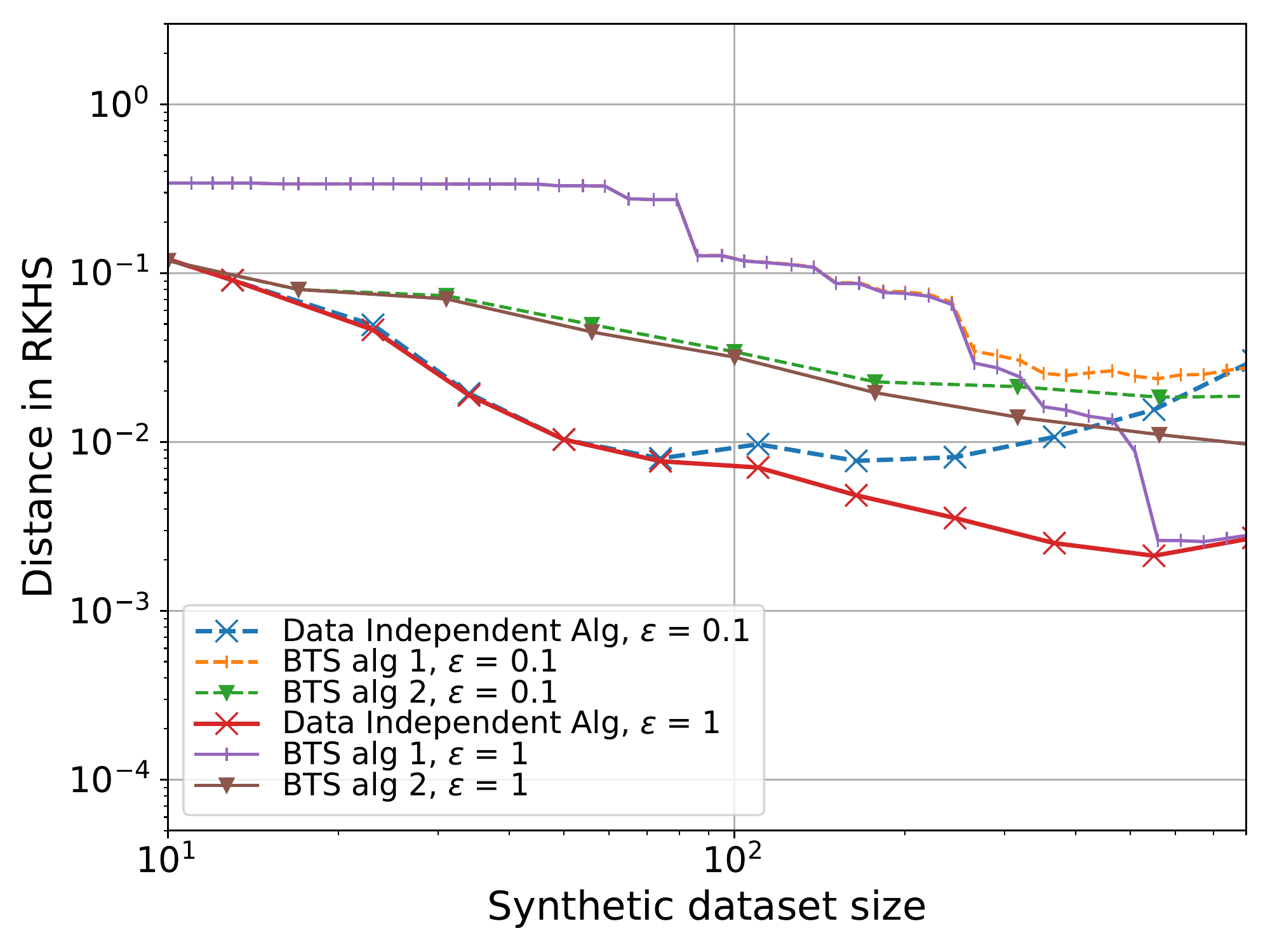}
    \caption{}
    \label{fig:DI-D2}
    \end{subfigure}
     \hfill
    \begin{subfigure}{0.33\textwidth}
    \centering
    \includegraphics[width=\textwidth]{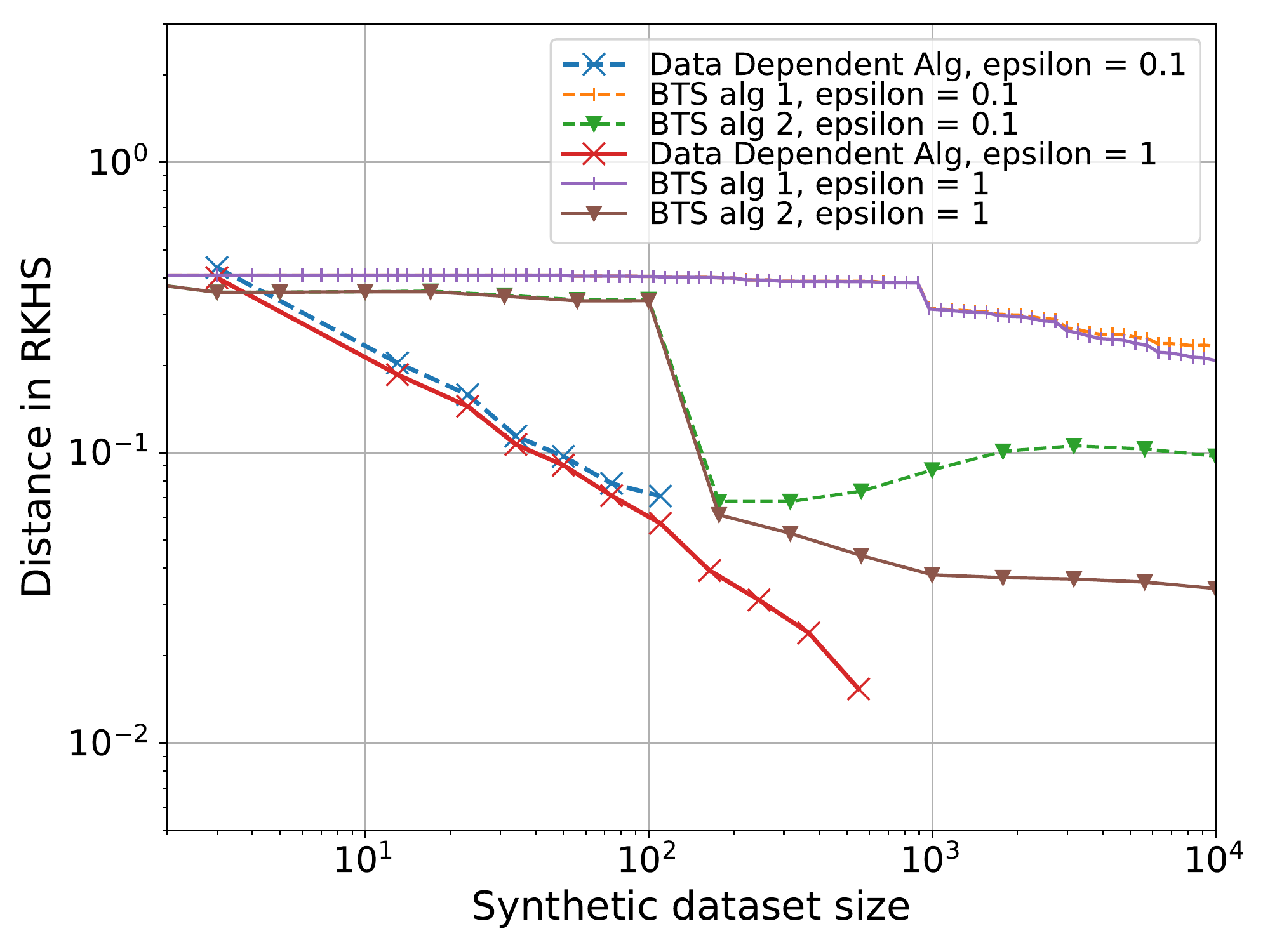}
    \caption{}
    \label{fig:DD-D5}
    \end{subfigure}
     \hfill
    \begin{subfigure}{0.33\textwidth}
    \centering
    \includegraphics[width=\textwidth]{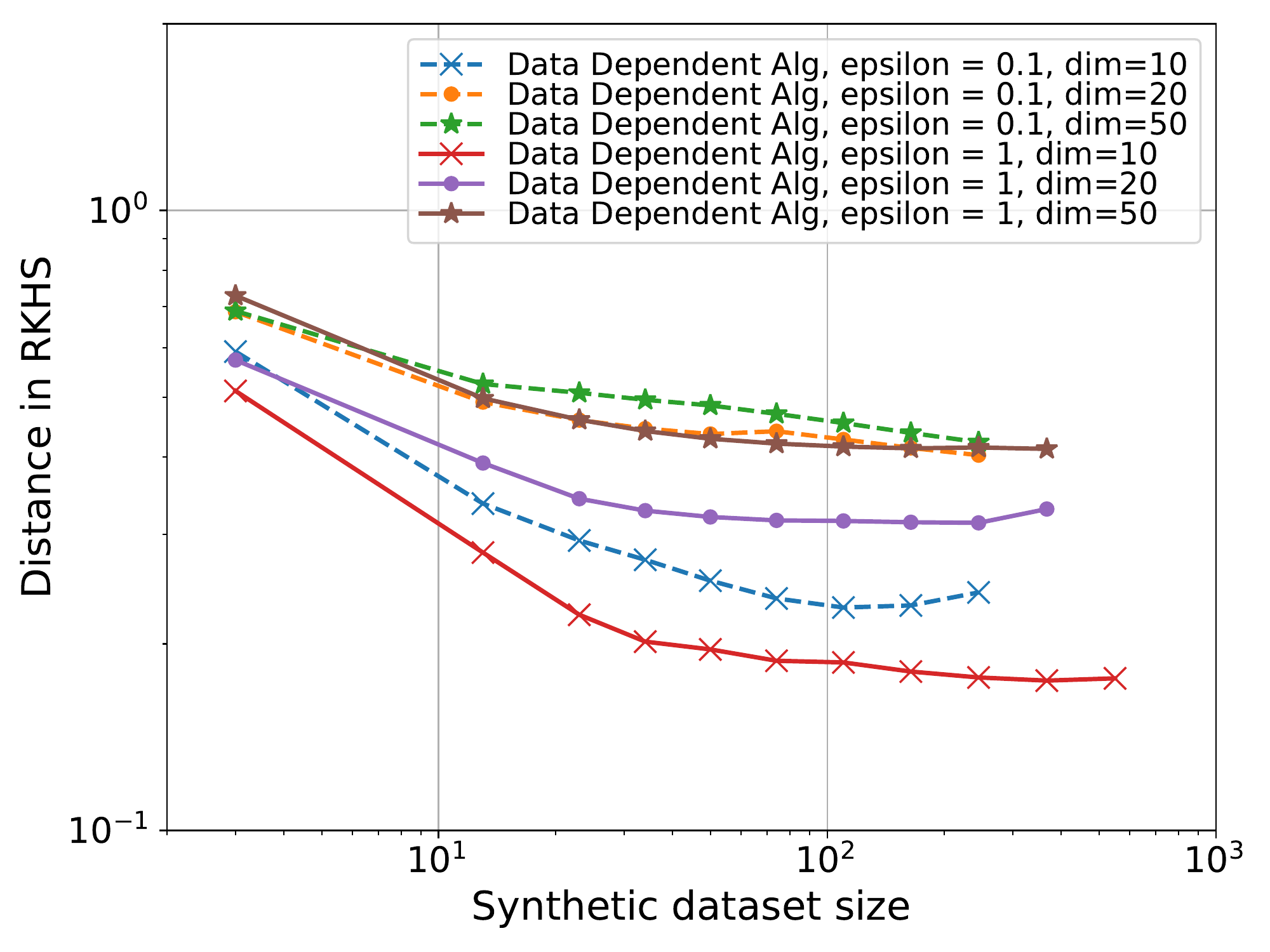}
    \caption{}
    \label{fig:DD-various}
    \end{subfigure}
    \caption{ (Left) Algorithm~\ref{alg:rounding-new} vs \cite{DBLP:conf/icml/BalogTS18} in dimension 2. (Middle) Our Data Dependent Algorithm vs \cite{DBLP:conf/icml/BalogTS18} in dimension 5.
    (Right) Data Dependent Algorithm in high dimensions.}
    \label{fig:d2vsBalog}
\end{figure*}

First, we show improvements using our data independent Algorithm~\ref{alg:rounding-new} over both algorithms of \cite{DBLP:conf/icml/BalogTS18} on a mixture of Gaussians dataset in dimension 2. The comparison is presented in Figure~\ref{fig:DI-D2}, where a considerable improvement is achieved, in all regimes of privacy budget $\epsilon$ and dataset size. For a 5-dimensional mixture of Gaussians dataset, although our data independent algorithm is unable to outperform the algorithms of \cite{DBLP:conf/icml/BalogTS18}, 
our data dependent algorithm overcomes the curse of dimensionality and it achieves lower error rates using smaller number of synthetic data points. This improvement for all regimes of privacy levels is presented in Figure~\ref{fig:DD-D5}. For both settings, we use datasets utilized in \cite{DBLP:conf/icml/BalogTS18} (see Section G.1 
of Supplementary Material for details). Figure~\ref{fig:DD-various} confirms that our data dependent algorithm achieves low errors for high dimensions and various privacy regimes, and thus overcomes curse of dimensionality.

\paragraph{Tightness of bound from Theorem \ref{thm one gaussian tradeoff}}\label{sect:experiment_theoretical_scaling}

Theorem \ref{thm one gaussian tradeoff} provides the upper bound on utility loss of Algorithm \ref{alg:rounding-new} for samples from multivariate Gaussian. When bins are such that the error $\frac{w\sqrt{d}}{2\sqrt{e}}$ arising from rounding to centers is negligible, for $\epsilon \gg\frac{1}{n}$ the bound scales as $O\left((\epsilon n)^{-1/3}\right)$. Figure \ref{fig:various_dims_vs_scaling} shows empirical MMD for various privacy levels vs. $O\left((\epsilon n)^{-1/3}\right)$ benchmark and suggests that there might be a gap and our bound can be improved. See Section \ref{sect: future work} for further discussion.

\begin{figure}[ht!]
    \centering   \includegraphics[width=0.33\textwidth]{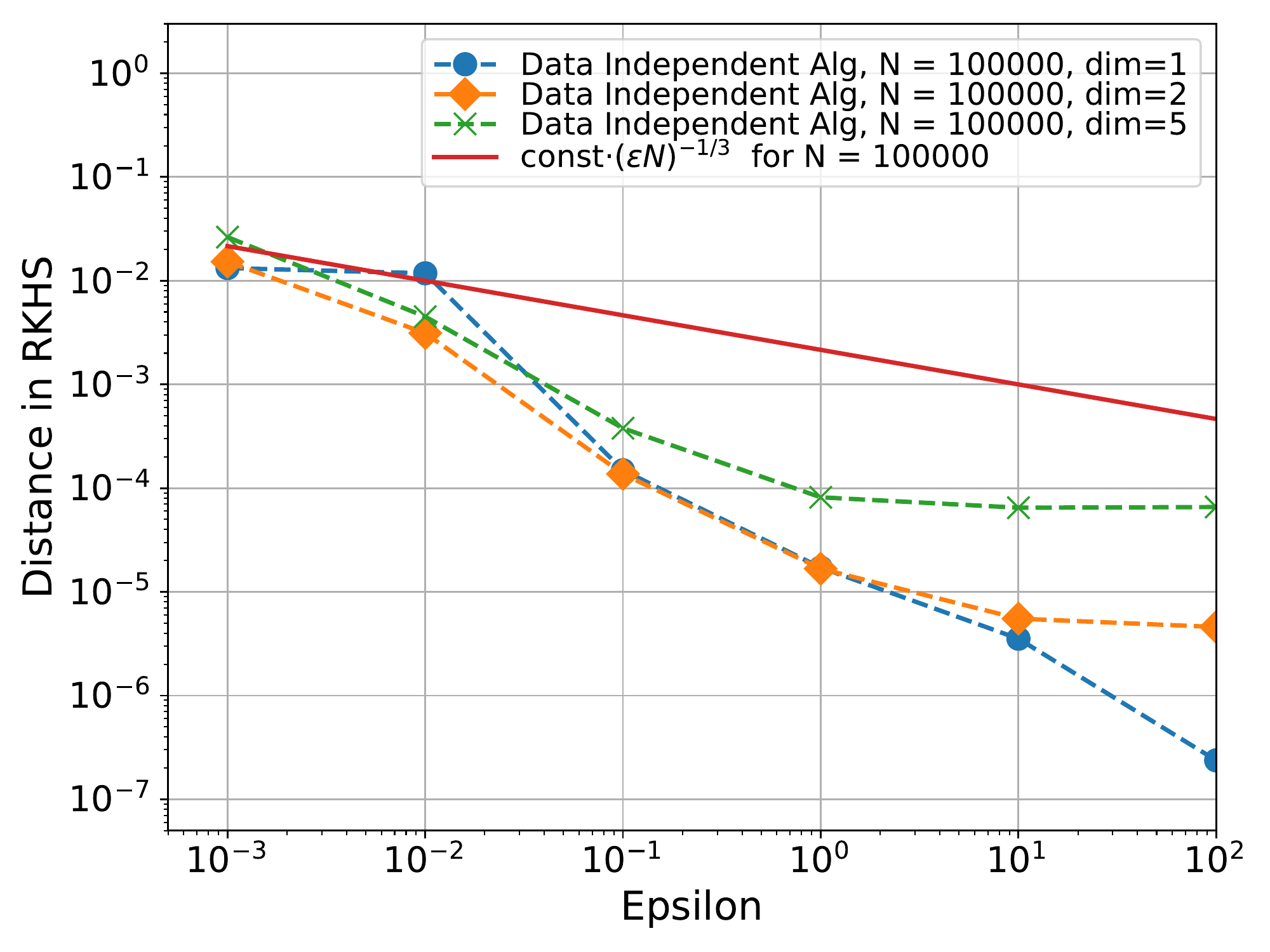}
    \caption{Utility loss of Algorithm \ref{alg:rounding-new} for various privacy budgets, and input of $100,000$ samples from standard Gaussian for various dimensions. It  stabilizes regardless of privacy budget due the error of rounding to centers.}
    \label{fig:various_dims_vs_scaling}
\end{figure}

\paragraph{Classification task on real data}

We evaluate performance of our data dependent algorithm on downstream binary classification task on a real tabular dataset. We rely on credit card fraud detection dataset used in \cite{DBLP:conf/aistats/HarderAP21} and compare our data dependent algorithm with their DP-MERF. We follow their experimental setup to train $12$ classifiers on synthetic data and evaluate their performance on original data. See Section G.4 
of Supplementary Material for more detailed setup and results. Our algorithm does not outperform DP-MERF in terms of ROC values which is not surprising as we do not rely on deep generative models. However, our performance degrades slower as privacy increases (Figure \ref{fig:ROC_degradation}).
\begin{figure}[ht!]
    \centering   \includegraphics[width=0.33\textwidth]{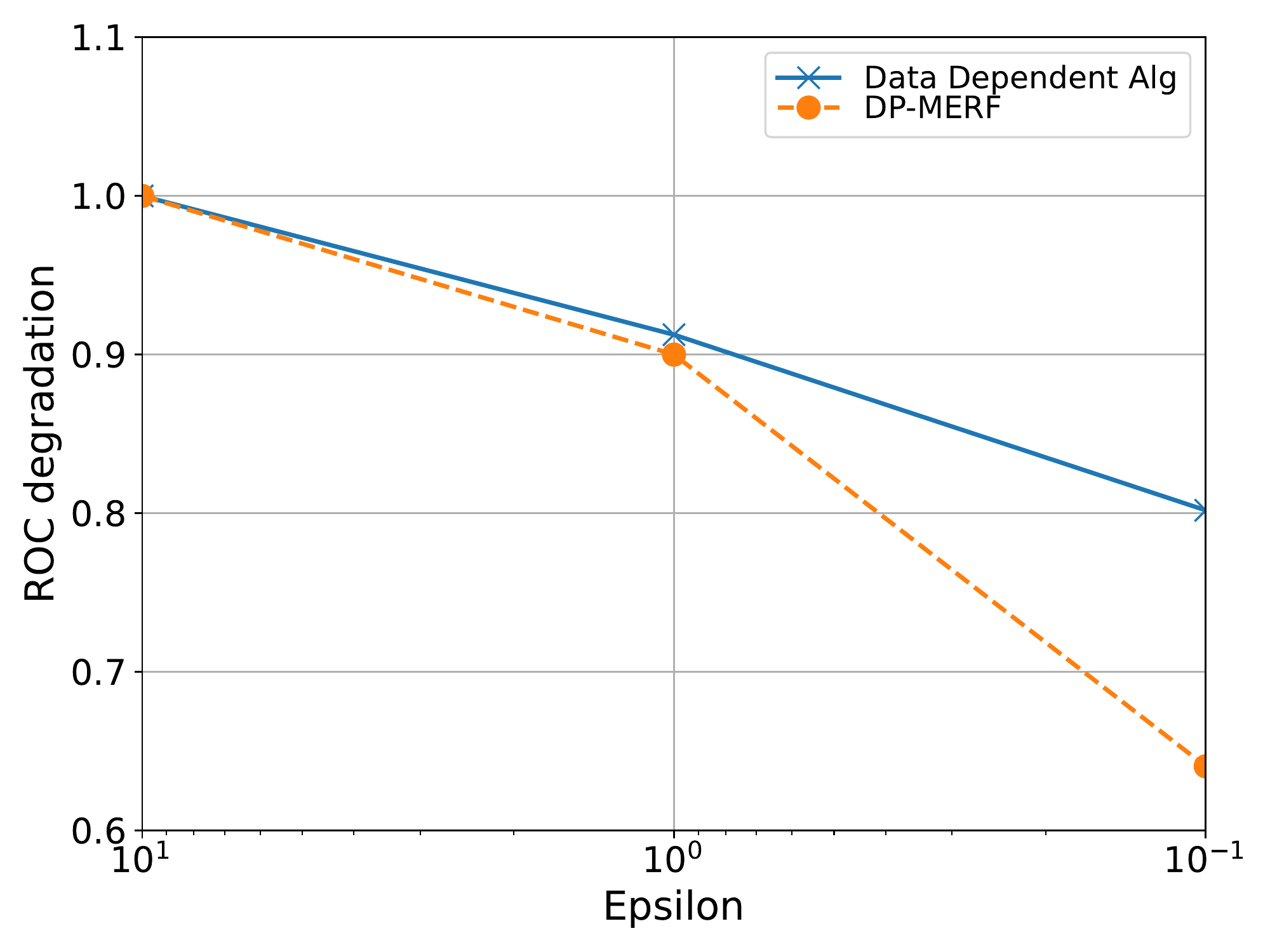}
    \caption{Comparison of our Data Dependent Algorithm and DP-MERF on downstream classification task. ROC degradation is represented as a ratio of ROC corresponding to a specified $\epsilon$ budget and ROC for $\epsilon=10$. }
    \label{fig:ROC_degradation}
\end{figure}

\paragraph{Optimal size of bins}

We explore what combination of widths and weights of bins minimizes MMD between synthetic data and the sample (regime with no privacy). For $1$-dimensional standard Gaussian input, we consider synthetic data given by a mixture of uniforms. More precisely, synthetic data consists of
centers of $2k+1$ bins of a given width, symmetrically arranged around the mean. Figure \ref{fig:mmd_kl} shows KL and MMD between standard Gaussian and mixture of uniforms \citep{https://doi.org/10.1002/sta4.329}, as well as sample MMD for a Gaussian sample of size $100,000$. For appropriately set width, sample MMD is getting small which indicates that binning approach has potential to yield good utilities. For more details and derivation of optimal widths and weights
see Section H 
of Supplementary Material.

For additional experiments, see Sections G.2 
and G.3 
of Supplementary Material.

\begin{figure}[ht!]
    \centering   \includegraphics[width=0.33\textwidth]{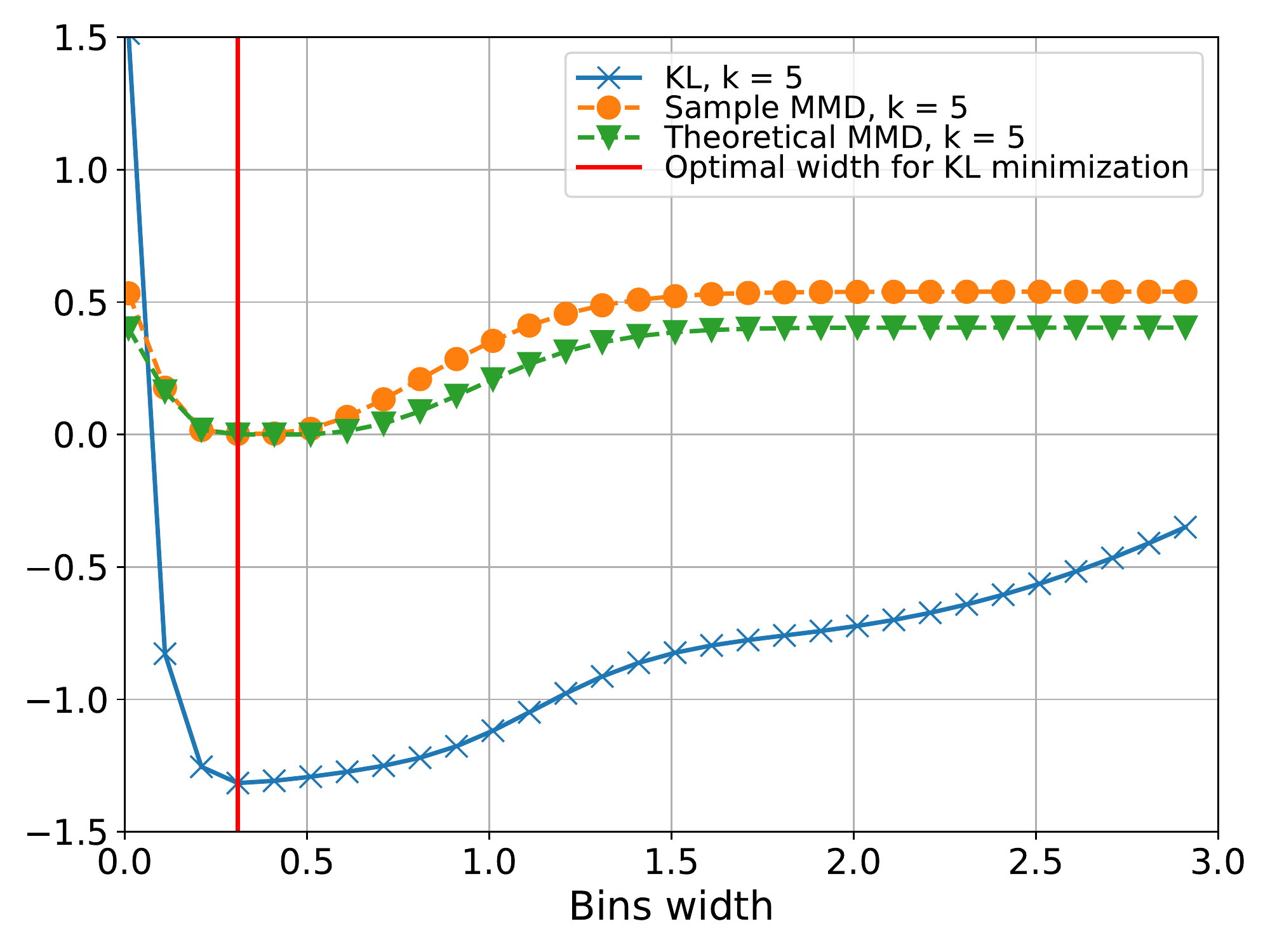}
    \caption{$1$-dimensional standard Gaussian vs $11$-mixture of uniforms and various widths.}
    \label{fig:mmd_kl}
\end{figure}

\section{Conclusions and Future work}\label{sect: future work}
We proposed an interpretable and efficient algorithms for differentially private synthetic data generation, where data takes multidimensional numerical form. We provide theoretical bounds in various settings and empirically validate our algorithm performance in terms of both simulated and real-world datasets.

Our algorithms can be extended to tabular data characterized by mixed numerical and categorical columns, as each category within a column corresponds 
to a bin along that dimension. However, this setting does not benefit from our theoretical guarantees on the utility (e.g. Theorem \ref{thm:general-1}) which we measure in terms of kernel density metrics defined on $\mathbb{R}^{d}$. Moreover, our data dependent approach does not handle the curse of dimensionality problem as the total number of bins grows exponentially with the number of categorical columns. This is due to the fact that in mixed numerical and categorical settings there is no straightforward way to merge or split categories depending on point counts, as it is the case with bins in $\mathbb{R}^{d}$. It is an interesting future direction to combine our methods for numerical columns with a different method along categorical columns in order to achieve scalable solution in such setting.

Another interesting direction is to resolve the question of the theoretical utility-privacy guarantees for our data dependent algorithm. This is challenging in the data dependent setting as it is not easy to provide meaningful bounds on the total number of bins.
Furthermore, we conjecture that one can improve upon our bounds using more sophisticated LSH functions, e.g., projection based LSHs such as \cite{DBLP:conf/compgeom/DatarIIM04,DBLP:conf/focs/AndoniI06}. This direction needs novel techniques in order to handle empty bins.
For the special case of Gaussian multivariate distribution, we leave it as future work to compare our algorithms to existing results on DP, e.g., \cite{DBLP:journals/corr/abs-1805-00216} who provide DP learning of distribution parameters. Lower bounds for our space-partitioning approach also remain open, in particular whether our bound from Theorem \ref{thm one gaussian tradeoff} can be improved to be of the form $O(n^{-1/2})$ as suggested by results of \cite{Duchi}.





\paragraph{Disclaimer}
This paper was prepared for informational purposes by the Artificial Intelligence
Research group of JPMorgan Chase \& Co and its affiliates (“J.P. Morgan”), and is not a product
of the Research Department of J.P. Morgan. J.P. Morgan makes no representation and warranty
whatsoever and disclaims all liability, for the completeness, accuracy or reliability of the information
contained herein. This document is not intended as investment research or investment advice, or a
recommendation, offer or solicitation for the purchase or sale of any security, financial instrument,
financial product or service, or to be used in any way for evaluating the merits of participating in
any transaction, and shall not constitute a solicitation under any jurisdiction or to any person, if such
solicitation under such jurisdiction or to such person would be unlawful.

\bibliography{kreacic_397}

\newpage
\onecolumn
\appendix

\section{Some auxiliary results}

\begin{rem}\label{fact:tailbound4laplace}
     For $X\sim \textsc{Lap}(2/\epsilon)$, we have
     \begin{align*}
         \Pr[X \ge \alpha] = \frac{1}{2}e^{-\frac{\alpha}{2/\epsilon}}.
     \end{align*}
\end{rem}

\begin{rem} \label{rem laplace tail log}
    As a consequence of Remark \ref{fact:tailbound4laplace}, for $X\sim \textsc{Lap}(2/\epsilon)$, we have
     \begin{align*}
         \Pr\left[|X| \ge \frac{4C\log n}{\epsilon}\right] = n^{-2C}.
     \end{align*}
\end{rem}

\begin{rem}
     Let $X\sim \textsc{Lap}(2/\epsilon)$, then we have
     \begin{align*}
         \mathbf{E}[|X|] = \frac{1}{\epsilon}.
     \end{align*}
\end{rem}

\begin{rem}[\cite{Laurent2000AdaptiveEO}]
     Let $Y:=\sum_{k=1}^{d} Z_k^2$, where $Z_k\sim \mathcal{N}(0,1)$ are i.i.d. random variables. Then
     \begin{align*}
         \Pr\left[Y \ge d+2\sqrt{dx}+2x \right] \le e^{-x}, \forall x>0
     \end{align*}
\end{rem}
\begin{cor}
Let $Y:=\sum_{k=1}^{d} Z_k^2$, where $Z_k\sim \mathcal{N}(0,\sigma^2)$ are i.i.d. random variables. Then
     \begin{align*}
         \Pr\left[Y/\sigma^2 \ge  2d+3x \right] \le \Pr\left[Y/\sigma^2 \ge  d+2\sqrt{dx}+2x \right] \le e^{-x}, \forall x>0
     \end{align*}
\end{cor}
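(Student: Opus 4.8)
The plan is to reduce the statement to the Laurent--Massart tail bound quoted in the preceding Remark. First I would normalize: since $Z_k\sim\mathcal{N}(0,\sigma^2)$ are i.i.d., the variables $Z_k/\sigma\sim\mathcal{N}(0,1)$ are i.i.d.\ standard normal, so $Y/\sigma^2=\sum_{k=1}^d (Z_k/\sigma)^2$ has exactly the distribution of the random variable called $Y$ in the previous Remark (a $\chi^2_d$ variate). Applying that Remark with the same $x>0$ gives immediately
\begin{align*}
\Pr\left[Y/\sigma^2 \ge d+2\sqrt{dx}+2x\right]\le e^{-x},
\end{align*}
which is the second inequality in the claimed chain.

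For the first inequality it suffices to show the deterministic containment of events $\{Y/\sigma^2\ge 2d+3x\}\subseteq\{Y/\sigma^2\ge d+2\sqrt{dx}+2x\}$, which follows once we verify $2d+3x\ge d+2\sqrt{dx}+2x$ for all $d,x>0$. This rearranges to $d+x\ge 2\sqrt{dx}$, i.e.\ $(\sqrt d-\sqrt x)^2\ge 0$, which is just AM--GM and always holds. Since a larger threshold yields a smaller (sub)event, the probability on the left is at most the probability in the middle, giving the first inequality.

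Chaining the two observations yields $\Pr[Y/\sigma^2\ge 2d+3x]\le\Pr[Y/\sigma^2\ge d+2\sqrt{dx}+2x]\le e^{-x}$ for every $x>0$, as desired. There is no real obstacle here; the only thing to be slightly careful about is the normalization step (confirming that scaling each coordinate by $1/\sigma$ turns the $\mathcal{N}(0,\sigma^2)$ sum of squares into a genuine $\chi^2_d$ so that the cited Remark applies verbatim), and the trivial but load-bearing AM--GM inequality that makes the cleaner bound $2d+3x$ a valid weakening of $d+2\sqrt{dx}+2x$.
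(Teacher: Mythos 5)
Your proposal is correct and is exactly the argument the paper intends (and leaves implicit): normalize by $\sigma$ to reduce to the Laurent--Massart bound in the preceding Remark, then weaken $d+2\sqrt{dx}+2x$ to $2d+3x$ via $2\sqrt{dx}\le d+x$. Nothing further is needed.
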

\begin{cor}\label{cor:tailboundschisquared}
For any $X\sim \mathcal{N}(\mathbf{c},\sigma^2I)$, we have
\begin{align*}
    \Pr\left[||X-\mathbf{c}||_2 \ge \sigma \sqrt{2d+3x}\right]\le e^{-x} 
\end{align*}
\end{cor}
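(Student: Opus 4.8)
The plan is to deduce this directly from the preceding corollary by unwinding definitions. First I would observe that $X \sim \mathcal{N}(\mathbf{c}, \sigma^2 I)$ implies $X - \mathbf{c} \sim \mathcal{N}(0, \sigma^2 I)$, and since the covariance is diagonal, the coordinates $(X-\mathbf{c})_1, \dots, (X-\mathbf{c})_d$ are i.i.d.\ $\mathcal{N}(0, \sigma^2)$. Setting $Y := \|X - \mathbf{c}\|_2^2 = \sum_{k=1}^d (X-\mathbf{c})_k^2$, this $Y$ is precisely the random variable to which the preceding corollary applies.

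Next I would invoke the preceding corollary to obtain $\Pr[Y/\sigma^2 \ge 2d + 3x] \le e^{-x}$ for all $x > 0$. It then remains only to rewrite the event: since $\sigma > 0$ and $Y \ge 0$, monotonicity of the square root on $[0,\infty)$ gives $\{Y/\sigma^2 \ge 2d+3x\} = \{Y \ge \sigma^2(2d+3x)\} = \{\sqrt{Y} \ge \sigma\sqrt{2d+3x}\} = \{\|X - \mathbf{c}\|_2 \ge \sigma\sqrt{2d+3x}\}$. Substituting into the tail bound yields the claim.

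There is no genuine obstacle here — the statement is a cosmetic repackaging of the preceding corollary, which is itself a weakening of the Laurent–Massart chi-squared tail bound obtained via $2\sqrt{dx} \le d + x$. The only points requiring a moment of care are that the diagonal covariance is what makes the coordinates independent (so that $Y/\sigma^2$ is a genuine $\chi^2_d$ variable), and that passing between $\|X-\mathbf{c}\|_2^2$ and $\|X-\mathbf{c}\|_2$ is a valid equivalence of events because both quantities, and the thresholds, are nonnegative.
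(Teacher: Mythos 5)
Your proof is correct and is exactly the argument the paper intends: the corollary is stated without proof precisely because it is the immediate translation of the preceding corollary via $Y=\|X-\mathbf{c}\|_2^2$ being a scaled $\chi^2_d$ variable and the monotone rewriting of the event. Nothing is missing.
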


\begin{lem}[Chernoff bounds]\label{lem:chernoff}
Let $X_1,X_2,\ldots,X_n$ be independent binary random variables. Define $Y:=\sum_{i=1}^{n}X_i$ and $\mu := \mathbb{E}[Y]$. Then, for any 
$\delta>0$
and 
\begin{align*}
    \Pr[|Y-\mu|>\delta\mu] \le 2 \exp(-\delta^2\mu/4).
\end{align*}
\end{lem}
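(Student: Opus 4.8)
The plan is to prove this by the standard exponential-moment (Chernoff) method, handling the upper and lower tails separately and then combining them with a union bound. For the upper tail, fix any $t>0$ and apply Markov's inequality to the nonnegative variable $e^{tY}$, giving $\Pr[Y\ge(1+\delta)\mu]=\Pr[e^{tY}\ge e^{t(1+\delta)\mu}]\le e^{-t(1+\delta)\mu}\,\mathbb{E}[e^{tY}]$. Since the $X_i$ are independent and $\{0,1\}$-valued with $\mathbb{E}[X_i]=p_i$, we get $\mathbb{E}[e^{tY}]=\prod_{i=1}^n(1+p_i(e^t-1))\le\prod_{i=1}^n e^{p_i(e^t-1)}=e^{\mu(e^t-1)}$, using $1+x\le e^x$ and $\sum_i p_i=\mu$. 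Substituting and minimizing $\mu(e^t-1)-t(1+\delta)\mu$ over $t$ yields the optimizer $t=\ln(1+\delta)$ and hence the classical estimate $\Pr[Y\ge(1+\delta)\mu]\le\bigl(e^{\delta}/(1+\delta)^{1+\delta}\bigr)^{\mu}$.

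The next step is to reduce this to the clean form $e^{-\mu\delta^2/4}$. Writing $g(\delta)=\delta-(1+\delta)\ln(1+\delta)$ for the logarithm of the bracketed quantity, a one-variable calculus argument (monotonicity/convexity of $g$) gives $g(\delta)\le-\delta^2/(2+\delta)\le-\delta^2/4$ on $0<\delta\le2$, which is the regime relevant for every application of the lemma; for larger $\delta$ one either keeps the tighter multiplicative form or observes that once $(1+\delta)\mu>n$ the event $\{Y\ge(1+\delta)\mu\}$ is empty. For the lower tail the same computation with $t<0$ and optimizer $t=\ln(1-\delta)$ (meaningful only for $0<\delta<1$; for $\delta\ge1$ the event $\{Y\le(1-\delta)\mu\}$ is empty since $Y\ge0$) gives $\Pr[Y\le(1-\delta)\mu]\le\bigl(e^{-\delta}/(1-\delta)^{1-\delta}\bigr)^{\mu}\le e^{-\mu\delta^2/2}\le e^{-\mu\delta^2/4}$, where the middle step is again an elementary single-variable bound, this time on $-\delta-(1-\delta)\ln(1-\delta)$.

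Finally I would assemble the pieces: $\Pr[|Y-\mu|>\delta\mu]\le\Pr[Y\ge(1+\delta)\mu]+\Pr[Y\le(1-\delta)\mu]\le 2e^{-\mu\delta^2/4}$, which is the claim. The only genuinely non-cosmetic work is verifying the two scalar inequalities $g(\delta)\le-\delta^2/4$ and its lower-tail analogue; these are routine, and the cleanest route in a paper is simply to cite them from a standard source (Mitzenmacher--Upfal, or Dubhashi--Panconesi) rather than redo the calculus. I do not expect any real obstacle beyond being careful about the admissible range of $\delta$, and as noted that subtlety only bites in the large-$\delta$ corner that never occurs in how this lemma is used.
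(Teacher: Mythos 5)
The paper does not prove this lemma at all; it is listed among the auxiliary results as a standard fact ("Chernoff bounds") with no proof or citation, so there is nothing to compare against except the textbook argument, which is exactly what you give. Your derivation — Markov applied to $e^{tY}$, the bound $\mathbb{E}[e^{tY}]\le e^{\mu(e^t-1)}$ via $1+x\le e^x$ and independence, optimization at $t=\ln(1\pm\delta)$, then the scalar inequalities reducing $\delta-(1+\delta)\ln(1+\delta)\le-\delta^2/(2+\delta)$ and its lower-tail analogue to $-\delta^2/4$ — is correct and is the standard route. You are also right to flag the one real issue: the reduction $-\delta^2/(2+\delta)\le-\delta^2/4$ requires $\delta\le 2$, and the lemma as stated ("for any $\delta>0$") is in fact false for large $\delta$ (e.g. $\mu=1$, $\delta=100$, $n$ large: the true upper tail is roughly $e^{-\delta\ln\delta\cdot\mu}$, far larger than $e^{-\delta^2\mu/4}$). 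Your proposed patch for that corner — that the event $\{Y\ge(1+\delta)\mu\}$ is empty once $(1+\delta)\mu>n$ — does not close the gap, since one can have $(1+\delta)\mu\le n$ with $\delta$ large; the honest fix is to restrict the lemma to $0<\delta\le 2$ (or state the exponent as $\delta^2\mu/(2+\delta)$), which covers every use of the lemma in the paper. So: proof correct where the statement is true, and you have correctly diagnosed that the defect lies in the statement rather than in the argument.
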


\section{Data independent approach: $t=0$ case}\label{Appendix: data indep no filter}

\begin{lem}[Perturbation bound for Gaussian kernel] \label{lem:rounding}
For Gaussian kernel given by $K(\mathbf{x},\mathbf{y})=e^{-\frac{||\mathbf{x}-\mathbf{y}||_2^2}{2}}$, $\mathbf{x},\mathbf{y}\in \mathbb{R}^d$, if $\mathbf{x}\in \mathbb{R}^d$ and $\mathbf{x'}\in \mathbb{R}^d$ are such that $||\mathbf{x}-\mathbf{x'}||_2\le \alpha$, then:
\begin{align*}
    \max_{\mathbf{y}\in \mathbb{R}^d} |K(\mathbf{x},\mathbf{y})-K(\mathbf{x'},\mathbf{y})| \le \min(1,\frac{\alpha}{\sqrt{e}})
\end{align*}
\end{lem}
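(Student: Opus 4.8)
The plan is to reduce the problem to a one-dimensional estimate. Fix $\mathbf{y}\in\mathbb{R}^d$ and write $f(\mathbf{x}) = K(\mathbf{x},\mathbf{y}) = e^{-\|\mathbf{x}-\mathbf{y}\|_2^2/2}$. Since $|K(\mathbf{x},\mathbf{y})-K(\mathbf{x'},\mathbf{y})| \le 1$ trivially (both values lie in $(0,1]$), the only thing to prove is the bound $|f(\mathbf{x})-f(\mathbf{x'})| \le \alpha/\sqrt{e}$, after which taking the minimum is immediate. For this I would use the mean value theorem along the segment joining $\mathbf{x}$ and $\mathbf{x'}$: there is a point $\mathbf{z}$ on that segment with $|f(\mathbf{x})-f(\mathbf{x'})| = |\langle \nabla f(\mathbf{z}), \mathbf{x}-\mathbf{x'}\rangle| \le \|\nabla f(\mathbf{z})\|_2 \cdot \|\mathbf{x}-\mathbf{x'}\|_2 \le \alpha \cdot \|\nabla f(\mathbf{z})\|_2$ by Cauchy–Schwarz.

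It remains to bound $\|\nabla f(\mathbf{z})\|_2$ uniformly over $\mathbf{z}\in\mathbb{R}^d$. Computing, $\nabla f(\mathbf{z}) = -(\mathbf{z}-\mathbf{y})\,e^{-\|\mathbf{z}-\mathbf{y}\|_2^2/2}$, so $\|\nabla f(\mathbf{z})\|_2 = \|\mathbf{z}-\mathbf{y}\|_2\, e^{-\|\mathbf{z}-\mathbf{y}\|_2^2/2}$. Setting $r = \|\mathbf{z}-\mathbf{y}\|_2 \ge 0$, I need $\max_{r\ge 0} r e^{-r^2/2}$. Differentiating $g(r) = re^{-r^2/2}$ gives $g'(r) = (1-r^2)e^{-r^2/2}$, which vanishes at $r=1$, and $g(1) = e^{-1/2} = 1/\sqrt{e}$. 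Hence $\|\nabla f(\mathbf{z})\|_2 \le 1/\sqrt{e}$ for all $\mathbf{z}$, giving $|f(\mathbf{x})-f(\mathbf{x'})| \le \alpha/\sqrt{e}$, and combined with the trivial bound $1$ we obtain $\max_{\mathbf{y}} |K(\mathbf{x},\mathbf{y})-K(\mathbf{x'},\mathbf{y})| \le \min(1,\alpha/\sqrt{e})$.

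There is no real obstacle here; the lemma is elementary. The only point requiring a moment's care is that the mean value theorem for a multivariate scalar function must be applied along the connecting segment (not coordinatewise), and that the supremum of $r e^{-r^2/2}$ is attained at an interior critical point rather than at $r=0$ or $r\to\infty$ — both of which give $0$, so the maximum is genuinely $1/\sqrt{e}$. One could alternatively avoid calculus entirely by writing $K(\mathbf{x},\mathbf{y}) - K(\mathbf{x'},\mathbf{y})$ as an integral of the derivative along the segment and bounding the integrand pointwise, which yields the same constant.
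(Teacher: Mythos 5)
Your proof is correct and follows essentially the same route as the paper: both arguments come down to the Lipschitz constant $1/\sqrt{e}$ obtained from maximizing $r\,e^{-r^2/2}$, combined with the trivial bound $|K|\le 1$. The only cosmetic difference is that the paper first reduces to one dimension via the reverse triangle inequality applied to the distances $\|\mathbf{x}-\mathbf{y}\|_2$ and $\|\mathbf{x'}-\mathbf{y}\|_2$ and then bounds $|f'|$ for $f(r)=e^{-r^2/2}$, whereas you apply the multivariate mean value theorem along the segment and bound $\|\nabla f\|_2$ directly.
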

\begin{proof}
First, by triangle inequality and the assumption that $||x-x'||_2\le \alpha$, we have
\begin{align}\label{eq:triangleineq1}
    ||\mathbf{x}-\mathbf{y}||_2 - ||\mathbf{x'}-\mathbf{y}||_2 \le ||\mathbf{x}-\mathbf{x'}||_2 \le \alpha. 
\end{align}
For $f(x) = e^{-\frac{x^2}{2}}$, we have
\begin{align}\label{eq:maxderivative}
    \max_{x\in \mathbb{R}} |f'(x)| = \frac{1}{\sqrt{e}},
\end{align}
and thus
\begin{align*}
    \max_{\mathbf{y}\in \mathbb{R}^d}|K(\mathbf{x},\mathbf{y})-K(\mathbf{x'},\mathbf{y})|\le \frac{\alpha}{\sqrt{e}}.
\end{align*}
It remains to note that $0\leq K(\mathbf{x},\mathbf{y})\leq 1$ and thus also $\max_{\mathbf{y}\in \mathbb{R}^d}|K(\mathbf{x},\mathbf{y})-K(\mathbf{x'},\mathbf{y})|\le 1$.
\end{proof}

\begin{lem}[Error analysis of rounding to centers]\label{claim:rounding-centers}
Let $P\subset \mathbb{R}^d$ be the input dataset and let $P':=\{(c_1,v_1),\ldots,(c_J,v_J)\}$, where vector of $J$ point counts $\mathbf{v} \in \mathbb{R}^J$, and centers of $J$ bins $c_{1},\dots, c_{J}$ are defined in lines~\ref{line:v} and \ref{line:c} of Algorithm~\ref{alg:rounding-new}, respectively. Then for the KDE metric between $P'$ and $P$ we have
\begin{align}\label{eq:tripart1}
    \sup_{x\in \mathbb{R}^d}|\KD_{P'}(x) -\KD_{P}(x)|\le \max\left(\frac{w\sqrt{d}}{2\sqrt{e}},1\right).
\end{align}
\end{lem}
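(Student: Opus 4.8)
The plan is to derive the claim by averaging the per-point perturbation bound of Lemma~\ref{lem:rounding}. First I would rewrite $\KD_{P'}$ in a form that makes the comparison with $\KD_P$ transparent. Write $n := |P|$ and let $b(p) \in [J]$ be the index of the unique bin containing a point $p$; then the weight $v_j$ attached to center $c_j$ is exactly the number of points of $P$ landing in bin $j$, so (after the standard normalization by $n$)
\begin{align*}
\KD_{P'}(x) = \frac{1}{n}\sum_{j=1}^{J} v_j\, K(x,c_j) = \frac{1}{n}\sum_{p\in P} K\bigl(x, c_{b(p)}\bigr),
\qquad
\KD_{P}(x) = \frac{1}{n}\sum_{p\in P} K(x,p).
\end{align*}
Subtracting and applying the triangle inequality summand by summand gives, for every $x \in \mathbb{R}^d$,
\begin{align*}
\bigl|\KD_{P'}(x) - \KD_{P}(x)\bigr| \le \frac{1}{n}\sum_{p\in P}\bigl|K\bigl(x,c_{b(p)}\bigr) - K(x,p)\bigr|,
\end{align*}
so it suffices to bound each summand uniformly in $x$.

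Next I would invoke the geometry of the grid: each bin is an axis-aligned cube of side $w$, hence a point $p$ and the center $c_{b(p)}$ of its bin lie within half the main diagonal, $\|p - c_{b(p)}\|_2 \le \tfrac{w\sqrt{d}}{2}$. Lemma~\ref{lem:rounding}, applied with $\mathbf{x} \leftarrow p$, $\mathbf{x'} \leftarrow c_{b(p)}$ and $\alpha = \tfrac{w\sqrt{d}}{2}$, bounds $\max_{\mathbf{y}}\bigl|K(\mathbf{x},\mathbf{y}) - K(\mathbf{x'},\mathbf{y})\bigr|$, and hence in particular, taking $\mathbf{y} = x$,
\begin{align*}
\bigl|K\bigl(x, c_{b(p)}\bigr) - K(x,p)\bigr| \le \min\!\left(1, \frac{w\sqrt{d}}{2\sqrt{e}}\right)
\end{align*}
for every $p \in P$ and every $x \in \mathbb{R}^d$. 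Substituting into the averaged inequality and taking the supremum over $x$ yields $\sup_x \bigl|\KD_{P'}(x) - \KD_P(x)\bigr| \le \min\!\bigl(1, \tfrac{w\sqrt{d}}{2\sqrt{e}}\bigr)$, which is if anything sharper than the stated bound since $\min(1,t) \le \max(1,t)$ for all $t \ge 0$; relaxing $\min$ to $\max$ gives the claimed inequality.

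I do not expect a genuine obstacle: the lemma is essentially Lemma~\ref{lem:rounding} averaged over $P$. The only places needing a word of care are (i) the claim that the weighted dataset $P'$ induces precisely the ``each point rounded to its bin center'' kernel density, which uses only that the weights are the bin counts (normalized by $n$), and (ii) the elementary diagonal bound $\|p - c_{b(p)}\|_2 \le \tfrac{w\sqrt{d}}{2}$, which implicitly relies on the enclosing hypercube of edge $R$ being tiled by $J = (R/w)^d$ cubes of edge exactly $w$ so that every point has a well-defined bin. One could additionally remark that the $\min$ form shows the bound degenerates to the trivial $\le 1$ only for unreasonably large $w$, and is otherwise $\tfrac{w\sqrt{d}}{2\sqrt{e}}$, which is the form actually used in Theorem~\ref{thm:general-1}.
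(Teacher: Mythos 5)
Your proof is correct and follows essentially the same route as the paper's: bound $\|p - c_{b(p)}\|_2 \le \tfrac{w\sqrt{d}}{2}$ and plug this into Lemma~\ref{lem:rounding}, averaging over the points of $P$. Your observation that the argument actually yields $\min\bigl(1,\tfrac{w\sqrt{d}}{2\sqrt{e}}\bigr)$ rather than the stated $\max$ is right --- the $\max$ in the lemma statement appears to be a typo, and the sharper $\tfrac{w\sqrt{d}}{2\sqrt{e}}$ term is what the paper itself uses in Theorem~\ref{thm:general-1}.
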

\begin{proof}
For any point $x\in \mathbb{R}^d$ that belongs to a bin with a center $c\in \mathbb{R}^d$ we have
\begin{align*}
    ||x-c||_2\le \frac{w\sqrt{d}}{2}.
\end{align*}
Plugging the above bound in Lemma~\ref{lem:rounding} completes the proof.
\end{proof}

\begin{lem}\label{cor:tailboundslaplace}
If total number of bins $J$ in Algorithm \ref{alg:rounding-new} is such that $J\leq n^{C}$, then for each bin $i\in[J]$, the noise term $\mathbf{w}_i=\mathbf{\wt{v}}_i-\mathbf{v}_i$ added in step \ref{line:laplace-noise} is such that
\begin{align*}
         \mathbf{w}_i \le \frac{4C\log n}{\epsilon},
     \end{align*}
     with probability at least $1-\frac{1}{n^C}$.
\end{lem}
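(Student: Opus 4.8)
The plan is to combine a single-bin Laplace tail bound with a union bound over the $J$ bins. First I would recall the concentration for one coordinate: for $X \sim \textsc{Lap}(2/\epsilon)$, Remark~\ref{fact:tailbound4laplace} gives $\Pr[X \ge \alpha] = \tfrac12 e^{-\alpha\epsilon/2}$ for $\alpha \ge 0$, so in particular $\Pr[X \ge \tfrac{4C\log n}{\epsilon}] = \tfrac12 e^{-2C\log n} = \tfrac12 n^{-2C} \le n^{-2C}$. (One could equally cite Remark~\ref{rem laplace tail log}, which records the two-sided version $\Pr[|X| \ge \tfrac{4C\log n}{\epsilon}] = n^{-2C}$; since $\mathbf{w}_i \le |\mathbf{w}_i|$, this immediately bounds the one-sided event we care about.) Each noise term $\mathbf{w}_i = \wt{\mathbf{v}}_i - \mathbf{v}_i$ added in line~\ref{line:laplace-noise} is exactly a draw from $\textsc{Lap}(2/\epsilon)$, so this tail bound applies verbatim to each $i \in [J]$.

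Next I would apply a union bound over the $J$ bins. Using $J \le n^{C}$, the probability that \emph{some} bin has $\mathbf{w}_i > \tfrac{4C\log n}{\epsilon}$ is at most $J \cdot n^{-2C} \le n^{C} \cdot n^{-2C} = n^{-C}$. Hence with probability at least $1 - n^{-C}$ we have $\mathbf{w}_i \le \tfrac{4C\log n}{\epsilon}$ simultaneously for all $i$, which is the claimed statement. (Strictly the single-bin bound gives $\tfrac12 n^{-2C}$, so the union bound actually yields $\tfrac12 n^{-C}$, comfortably inside the stated $1 - n^{-C}$; I would just use the weaker clean constant.)

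There is really no substantive obstacle here — the statement is a routine Chernoff-style union bound, and every ingredient is already isolated in the auxiliary remarks. The only point requiring a word of care is whether the lemma intends the bound to hold for a \emph{fixed} bin $i$ (in which case no union bound is needed and the probability is $1 - n^{-2C}$, or $1 - \tfrac12 n^{-2C}$ sharply) or \emph{uniformly over all} $i \in [J]$ (which is what the hypothesis $J \le n^{C}$ is there to support, and what later proofs will want); I would state and prove the uniform version, since that is the form in which it is used downstream, and note that it follows from the per-bin bound by a union bound over $[J]$.
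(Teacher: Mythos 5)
Your proposal is correct and matches the paper's own proof, which is exactly the one-line combination of Remark~\ref{rem laplace tail log} with a union bound over the $J\le n^C$ bins. Your added remark that the lemma should be read as holding uniformly over all $i\in[J]$ (which is what the hypothesis $J\le n^C$ is for) is the right reading and is how the lemma is used downstream.
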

\begin{proof}
    This is a consequence of Remark \ref{rem laplace tail log} and union bound argument.
\end{proof}
\begin{lem}[Bound on the total noisy count]\label{cor:v-tilde}
For $J$-dimensional vector of noisy point counts $\mathbf{\wt{v}}$ defined in line \ref{line:laplace-noise} of Algorithm \ref{alg:rounding-new}, with probability at least $1-\frac{1}{n^C}$ we have
\begin{align*}
    n - \frac{4CJ\log n}{\epsilon} \le |\mathbf{\wt{v}}|\le n+\frac{4CJ\log n}{\epsilon}.
\end{align*}
\end{lem}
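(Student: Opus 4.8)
The plan is to bound the deviation of the total noisy count $|\wt{\mathbf v}| = \sum_{i=1}^{J}\wt{\mathbf v}_i$ from the true total $n = \sum_{i=1}^{J}\mathbf v_i$ by controlling the aggregate Laplace noise $\sum_{i=1}^{J}\mathbf w_i$, where $\mathbf w_i = \wt{\mathbf v}_i - \mathbf v_i$. Since $|\mathbf v|=n$ exactly (the point counts partition the dataset), it suffices to show that $\bigl|\sum_{i=1}^{J}\mathbf w_i\bigr| \le \frac{4CJ\log n}{\epsilon}$ with probability at least $1-n^{-C}$, and then the claimed two-sided bound on $|\wt{\mathbf v}|$ follows immediately by the triangle inequality.

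First I would invoke Lemma~\ref{cor:tailboundslaplace}, which is stated just above and gives, under the hypothesis $J \le n^{C}$, that for \emph{every} $i \in [J]$ simultaneously we have $\mathbf w_i \le \frac{4C\log n}{\epsilon}$ with probability at least $1 - n^{-C}$. Note that the same union-bound argument (applied to $-\mathbf w_i$, using symmetry of the Laplace distribution as in Remark~\ref{rem laplace tail log}, which already bounds $|X|$) in fact yields $|\mathbf w_i| \le \frac{4C\log n}{\epsilon}$ for all $i \in [J]$ on the same high-probability event. Condition on this event. Then summing over the $J$ bins and applying the triangle inequality gives
\begin{align*}
    \Bigl|\,|\wt{\mathbf v}| - n\,\Bigr| = \Bigl|\sum_{i=1}^{J}\mathbf w_i\Bigr| \le \sum_{i=1}^{J}|\mathbf w_i| \le J\cdot\frac{4C\log n}{\epsilon} = \frac{4CJ\log n}{\epsilon},
\end{align*}
which rearranges to the two desired inequalities $n - \frac{4CJ\log n}{\epsilon} \le |\wt{\mathbf v}| \le n + \frac{4CJ\log n}{\epsilon}$.

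This argument is essentially a one-line consequence of the preceding lemma, so there is no real obstacle; the only point requiring a small amount of care is ensuring the per-bin bound holds as a \emph{uniform} (two-sided) statement over all $J$ bins with the stated failure probability $n^{-C}$, which is exactly what the union bound in Lemma~\ref{cor:tailboundslaplace} delivers (the factor $J \le n^{C}$ is absorbed into the $n^{-2C}$ per-bin tail from Remark~\ref{rem laplace tail log}). One could alternatively obtain a tighter bound by using concentration of a sum of $J$ i.i.d.\ Laplace variables directly (the typical deviation is $O(\sqrt{J}/\epsilon)$ rather than $O(J/\epsilon)$), but the crude per-bin union bound already suffices for the downstream utility analysis and keeps the statement clean.
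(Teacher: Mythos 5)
Your proposal is correct and matches the paper's argument, which likewise derives the bound as a direct consequence of Lemma~\ref{cor:tailboundslaplace} together with the fact that $|\mathbf{v}|=n$. Your added remark that the two-sided per-bin bound follows from Remark~\ref{rem laplace tail log} (which controls $|X|$) is a worthwhile clarification of a detail the paper leaves implicit, but it is not a different route.
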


\begin{proof}
    This is a consequence of Lemma \ref{cor:tailboundslaplace} and the fact that initial size of dataset is $n$.
\end{proof}

\begin{lem}[Error analysis of noise addition]\label{lem:data_ind_no_filter_Laplace_contribution}
Let $P':=\{(c_1,v_1),\ldots,(c_J,v_J)\}$, where vector of $J$ point counts $\mathbf{v} \in \mathbb{R}^J$, and centers of $J$ bins $c_{1},\dots, c_{J}$ are defined in lines~\ref{line:v} and \ref{line:c} of Algorithm~\ref{alg:rounding-new}, respectively. Let $Q:=\left\{(c_1,\wt{\mathbf{v}}_1),(c_2,\wt{\mathbf{v}}_2),\ldots,(c_J,\wt{\mathbf{v}}_J)\right\}$ be the noisy output of the algorithm. Then with probability at least $1-\frac{1}{n^{C}}$ we have
\begin{align}
     \sup_{x\in\mathbb{R}^d}|\KD_Q(x)-\KD_{P'}(x)| \le \frac{8CJ\log n}{\epsilon n - 4CJ\log n}.
\end{align}
\end{lem}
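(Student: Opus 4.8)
The plan is to compare the two weighted kernel density estimates term by term after putting them over a common denominator. Write $\mathbf{w}_i := \wt{\mathbf{v}}_i - \mathbf{v}_i$ for the Laplace noise added in line~\ref{line:laplace-noise}, and set $W := \sum_{i=1}^J \mathbf{w}_i$. Since every one of the $n$ points of $P$ falls in exactly one bin, $|\mathbf{v}| = n$, and therefore $|\wt{\mathbf{v}}| = n + W$. The weight $P'$ assigns to $c_i$ is $\mathbf{v}_i/n$ and the weight $Q$ assigns to $c_i$ is $\wt{\mathbf{v}}_i/|\wt{\mathbf{v}}| = (\mathbf{v}_i+\mathbf{w}_i)/(n+W)$. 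First I would record the algebraic identity
\[
\frac{\mathbf{v}_i+\mathbf{w}_i}{n+W} - \frac{\mathbf{v}_i}{n} = \frac{n\mathbf{w}_i - \mathbf{v}_i W}{n(n+W)},
\]
so that for every $x \in \mathbb{R}^d$,
\[
\KD_Q(x) - \KD_{P'}(x) = \frac{1}{n(n+W)}\Bigl(n\sum_i \mathbf{w}_i K(x,c_i) - W\sum_i \mathbf{v}_i K(x,c_i)\Bigr).
\]

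Next I would bound the two sums in the numerator uniformly in $x$, using $0 \le K(\cdot,\cdot) \le 1$ and $\mathbf{v}_i \ge 0$: namely $\bigl|\sum_i \mathbf{w}_i K(x,c_i)\bigr| \le \sum_i |\mathbf{w}_i|$, $\bigl|\sum_i \mathbf{v}_i K(x,c_i)\bigr| \le \sum_i \mathbf{v}_i = n$, and likewise $|W| \le \sum_i |\mathbf{w}_i|$. Conditioning on the high-probability event of Lemma~\ref{cor:tailboundslaplace} (valid with probability at least $1 - n^{-C}$ since $J \le n^C$), every $|\mathbf{w}_i| \le 4C\log n/\epsilon$, whence $\sum_i|\mathbf{w}_i| \le 4CJ\log n/\epsilon$ and $|W| \le 4CJ\log n/\epsilon$. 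Substituting these estimates, the absolute value of the numerator is at most $n\cdot\frac{4CJ\log n}{\epsilon} + \frac{4CJ\log n}{\epsilon}\cdot n = \frac{8CJn\log n}{\epsilon}$.

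For the denominator I would invoke Lemma~\ref{cor:v-tilde} (on the same event) to get $|n+W| = |\wt{\mathbf{v}}| \ge n - 4CJ\log n/\epsilon$, which is positive precisely in the regime where the claimed bound is meaningful. Dividing, $\sup_{x}|\KD_Q(x)-\KD_{P'}(x)| \le \frac{8CJn\log n/\epsilon}{n(n - 4CJ\log n/\epsilon)} = \frac{8CJ\log n}{\epsilon n - 4CJ\log n}$, which is the assertion. The one point requiring genuine care — and the step I expect to be the main obstacle — is the renormalization: because the weights of $Q$ are forced to sum to $1$, the perturbation $W$ of the \emph{total} mass enters every term of the KDE, and it must be controlled uniformly over all $x$. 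The common-denominator rewriting above is what isolates this effect and reduces the whole estimate to the per-bin Laplace tail bound and the total-mass bound already established in Lemmas~\ref{cor:tailboundslaplace} and~\ref{cor:v-tilde}.
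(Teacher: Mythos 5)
Your proof is correct and follows essentially the same route as the paper's: both arguments reduce the uniform KDE difference to the renormalization term $|W|/|\wt{\mathbf{v}}|$ plus the aggregate noise term $\sum_i|\mathbf{w}_i|/|\wt{\mathbf{v}}|$ (your common-denominator identity is just a rearrangement of the paper's decomposition $\tfrac{n}{|\wt{\mathbf{v}}|}\wt{\mathbf{v}}_i-\mathbf{v}_i=\mathbf{v}_i(\tfrac{n}{|\wt{\mathbf{v}}|}-1)+\tfrac{n}{|\wt{\mathbf{v}}|}\mathbf{w}_i$), and both then invoke the per-bin Laplace tail bound and the total-count bound of Lemmas~\ref{cor:tailboundslaplace} and~\ref{cor:v-tilde} to reach the identical final expression. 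No gaps.
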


\begin{proof}
For any $x\in \mathbb{R}^d$ we have
\begin{align*}
    |\KD_Q(x)-\KD_{P'}(x)| &= \left|\frac{1}{|\wt{\mathbf{v}}|}\sum_{i=1}^{J}\wt{\mathbf{v}}_i K(c_i,x) - \frac{1}{|{\mathbf{v}}|}\sum_{i=1}^{J}{\mathbf{v}}_i K(c_i,x)\right| \\
    &=\frac{1}{n}\left|\sum_{i=1}^{J}K(c_i,x)\left(\frac{n}{|{\mathbf{\wt{v}}}|}\wt{\mathbf{v}}_i  - {\mathbf{v}}_i\right)  \right| 
    \\
     &\le \frac{1}{n}\sum_{i=1}^{J}\left|\frac{n}{|{\mathbf{\wt{v}}}|}\wt{\mathbf{v}}_i  - {\mathbf{v}}_i \right| 
    \end{align*}
where the second equality is the consequence of the fact that the point count in the original dataset is $n$, i.e. $|\mathbf{v}|=n$, and the inequality follows from $K(c_i,x) \le 1$. Let $\mathbf{w}_i$ denote the noise added to the $i$th bin's point count in step \ref{line:laplace-noise} of Algorithm \ref{alg:rounding-new}, so that $\mathbf{w}_i=\mathbf{\wt{v}}_i-\mathbf{v}_i$. Then we have
    \begin{align}
     |\KD_Q(x)-\KD_{P'}(x)| 
     &\le \frac{1}{n}\sum_{i=1}^{J}\left|\mathbf{v}_i\left(\frac{n}{|{\mathbf{\wt{v}}}|}-1\right)  + \frac{n}{|{\mathbf{\wt{v}}}|}{\mathbf{w}}_i \right|\nonumber\\
     &\le \frac{1}{n}\sum_{i=1}^{J}\left|\mathbf{v}_i\left(\frac{n}{|{\mathbf{\wt{v}}}|}-1\right) \right| +\frac{1}{n}\sum_{i=1}^{J}\left| \frac{n}{|{\mathbf{\wt{v}}}|}{\mathbf{w}}_i \right|\\
     &\le \left|\frac{n}{|{\mathbf{\wt{v}}}|}-1 \right| +\sum_{i=1}^{J}\left| \frac{1}{|{\mathbf{\wt{v}}}|}{\mathbf{w}}_i \right| &&
     \end{align}
     where the second inequality is the triangle inequality and the last one follows as $|\mathbf{v}|=n$.
     Since $\mathbf{w}_i\sim\textsc{Lap}(2/\epsilon)$, by Lemma~\ref{cor:v-tilde} we have
     \begin{align}
     |\KD_Q(x)-\KD_{P'}(x)| &\le \left|\frac{n}{n - \frac{4CJ\log n}{\epsilon}}-1 \right| +\frac{1}{n - \frac{4CJ\log n}{\epsilon}}\sum_{i=1}^{J}\left| {\mathbf{w}}_i \right| \nonumber\\
     &\le \frac{n}{n - \frac{4CJ\log n}{\epsilon}}-1  +\frac{1}{n - \frac{4CJ\log n}{\epsilon}}\frac{4CJ\log n}{\epsilon}\\
    &= \frac{8CJ\log n}{\epsilon n - 4CJ\log n}.\label{eq:Q2Pprime}
\end{align}
with probability $1-\frac{1}{n^{C}}$. The second inequality is the consequence of Lemma \ref{cor:tailboundslaplace}.
\end{proof}

\begin{proofof}{Theorem \ref{thm:general-1}}
    This is a consequence of Lemma \ref{claim:rounding-centers} and Lemma \ref{lem:data_ind_no_filter_Laplace_contribution} for $C$ such that $\delta = \frac{1}{n^{C}}$. Triangle inequality completes the proof.
\end{proofof}

\section{Data independant approach: $t>0$ case}\label{Appendix: data indep with filtering}

\begin{lem}[Algorithm~\ref{alg:rounding-new} filters out all $t/2$-light bins]\label{claim:lightbinssurviving}
If for $t=\frac{8C\log n}{\epsilon}$ and the total number of bins $J$ we have $J\leq n^{C}$ for some constant $C$, then with probability at least $1-\frac{1}{2}n^{-C}$ all $t/2$-light bins will be filtered out by step \ref{line:filtering} of Algorithm \ref{alg:rounding-new}.
\end{lem}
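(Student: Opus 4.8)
The plan is to show that every $t/2$-light bin receives, with high probability, a Laplace perturbation too small to push its noisy count above the filtering threshold $t$, and then to union-bound over all such bins.

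First I would fix a $t/2$-light bin $i$, so that by definition its noiseless count satisfies $\mathbf{v}_i < t/2$. In line~\ref{line:laplace-noise} the algorithm sets $\wt{\mathbf{v}}_i = \mathbf{v}_i + \mathbf{w}_i$ with $\mathbf{w}_i \sim \textsc{Lap}(2/\epsilon)$. For this bin to survive the filtering step~\ref{line:filtering} we would need $\wt{\mathbf{v}}_i \ge t$, i.e.\ $\mathbf{w}_i \ge t - \mathbf{v}_i$; since $\mathbf{v}_i < t/2$ this forces $\mathbf{w}_i > t/2$. Hence it suffices to bound $\Pr[\mathbf{w}_i \ge t/2]$. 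Note this is precisely where the one-sided nature of the estimate matters: we only need to control the upper tail of the Laplace noise, which is what ultimately produces the factor $1/2$ in the claimed probability.

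Next I would invoke the Laplace tail bound from Remark~\ref{fact:tailbound4laplace}, which gives $\Pr[\mathbf{w}_i \ge t/2] = \frac{1}{2} e^{-\epsilon t/4}$. Substituting $t = \frac{8C\log n}{\epsilon}$ yields $\epsilon t/4 = 2C\log n$, and therefore $\Pr[\mathbf{w}_i \ge t/2] = \frac{1}{2} n^{-2C}$.

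Finally I would take a union bound over all $t/2$-light bins. There are at most $J \le n^{C}$ of them, so the probability that some $t/2$-light bin survives filtering is at most $J \cdot \frac{1}{2} n^{-2C} \le \frac{1}{2} n^{-C}$. Consequently, with probability at least $1 - \frac{1}{2} n^{-C}$, every $t/2$-light bin is filtered out, which is the statement. There is no substantial obstacle here; the only steps requiring care are to use the one-sided Laplace tail (rather than $\Pr[|\mathbf{w}_i|\ge t/2]$), so that the bound is $\frac{1}{2} n^{-C}$ rather than $n^{-C}$, and to note that the union bound need only range over $t/2$-light bins, though bounding their number by the total bin count $J$ already suffices.
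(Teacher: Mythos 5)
Your proof is correct and follows essentially the same route as the paper's: a one-sided Laplace tail bound giving $\frac{1}{2}n^{-2C}$ per $t/2$-light bin (the paper cites its Remark on the tail at $4C\log n/\epsilon = t/2$, which is the same computation you do from the explicit Laplace tail formula), followed by a union bound over at most $J\le n^{C}$ bins. Your write-up simply makes explicit the intermediate step that surviving filtering forces the noise to exceed $t/2$.
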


\begin{proof}
For $t=\frac{8C\log n}{\epsilon}$, since we are adding $\text{Lap}(2/\epsilon)$ noise the probability of a bin with point count less than $t/2$ having noisy point count more than $t$ is upper bounded by $\frac{1}{2} n^{-2C}$ (see Remark \ref{rem laplace tail log}).

Union bound over $J\leq n^C$ bins completes the proof.
\end{proof}
\begin{lem}[Algorithm~\ref{alg:rounding-new} does not filter any $3t/2$-heavy bins]\label{claim:heavybinsnotsurvive} 
If for $t=\frac{8C\log n}{\epsilon}$ and the total number of bins $J$ we have $J\leq n^{C}$ for some constant $C$, then with probability at least $1-\frac{1}{2}n^{-C}$ no $3t/2$-heavy bin gets filtered out by step \ref{line:filtering} of Algorithm \ref{alg:rounding-new}.
 Algorithm~\ref{alg:rounding-new} does not filter any $3t/2$-heavy bin with probability at least
    $1-\frac{1}{2}n^{-C}$.
\end{lem}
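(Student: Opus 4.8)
The plan is to mirror the argument used for Lemma~\ref{claim:lightbinssurviving}, exchanging the roles of the two tails of the Laplace distribution. A $3t/2$-heavy bin has noiseless point count $v_i \geq 3t/2$, and it is removed in the filtering step (line~\ref{line:filtering}) precisely when its noisy count $\wt{\mathbf{v}}_i = v_i + \mathbf{w}_i$ falls below $t$, where $\mathbf{w}_i \sim \textsc{Lap}(2/\epsilon)$. Since $v_i \geq 3t/2$, this requires $\mathbf{w}_i < t - v_i \leq -t/2$. So the first step is simply to reduce the event ``a fixed $3t/2$-heavy bin is filtered out'' to the event $\{\mathbf{w}_i \leq -t/2\}$.

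Next I would bound the probability of this event. By symmetry of the Laplace distribution and Remark~\ref{fact:tailbound4laplace}, $\Pr[\mathbf{w}_i \leq -t/2] = \Pr[\mathbf{w}_i \geq t/2] = \tfrac{1}{2}e^{-\epsilon t/4}$. Plugging in $t = \frac{8C\log n}{\epsilon}$ gives $\tfrac{1}{2}e^{-2C\log n} = \tfrac{1}{2}n^{-2C}$; equivalently this is the $\alpha = t/2$ instance of Remark~\ref{rem laplace tail log}. The final step is a union bound over the at most $J \leq n^C$ bins: the probability that \emph{some} $3t/2$-heavy bin is filtered out is at most $J \cdot \tfrac{1}{2}n^{-2C} \leq \tfrac{1}{2}n^{-C}$, which yields the claimed $1 - \tfrac{1}{2}n^{-C}$ success probability.

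There is no real obstacle here; the lemma is the dual of Lemma~\ref{claim:lightbinssurviving}, and the only points requiring care are (i) using the \emph{lower} tail of the Laplace noise rather than the upper tail, and (ii) verifying that the ``slack'' of $t/2$ between the heaviness threshold $3t/2$ and the filtering threshold $t$ is exactly what makes the per-bin failure probability $\tfrac{1}{2}n^{-2C}$, so that the union bound over $n^C$ bins still leaves a $\tfrac{1}{2}n^{-C}$ margin. Combining this lemma with Lemma~\ref{claim:lightbinssurviving} via a further union bound is what underlies the two high-probability claims in the proof sketch of Theorem~\ref{thm:general-2}.
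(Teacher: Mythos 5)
Your proposal is correct and follows essentially the same route as the paper: bound the per-bin failure probability by the one-sided Laplace tail at $t/2$, obtaining $\tfrac{1}{2}n^{-2C}$, then union bound over the at most $n^C$ bins. Your write-up is in fact slightly more careful than the paper's, which cites the two-sided tail bound of Remark~\ref{rem laplace tail log} where only the one-sided tail (as you use, via Remark~\ref{fact:tailbound4laplace} and symmetry) gives the stated factor of $\tfrac{1}{2}$.
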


\begin{proof}
For $t=\frac{8C\log n}{\epsilon}$, since we are adding $\text{Lap}(2/\epsilon)$ noise the probability of a bin with point count at least $3t/2$ having noisy point count less than $t$ is upper bounded by $\frac{1}{2} n^{-2C}$ (see Remark \ref{rem laplace tail log}). Union bound argument over $J<n^{C}$ bins completes the proof.
\end{proof}

\begin{lem}[Noisy point counts]\label{claim:v-tilde-alg-filtering}
If for $t=\frac{8C\log n}{\epsilon}$ and the total number of bins $J$ we have $J\leq n^{C}$ for some constant $C$, then for $J$-dimensional vector of noisy point counts $\mathbf{\wt{v}}$ defined in line \ref{line:laplace-noise} of 
Algorithm~\ref{alg:rounding-new}, with probability at least $1-\frac{1}{n^C}$ we have
\begin{align*}
    n -m- \frac{4CM\log n}{\epsilon} \le |\mathbf{\wt{v}}|\le n+\frac{4CM\log n}{\epsilon},
\end{align*}
where $M$ and $m$ denote the total number of $t/2$-heavy bins and the total number of points in $3t/2$-light bins, respectively.
\end{lem}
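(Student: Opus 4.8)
The plan is to route everything through a \emph{single} high-probability event that simultaneously controls the filtering step and the size of the noise, and then do elementary bookkeeping on which bins survive. Concretely, with $t=\frac{8C\log n}{\epsilon}$ we have $t/2=\frac{4C\log n}{\epsilon}$; let $E$ be the event that $|\mathbf{w}_i|\le \frac{4C\log n}{\epsilon}$ for every bin $i\in[J]$, where $\mathbf{w}_i=\wt{\mathbf{v}}_i-\mathbf{v}_i$ is the Laplace noise added in line~\ref{line:laplace-noise}. By Remark~\ref{rem laplace tail log} each bin violates this with probability $n^{-2C}$, so a union bound over $J\le n^C$ bins (exactly as in Lemma~\ref{cor:tailboundslaplace}) gives $\Pr[E]\ge 1-n^{-C}$. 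I will argue entirely on $E$.

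On $E$ the outcome of the filtering step is pinned down at the level of $t/2$-heavy versus $t/2$-light bins. If bin $i$ is $t/2$-light, i.e.\ $\mathbf{v}_i<t/2$, then $\wt{\mathbf{v}}_i<t/2+\frac{4C\log n}{\epsilon}=t$, so it is zeroed out in line~\ref{line:filtering}; hence every surviving bin is $t/2$-heavy and the number of surviving bins is at most $M$. Symmetrically, if bin $i$ is $3t/2$-heavy, i.e.\ $\mathbf{v}_i\ge 3t/2$, then $\wt{\mathbf{v}}_i\ge 3t/2-\frac{4C\log n}{\epsilon}=t$, so it survives; hence every non-surviving bin is $3t/2$-light, and the total noiseless count sitting in non-surviving bins is at most $m$. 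Writing $S$ for the set of surviving bins and using $\sum_i\mathbf{v}_i=n$, this gives $|S|\le M$ and $n-m\le\sum_{i\in S}\mathbf{v}_i\le n$. (This is essentially the content of Lemmas~\ref{claim:lightbinssurviving} and~\ref{claim:heavybinsnotsurvive}, re-derived inside $E$.)

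To finish, note that zeroed bins contribute nothing, so $|\wt{\mathbf{v}}|=\sum_{i\in S}\wt{\mathbf{v}}_i=\sum_{i\in S}\mathbf{v}_i+\sum_{i\in S}\mathbf{w}_i$, and by the triangle inequality with $|\mathbf{w}_i|\le\frac{4C\log n}{\epsilon}$ and $|S|\le M$ we get $\bigl|\sum_{i\in S}\mathbf{w}_i\bigr|\le\frac{4CM\log n}{\epsilon}$; combining with $n-m\le\sum_{i\in S}\mathbf{v}_i\le n$ yields the two-sided bound on the event $E$, which has probability at least $1-n^{-C}$. I expect the only real care to be the bookkeeping of the middle paragraph — keeping straight that ``surviving $\subseteq$ $t/2$-heavy'' bounds the \emph{number} of terms while ``non-surviving $\subseteq$ $3t/2$-light'' bounds the \emph{missing mass} — and the observation that conditioning on the single event $E$, rather than separately invoking the two earlier lemmas together with a noise bound, is what keeps the failure probability at $n^{-C}$ instead of a larger multiple.
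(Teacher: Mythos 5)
Your proof is correct and follows essentially the same route as the paper's: restrict to surviving bins, use ``surviving $\subseteq t/2$-heavy'' to bound the number of noise terms by $M$ and ``non-surviving $\subseteq 3t/2$-light'' to bound the missing mass by $m$, then apply the per-bin noise bound $\frac{4C\log n}{\epsilon}$. Your one refinement — deriving all three ingredients from the single event $E$ rather than union-bounding three separate lemmas (as the paper does, somewhat loosely, since its stated failure probabilities would otherwise sum to $2n^{-C}$) — is a cleaner way to justify the final $1-n^{-C}$ probability.
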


\begin{proof}
Let $F:=\{i: \mathbf{v}_i>0, \mathbf{\wt{v}}_i = 0\}$, $Z:=\{i: \mathbf{v}_i=0, \mathbf{\wt{v}}_i = 0\}$ and $H:=\{i: \mathbf{\wt{v}}_i>0\}$ denote the set of non empty bins that are filtered out, the set of empty bins that are filtered out and the set of bins that survive filtering, respectively. Note that every bin belongs to one of the three sets i.e. $[J] = F \cup Z\cup H$. We have
\begin{align*}
    |\mathbf{\wt{v}}| &= \sum_{i = 1}^{J} \mathbf{\wt{v}}_i\\
    &= \sum_{i\in F}\mathbf{\wt{v}}_i +\sum_{i\in Z}\mathbf{\wt{v}}_i +\sum_{i\in H}\mathbf{\wt{v}}_i \\
    &= \sum_{i\in H}\mathbf{\wt{v}}_i  \\
    &\le |\mathbf{v}| + \left|H\right|\cdot \frac{4C\log n}{\epsilon}\\
    &\le n + \frac{4CM\log n}{\epsilon},
\end{align*}
where the third equality follows by definition of $F$ and $Z$, and the first inequality is the consequence of Lemma \ref{cor:tailboundslaplace}. The last inequality follows from $|\mathbf{v}|=n$ and the consequence of Lemma \ref{claim:lightbinssurviving} which gives that with probability at least $1-\frac{1}{2}n^{-C}$ any bin that survives filtering is $t/2$-heavy i.e. $|H|\le M$.
On the other hand, we also have
\begin{align*}
    |\mathbf{\wt{v}}| &= \sum_{i = 1}^{J} \mathbf{\wt{v}}_i\\
    &= \sum_{i\in H}\mathbf{\wt{v}}_i \\
    &\ge \sum_{i\in H}\mathbf{v}_i - |H|\cdot\frac{4C\log n}{\epsilon}\\
    &\ge \sum_{i\in H}\mathbf{v}_i - \frac{4CM\log n}{\epsilon}\\
    &\ge n-m - \frac{4CM\log n}{\epsilon}
\end{align*}
where again second equality comes from the definition of $F$ and $Z$, and the first inequality is the consequence of Lemma \ref{cor:tailboundslaplace} and the second inequality follows from $|H|\le M$ as above. Finally, the last inequality is the consequence of Lemma \ref{claim:heavybinsnotsurvive} which gives us that with probability at least $1-\frac{1}{2}n^{-C}$ any bin that gets filtered out is $3t/2$-light and so the total number of filtered out points is upper bounded by $m$. This means that the total number of points in bins that survive filtering is at least $n-m$ i.e. $\sum_{i\in H} \mathbf{v}_i \ge |\mathbf{v}|- m \ge n - m$. Union bound argument completes the proof.
\end{proof}

Now, we analyze the error between kernel density induced by $Q$ and $P'$. For any $q\in \mathbb{R}^d$ we have

\begin{lem}\label{lem appendix Q P' t>0}
     Let $P':=\{(c_1,v_1),\ldots,(c_J,v_J)\}$, where $v$, $c$ are defined as in lines ~\ref{line:v} and \ref{line:c} of Algorithm~\ref{alg:rounding-new}. For $t=\frac{8C\log n}{\epsilon}$ and the total number of bins $J\leq n^{C}$, with probability $1-n^{-C}$ we have

     \begin{align}
     \sup_{x\in\mathbb{R}^{d}}|\KD_Q(x)-\KD_{P'}(x)| \leq \frac{\epsilon m + 8CM\log n}{\epsilon n - \epsilon m - 4CJ\log n}+\frac{m}{n}.
\end{align}
\end{lem}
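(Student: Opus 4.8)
The plan is to repeat, almost verbatim, the computation of Lemma~\ref{lem:data_ind_no_filter_Laplace_contribution}, but to split the bins according to whether they survive the filtering step. Fix $x\in\mathbb{R}^d$. Since the original weights sum to $|\mathbf{v}|=n$ and $0\le K(c_i,x)\le 1$, the same manipulation as in the proof of Lemma~\ref{lem:data_ind_no_filter_Laplace_contribution} gives
\begin{align*}
|\KD_Q(x)-\KD_{P'}(x)| = \frac{1}{n}\left|\sum_{i=1}^{J}K(c_i,x)\Big(\tfrac{n}{|\wt{\mathbf{v}}|}\wt{\mathbf{v}}_i-\mathbf{v}_i\Big)\right| \le \frac{1}{n}\sum_{i=1}^{J}\left|\tfrac{n}{|\wt{\mathbf{v}}|}\wt{\mathbf{v}}_i-\mathbf{v}_i\right|.
\end{align*}
I would then write $[J]=H\cup(F\cup Z)$, where $H$ is the set of bins surviving the filtering step (so $\wt{\mathbf{v}}_i>0$) and $F\cup Z$ the set of bins removed (so $\wt{\mathbf{v}}_i=0$), exactly as in the proof of Lemma~\ref{claim:v-tilde-alg-filtering}, and bound the two parts of the sum separately.

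For a removed bin each term equals $\mathbf{v}_i$, so that part of the sum is $\tfrac1n\sum_{i\in F\cup Z}\mathbf{v}_i$; Lemma~\ref{claim:heavybinsnotsurvive} guarantees that with high probability every removed bin is $3t/2$-light, hence these bins together hold at most $m$ points, and this part is at most $m/n$. For a surviving bin there is no truncation, so $\wt{\mathbf{v}}_i=\mathbf{v}_i+\mathbf{w}_i$ with $\mathbf{w}_i\sim\textsc{Lap}(2/\epsilon)$; decomposing $\tfrac{n}{|\wt{\mathbf{v}}|}\wt{\mathbf{v}}_i-\mathbf{v}_i=\mathbf{v}_i\big(\tfrac{n}{|\wt{\mathbf{v}}|}-1\big)+\tfrac{n}{|\wt{\mathbf{v}}|}\mathbf{w}_i$, applying the triangle inequality, and using $\mathbf{v}_i\ge 0$ together with $\sum_{i\in H}\mathbf{v}_i\le n$, the $H$-part is bounded by $\big|\tfrac{n}{|\wt{\mathbf{v}}|}-1\big|+\tfrac{1}{|\wt{\mathbf{v}}|}\sum_{i\in H}|\mathbf{w}_i|$.

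It then remains to insert the quantitative estimates already in hand: Lemma~\ref{cor:tailboundslaplace} gives $|\mathbf{w}_i|\le \tfrac{4C\log n}{\epsilon}$ for every $i$, Lemma~\ref{claim:lightbinssurviving} gives $|H|\le M$ (hence $\sum_{i\in H}|\mathbf{w}_i|\le \tfrac{4CM\log n}{\epsilon}$), and Lemma~\ref{claim:v-tilde-alg-filtering} gives $|\wt{\mathbf{v}}|\ge n-m-\tfrac{4CM\log n}{\epsilon}$ together with $\big|n-|\wt{\mathbf{v}}|\big|\le m+\tfrac{4CM\log n}{\epsilon}$. Substituting, adding the three contributions, and clearing $\epsilon$ from the denominator yields
\begin{align*}
\sup_{x\in\mathbb{R}^d}|\KD_Q(x)-\KD_{P'}(x)| \le \frac{\epsilon m+8CM\log n}{\epsilon n-\epsilon m-4CM\log n}+\frac{m}{n},
\end{align*}
which implies the stated bound because $M\le J$, so replacing $M$ by $J$ in the negative term of the denominator only enlarges the right-hand side. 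A union bound over the failure events of Lemmas~\ref{claim:lightbinssurviving}, \ref{claim:heavybinsnotsurvive} and \ref{cor:tailboundslaplace}—which are precisely the events underlying Lemma~\ref{claim:v-tilde-alg-filtering}—gives the probability $1-n^{-C}$. I expect the only delicate points to be the clean separation of the sum into surviving and removed bins and keeping the absolute values straight in $\big|\tfrac{n}{|\wt{\mathbf{v}}|}-1\big|$, since $|\wt{\mathbf{v}}|$ may exceed $n$ after the noise is added; everything else is a direct repetition of the $t=0$ computation.
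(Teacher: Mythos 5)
Your proposal is correct and follows essentially the same route as the paper's proof: the same rewriting of the difference, the same split of $[J]$ into surviving and removed bins, the $m/n$ contribution from removed ($3t/2$-light) bins via Lemma~\ref{claim:heavybinsnotsurvive}, and the same use of Lemmas~\ref{cor:tailboundslaplace}, \ref{claim:lightbinssurviving} and \ref{claim:v-tilde-alg-filtering} to control the surviving part. Your explicit remark that $M\le J$ is needed to reconcile the denominator with the stated bound is in fact slightly more careful than the paper's own write-up, which silently switches between $M$ and $J$ in that term.
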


\begin{proof}
Let $H:=\{i: \mathbf{\wt{v}}_i>0\}$ denote the bins that survive filtering step. We have
\begin{align*}
    |\KD_Q(x)-\KD_{P'}(x)| &= \left|\frac{1}{|\wt{\mathbf{v}}|}\sum_{i=1}^{J}\wt{\mathbf{v}}_i K(c_i,x) - \frac{1}{|{\mathbf{v}}|}\sum_{i=1}^{J}{\mathbf{v}}_i K(c_i,x)\right| \\
    &=\frac{1}{n}\left|\sum_{i=1}^{J}K(c_i,x)\left(\frac{n}{|{\mathbf{\wt{v}}}|}\wt{\mathbf{v}}_i  - {\mathbf{v}}_i\right)  \right|\\
     &\le \frac{1}{n}\sum_{i=1}^{J}\left|\frac{n}{|{\mathbf{\wt{v}}}|}\wt{\mathbf{v}}_i  - {\mathbf{v}}_i \right|\\
     &\le \frac{1}{n}\left(\sum_{i\in H}\left|\frac{n}{|{\mathbf{\wt{v}}}|}\wt{\mathbf{v}}_i  - {\mathbf{v}}_i \right| + \sum_{i\not\in H} |\mathbf{v}_i|\right)\\
     &\le\frac{1}{n}\left(\sum_{i\in H}\left|\frac{n}{|{\mathbf{\wt{v}}}|}\wt{\mathbf{v}}_i  - {\mathbf{v}}_i \right| + m\right)
    \end{align*}
where the second equality holds since $|\mathbf{v}|=n$, and the first inequality is the consequence of $K(c_i,x)\leq 1$. The second inequality follows by the definition of $H$, and the final one is the consequence of Lemma \ref{claim:heavybinsnotsurvive}, as with probability at least $1-\frac{1}{2}n^{-C}$ any bin that is filtered out must be $3t/2$-light. Let $\mathbf{w}_i=\mathbf{\wt{v}}_i-\mathbf{v}_i$, then $\mathbf{w}_i\sim\textsc{Lap}(2/\epsilon)$. Then we further have
    \begin{align}
     |\KD_Q(x)-\KD_{P'}(x)| 
     &\le\frac{1}{n}\left(\sum_{i\in H}\left|\frac{n}{|{\mathbf{\wt{v}}}|}\wt{\mathbf{v}}_i  - {\mathbf{v}}_i \right| + m\right)\\
     &\le \frac{1}{n}\left(\sum_{i\in H}\left|\mathbf{v}_i\left(\frac{n}{|{\mathbf{\wt{v}}}|}-1\right)  + \frac{n}{|{\mathbf{\wt{v}}}|}{\mathbf{w}}_i \right|+m\right)\nonumber\\
     &\le \frac{1}{n}\left(\sum_{i\in  H}\left|\mathbf{v}_i\left(\frac{n}{|{\mathbf{\wt{v}}}|}-1\right) \right| +\frac{1}{n}\sum_{i\in H}\left| \frac{n}{|{\mathbf{\wt{v}}}|}{\mathbf{w}}_i \right|+m\right)\nonumber\\
     &\le \left|\frac{n}{|{\mathbf{\wt{v}}}|}-1 \right| +\sum_{i\in H}\left| \frac{1}{|{\mathbf{\wt{v}}}|}{\mathbf{w}}_i \right|+\frac{m}{n}\nonumber
     \end{align}
where the third is the triangle inequality and the last one follows since $|\mathbf{v}|=n$. Let $n':=n -m- \frac{4CM\log n}{\epsilon}$. By Lemma \ref{claim:v-tilde-alg-filtering} with probability $1-n^{-C}$ we have
     \begin{align}
     |\KD_Q(x)-\KD_{P'}(x)| &\le \left|\frac{n}{n'}-1 \right| +\frac{1}{n'}\sum_{i=1}^{J}\left| {\mathbf{w}}_i \right| + \frac{m}{n} \nonumber\\
     &\le \frac{n}{n'}-1  +\frac{4CM\log n}{n'\epsilon}+\frac{m}{n}\nonumber\\
    &= \frac{\epsilon m + 8CM\log n}{\epsilon n - \epsilon m - 4CJ\log n}+\frac{m}{n}.\label{eq:Q2Pprime-alg-filtering}
\end{align}
where the second inequality holds by Lemma \ref{cor:tailboundslaplace} 
\end{proof}

\begin{proofof}{Theorem \ref{thm:general-2}}

As a consequence of triangle inequality and Lemmas \ref{claim:rounding-centers} and \ref{lem appendix Q P' t>0}, for $C$ such that $\delta = n^{-C}$ we have
\begin{align*}
\sup_{x\in\mathbb{R}^{d}}|\KD_P(q) - \KD_Q(q)|&\le \frac{\epsilon m + 8CM\log n}{\epsilon n - \epsilon m - 4CM\log n} +\frac{m}{n}+ \frac{w\sqrt{d}}{2\sqrt{e}}\\
    &= \frac{\epsilon m + 8M\log \frac{1}{\delta}}{\epsilon n - \epsilon m - 4M\log \frac{1}{\delta}} +\frac{m}{n}+\frac{w\sqrt{d}}{2\sqrt{e}}.
\end{align*}
This completes the proof.
\end{proofof}

\section{Special case: Original dataset from mixture of Gaussians}\label{Appendix Gausssian}

\begin{lem}\label{lem aux furtherst heavy gaussian}
    If $r>0$, $C>0$ are such that 
    $\frac{n w^d}{(2\pi\sigma^2)^{d/2}}e^{-\frac{(r/2)^2}{2\sigma^2}}> \frac{2C\log n}{\epsilon}$,
    then $r\leq 3\sigma\sqrt{\log n + d \log  \left(\frac{w}{\sigma\sqrt{2\pi}}\right)}$.
\end{lem}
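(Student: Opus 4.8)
The plan is to simply take logarithms of the hypothesis and solve for $r$. First I would put the prefactor into a cleaner form: since $(2\pi\sigma^2)^{d/2} = (\sigma\sqrt{2\pi})^d$, we have $\frac{nw^d}{(2\pi\sigma^2)^{d/2}} = n\bigl(\frac{w}{\sigma\sqrt{2\pi}}\bigr)^d$; abbreviate this by $A$. Also $(r/2)^2/(2\sigma^2) = r^2/(8\sigma^2)$, so the hypothesis reads $A\,e^{-r^2/(8\sigma^2)} > \frac{2C\log n}{\epsilon}$. The point of this rewriting is that $\log A = \log n + d\log\frac{w}{\sigma\sqrt{2\pi}}$ is exactly the quantity appearing under the square root in the desired conclusion.

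Next I would isolate the exponential, $e^{-r^2/(8\sigma^2)} > \frac{2C\log n}{\epsilon A}$, and take logs to get $\frac{r^2}{8\sigma^2} < \log A + \log\frac{\epsilon}{2C\log n}$. In the regime in which this lemma is invoked one has $\epsilon \le 2C\log n$ (equivalently $\frac{2C\log n}{\epsilon}\ge 1$; recall $C$ is chosen so that $2C\log n = 2\log(1/\delta)$ and $\epsilon$ is a bounded privacy budget), so the additive term $\log\frac{\epsilon}{2C\log n}$ is non-positive and we obtain $\frac{r^2}{8\sigma^2} < \log A$. In particular $\log A > 0$ (the left side is nonnegative), so the square root below is well-defined. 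Rearranging gives $r < \sqrt{8}\,\sigma\sqrt{\log A} \le 3\sigma\sqrt{\log A} = 3\sigma\sqrt{\log n + d\log\frac{w}{\sigma\sqrt{2\pi}}}$, using $\sqrt 8 < 3$, which is the claimed bound.

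I do not expect a genuine obstacle here: once the prefactor is rewritten this is a one-line manipulation, and the constant $3$ (versus $\sqrt 8$) leaves comfortable slack. The only step demanding a little care is the stray term $\log\frac{\epsilon}{2C\log n}$ coming from the right-hand side of the hypothesis — one must argue it is non-positive (or at least dominated by $\log A$), which is why the mild regime assumption $\epsilon \le 2C\log n$, implicit in how the lemma is applied, is used.
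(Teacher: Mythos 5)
Your proposal is correct and follows essentially the same route as the paper's proof: take logarithms of the hypothesis, discard the resulting non-positive term $\log\frac{\epsilon}{2C\log n}$ (the paper equivalently drops $\log\epsilon$ and the $-\log\log n-\log 2C$ terms under the square root), and absorb the constant via $2\sqrt{2}<3$. If anything, you are slightly more explicit than the paper about the mild regime assumption ($\epsilon\le 2C\log n$) needed to justify discarding that term, which the paper passes over silently.
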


\begin{proof}
    The condition of the lemma translates to

\begin{align*}
    r &\le 2\sqrt{2}\sigma\sqrt{\log\left(\frac{ n}{2C\log n} \cdot \left(\frac{w}{\sigma\sqrt{2\pi}}\right)^d \right)}\\
    &= 2\sqrt{2}\sigma \sqrt{\log n - \log\log n -\log 2C + d\log \left(\frac{w}{\sigma\sqrt{2\pi}}\right)} \\
    &\le 3\sigma\sqrt{\log n + d \log  \left(\frac{w}{\sigma\sqrt{2\pi}}\right)} 
\end{align*}
\end{proof}

\begin{lem}\label{lem aux light gaussian}
  If $r>0$, $C>0$ are such that $\frac{n w^d}{(2\pi\sigma^2)^{d/2}}e^{-\frac{r^2}{\sigma^2}}< \frac{16C\log n}{\epsilon}$, then we have
  $r\geq \sigma\sqrt{\log (\epsilon n) - \log(16 C \log n)+ d\log\left(\frac{w}{\sigma \sqrt{2\pi}}\right)}$
\end{lem}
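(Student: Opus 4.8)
The plan is to prove the lemma by directly manipulating the hypothesis — the claimed inequality is an algebraic rearrangement of the assumed bound, in the same spirit as Lemma~\ref{lem aux furtherst heavy gaussian} but running in the opposite direction. First I would isolate the exponential factor: dividing both sides of
\[
\frac{n w^d}{(2\pi\sigma^2)^{d/2}}\,e^{-r^2/\sigma^2} < \frac{16C\log n}{\epsilon}
\]
by the positive prefactor $\tfrac{n w^d}{(2\pi\sigma^2)^{d/2}}$ yields an upper bound on $e^{-r^2/\sigma^2}$ by the (positive) quantity $\tfrac{16 C\log n}{\epsilon}\cdot\tfrac{(2\pi\sigma^2)^{d/2}}{n w^d}$.

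Next I would apply the increasing function $\log$ to both sides. Since $\log e^{-r^2/\sigma^2} = -r^2/\sigma^2$, the upper bound on the exponential turns into a \emph{lower} bound on $r^2/\sigma^2$:
\[
\frac{r^2}{\sigma^2} > \log n - \log\!\left(\frac{16C\log n}{\epsilon}\right) + d\log w - \frac{d}{2}\log(2\pi\sigma^2).
\]
Then I would simplify the right-hand side with elementary logarithm identities: $\log n - \log\!\left(\tfrac{16C\log n}{\epsilon}\right) = \log(\epsilon n) - \log(16 C\log n)$ and $d\log w - \tfrac{d}{2}\log(2\pi\sigma^2) = d\log w - d\log(\sigma\sqrt{2\pi}) = d\log\!\left(\tfrac{w}{\sigma\sqrt{2\pi}}\right)$. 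This rewrites the bound as $r^2 > \sigma^2\big(\log(\epsilon n) - \log(16 C\log n) + d\log(w/(\sigma\sqrt{2\pi}))\big)$; taking square roots (the right-hand side being non-negative in the regime where the lemma is applied, else the asserted inequality holds vacuously since $r>0$) gives precisely the claimed lower bound on $r$.

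There is no genuine obstacle: the lemma follows from monotonicity of $\log$ plus bookkeeping of logarithm identities. The only points requiring a moment's care are (i) tracking the sign when the logarithm is passed through $e^{-r^2/\sigma^2}$, so that an upper bound on the exponential correctly becomes a lower bound on $r^2$, and (ii) confirming that the expression under the final square root is non-negative in the range of parameters in which this auxiliary lemma is invoked (cf. the hypotheses of Theorem~\ref{thm one gaussian tradeoff}).
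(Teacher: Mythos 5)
Your argument is exactly the paper's: divide through by the prefactor, take logarithms (flipping the inequality into a lower bound on $r^2/\sigma^2$), simplify with log identities, and take square roots. The proposal is correct and even slightly more careful than the paper in flagging the non-negativity of the quantity under the square root.
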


\begin{proof}
Condition on $r$ translates to
\begin{align*}
    r 
     &\ge \sqrt{-\sigma^2\log\left(\frac{16C\log n}{\epsilon n} \cdot \left(\frac{\sigma\sqrt{2\pi}}{w}\right)^d \right) }\\
    &\ge\sigma\sqrt{\log\left(\frac{\epsilon n}{16C\log n} \cdot \left(\frac{w}{\sigma\sqrt{2\pi}}\right)^d \right)}\\
    &= \sigma \sqrt{\log (\epsilon n) - \log(16C\log n) + d\log \left(\frac{w}{\sigma\sqrt{2\pi}}\right)}.
\end{align*}
\end{proof}

\begin{defn}
For a bin (hypercube) $B$ and a point $y\in \mathbb{R}^d$, we define their distance as the $\ell_2$ distance of the 
center of $B$ to $y$.
\end{defn}
\begin{lem}[Upper bound on $M$]     
     For a dataset coming from a multivariate Gaussian with variance $\sigma^2I$ in $\mathbb{R}^d$  there are at most $$\left(\frac{6\sigma}{w}\sqrt{\log n + d \log  \left(\frac{w}{\sigma\sqrt{2\pi}}\right)} + 2\right)^d$$
     $t/2$-heavy bins with arbitrary high probability.
\end{lem}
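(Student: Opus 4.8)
The plan is to show that, with probability at least $1-n^{-C}$ for an arbitrarily large constant $C$, every $t/2$-heavy bin has its center within $\ell_2$-distance $r_0 := 3\sigma\sqrt{\log n + d\log(w/(\sigma\sqrt{2\pi}))}$ of the Gaussian mean $\mathbf{c}$, and then to bound the number of width-$w$ axis-aligned cells whose center lies in the ball $B(\mathbf{c},r_0)$; recall that in the appendix normalization $t/2 = \frac{4C\log n}{\epsilon}$.

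\textbf{Step 1: a far bin is $t/2$-light with high probability.} Fix a bin $B$ whose center is at $\ell_2$-distance $r>r_0$ from $\mathbf{c}$. Every point of $B$ lies within $\frac{w\sqrt{d}}{2}$ of its center, so the closest point of $B$ to $\mathbf{c}$ is at distance at least $r-\frac{w\sqrt{d}}{2}\ge r/2$ (we work in the regime where $r_0\ge w\sqrt{d}$, which covers all $r>r_0$; bins with center within $w\sqrt{d}$ of $\mathbf{c}$ are treated directly and their number is likewise absorbed into the claimed bound). Hence the Gaussian density is at most $\frac{1}{(2\pi\sigma^2)^{d/2}}e^{-(r/2)^2/(2\sigma^2)}$ throughout $B$, so the expected number of the $n$ samples landing in $B$ is at most $\mu_B := \frac{nw^d}{(2\pi\sigma^2)^{d/2}}e^{-(r/2)^2/(2\sigma^2)}$ (density times volume $w^d$ times $n$). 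The contrapositive of Lemma~\ref{lem aux furtherst heavy gaussian} then gives $\mu_B\le\frac{2C\log n}{\epsilon}=\frac12\cdot\frac{t}{2}$. The point count $Y_B$ of $B$ is a sum of $n$ i.i.d.\ Bernoulli variables with $\mathbb{E}[Y_B]\le\mu_B\le t/4$, so the heaviness threshold $t/2$ is at least twice $\mathbb{E}[Y_B]$; applying the multiplicative Chernoff upper tail in its $\delta\ge 1$ form, $\Pr[Y_B\ge(1+\delta)\mathbb{E}[Y_B]]\le e^{-\delta\mathbb{E}[Y_B]/3}$ (cf.\ Lemma~\ref{lem:chernoff}), with $\delta\,\mathbb{E}[Y_B]=t/2-\mathbb{E}[Y_B]\ge t/4$, yields $\Pr[Y_B\ge t/2]\le e^{-t/12}=n^{-\Omega(C/\epsilon)}$.

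\textbf{Step 2: union bound and counting cells.} Next I would union bound over all $J$ bins (using the standing assumption $J\le\mathrm{poly}(n)$ inherited from Theorem~\ref{thm:general-2}): with probability $1-n^{-\Omega(C)}$, no bin with center outside $B(\mathbf{c},r_0)$ is $t/2$-heavy, i.e.\ every $t/2$-heavy bin has its center in $B(\mathbf{c},r_0)$. It then remains to count the cells: if the center $q$ of a width-$w$ cell satisfies $\|q-\mathbf{c}\|_2\le r_0$, then each coordinate $q_i$ lies in an interval of length $2r_0$ about $c_i$, which contains at most $\frac{2r_0}{w}+1$ cell-centers; multiplying over the $d$ coordinates bounds the number of such cells, hence the number of $t/2$-heavy bins, by $\left(\frac{2r_0}{w}+2\right)^d = \left(\frac{6\sigma}{w}\sqrt{\log n + d\log(w/(\sigma\sqrt{2\pi}))}+2\right)^d$, where the slack in the ``$+2$'' absorbs the gap between the $r/2$ and $r-\frac{w\sqrt{d}}{2}$ exponents.

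\textbf{Main obstacle.} The delicate point is the tail bound in Step 1: for far-away, essentially empty bins $\mu_B$ can be arbitrarily small, so a multiplicative bound of the shape $e^{-c\,\mu_B}$ would be vacuous. The resolution is that we only need the upper tail and the \emph{additive} gap $t/4$ between the heaviness threshold $t/2$ and $\mathbb{E}[Y_B]$ dominates, which forces the Chernoff exponent $\delta\,\mathbb{E}[Y_B]$ to be $\gtrsim t$ regardless of how small $\mathbb{E}[Y_B]$ is. The remaining items are bookkeeping: confirming $r_0\ge w\sqrt{d}$ in the parameter regime of Theorem~\ref{thm one gaussian tradeoff} (or else handling the $\le(2\sqrt{d}+2)^d$ bins within distance $w\sqrt{d}$ of $\mathbf{c}$ separately), and checking that $J$ is polynomial in $n$ so the union bound only costs a factor hidden inside $C$.
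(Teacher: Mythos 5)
Your proof follows essentially the same route as the paper's: upper-bound the Gaussian density over a far bin via the $r-\frac{w\sqrt{d}}{2}\ge r/2$ reduction, apply a Chernoff tail to conclude that all far bins are $t/2$-light with high probability, invoke Lemma~\ref{lem aux furtherst heavy gaussian} to place every $t/2$-heavy bin within radius $3\sigma\sqrt{\log n + d\log(w/(\sigma\sqrt{2\pi}))}$ of the mean, and count the width-$w$ cells covering that ball coordinate by coordinate. Your additive-gap handling of the Chernoff step (and your explicit treatment of the $r<w\sqrt{d}$ edge case) is in fact more careful than the paper's own application of Lemma~\ref{lem:chernoff} to bins with tiny expected counts, but the substance of the argument is the same.
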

\begin{proof}

Without loss of generality we assume that the mean of the distribution is the origin. Let $B$ be a bin at distance $r$ from the origin, and $x\in\mathbb{R}^d$ be a point inside $B$. For $X$ from multivariate Gaussian we have

\begin{align*}
f(X=x)\le\frac{1}{(2\pi\sigma^2)^{d/2}}e^{-\frac{(r - \frac{w\sqrt{d}}{2})^2}{2\sigma^2}}.
\end{align*}

Hence, the expected number of points within $B$ is upper bounded by $\frac{n w^d}{(2\pi\sigma^2)^{d/2}}e^{-\frac{(r - \frac{w\sqrt{d}}{2})^2}{2\sigma^2}}$. For $r$ such that $w\sqrt{d}\le r$, this is further upper bounded by $\frac{n w^d}{(2\pi\sigma^2)^{d/2}}e^{-\frac{(r/2)^2}{2\sigma^2}}$.

For simplicity of notation, let us introduce $\mu(r) = \frac{n w^d}{(2\pi\sigma^2)^{d/2}}e^{-\frac{(r/2)^2}{2\sigma^2}}$. If we assume that $r$ is large enough so that 
$\mu(r) \le {2C\log n}$,
Chernoff bounds (see Lemma~\ref{lem:chernoff}) give us
\begin{align*}
    \Pr\left[| B|\ge 
    4C\log n\right]&\le 2e^{-\mu}\\
    &\le \frac{1}{n^{2C}},
\end{align*}
where $|B|$ denotes the number of the points from the dataset within $B$. Thus if $r$ is such that $\mu(r)> 2C\log n$, then bins at distance at least $r$ from the origin are $t/2$-light with probability at least $1-\frac{1}{n^{2C}}$. Equivalently this means that all $t/2$-heavy bins are at distance at most $r$ from the origin. 
 
By Lemma \ref{lem aux furtherst heavy gaussian} the furthest bin that can be $t/2$-heavy is at distance at most $$3\sigma\sqrt{\log n + d \log  \left(\frac{w}{\sigma\sqrt{2\pi}}\right)}$$
from the origin.
It remains to note that the area of diameter $\alpha$ is covered by at most $\alpha/w+2$ bins of width on a single axis. Thus, the number of $t/2$-heavy bins is bounded by
$$\left(\frac{6\sigma}{w}\sqrt{\log n + d \log  \left(\frac{w}{\sigma\sqrt{2\pi}}\right)} + 2\right)^d$$ 
with probability at least $1-\frac{1}{n^{2C}}$. 
\end{proof}

\begin{lem}[Upper bound on $m$]
For a dataset coming from a multivariate Gaussian with variance $\sigma^2I$ in $\mathbb{R}^d$, when binning is done with widths such that $8\le \log (\frac{w}{\sigma\sqrt{2\pi}})$, there are at most $n^{2/3+o(1)}$ points in
$3t/2$-light bins with arbitrary high probability.
\end{lem}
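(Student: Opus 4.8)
The plan is to argue that a $3t/2$-light bin (recall $t=\tfrac{8C\log n}{\epsilon}$, with $\delta=n^{-C}$) can only occur deep in the tail of the Gaussian --- at distance $\Omega(\sigma\sqrt{\log(\epsilon n)})$ from the mean --- and then to count how many of the $n$ points can land that far out. Assume without loss of generality that the mean is the origin, and write $\mu_B:=n\Pr[X\in B]$ for the expected number of points in a bin $B$, so that $|B|\sim\textsc{Binom}(n,\mu_B/n)$.

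First I would split the bins according to $\mu_B$. If $\mu_B\ge 2t=\tfrac{16C\log n}{\epsilon}$, then the Chernoff bound (Lemma~\ref{lem:chernoff}) gives $\Pr[\,|B|<3t/2\,]\le\Pr[\,|B|<\tfrac34\mu_B\,]\le 2e^{-\mu_B/64}\le n^{-\Omega(C)}$ (using $\epsilon\le 1$; otherwise rescale $C$), so such a bin is $3t/2$-heavy except with tiny probability. Since the bin masses sum to $1$, there are at most $n/(2t)\le n$ bins with $\mu_B\ge 2t$, so a union bound shows that with probability $\ge 1-n^{-C}$ every $3t/2$-light bin $B$ has $\mu_B<2t$. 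For such a bin with center at distance $r$ from the origin, bounding the Gaussian density from below over $B$ gives $\mu_B\ge\tfrac{n w^d}{(2\pi\sigma^2)^{d/2}}\,e^{-(r+w\sqrt d/2)^2/(2\sigma^2)}$. The bins with $r=O_d(w)$ satisfy $\mu_B=n\cdot\Omega_d(1)\ge 2t$ for $n$ large, hence are already excluded; so for the remaining light bins $r$ is large relative to the bin diameter $w\sqrt d$, which (here the hypothesis $8\le\log\tfrac{w}{\sigma\sqrt{2\pi}}$ enters) lets me write $(r+w\sqrt d/2)^2\le 2r^2$ and conclude $\tfrac{n w^d}{(2\pi\sigma^2)^{d/2}}\,e^{-r^2/\sigma^2}<2t=\tfrac{16C\log n}{\epsilon}$. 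Lemma~\ref{lem aux light gaussian} then yields $r\ge r^*:=\sigma\sqrt{\log(\epsilon n)-\log(16C\log n)+d\log\tfrac{w}{\sigma\sqrt{2\pi}}}$. Thus every $3t/2$-light bin has its center, and hence all of its points, at distance at least $r^*-w\sqrt d/2$ from the origin.

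On this good event, $m\le\#\{i:\|p_i\|\ge r^*-w\sqrt d/2\}$, and since $S:=\{x:\|x\|\ge r^*-w\sqrt d/2\}$ does not depend on the data, $\#\{i:p_i\in S\}\sim\textsc{Binom}(n,\Pr[X\in S])$. Applying Corollary~\ref{cor:tailboundschisquared} with $\sigma\sqrt{2d+3x}=r^*-w\sqrt d/2$ gives $\Pr[X\in S]\le e^{-x}$ with $3x=(r^*-w\sqrt d/2)^2/\sigma^2-2d=\log(\epsilon n)\,(1-o(1))$, because the $-\log(16C\log n)$, the $d\log\tfrac{w}{\sigma\sqrt{2\pi}}$, the $-w\sqrt d/2$, and the $-2d$ terms are all $o(\sqrt{\log n})$ against $\sqrt{\log(\epsilon n)}$ as $n\to\infty$ with $d,\sigma,w,\epsilon$ fixed. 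Hence $\Pr[X\in S]\le(\epsilon n)^{-1/3+o(1)}$, so $\mathbb{E}\,\#\{i:p_i\in S\}\le n\,(\epsilon n)^{-1/3+o(1)}=n^{2/3+o(1)}$, and a final Chernoff bound (Lemma~\ref{lem:chernoff}) on this sum of $n$ i.i.d.\ indicators gives $\#\{i:p_i\in S\}\le n^{2/3+o(1)}$ with probability $\ge 1-n^{-C}$. A union bound with the event of the previous paragraph finishes the proof.

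\textbf{Main obstacle.} The tricky part is the three-way bookkeeping among a bin's expected count $\mu_B$, the distances of the bin and of its points from the mean, and the threshold $t$ --- these must be lined up so that Lemma~\ref{lem aux light gaussian} and Corollary~\ref{cor:tailboundschisquared} chain together. In particular one has to verify that the $O(w\sqrt d)$ rounding corrections, combined with the hypothesis $8\le\log\tfrac{w}{\sigma\sqrt{2\pi}}$, shift the critical radius $r^*$ only by a lower-order amount, so that the $\log(\epsilon n)$ term survives in the exponent and yields exactly the $(\epsilon n)^{-1/3}$ (hence $n^{2/3+o(1)}$) tail mass, rather than a vacuous bound. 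A secondary point is calibrating the two Chernoff slacks so that the total failure probability is $n^{-C}$ for the same $C$ that defines $t$.
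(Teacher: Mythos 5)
Your proof follows essentially the same route as the paper's: lower-bound the expected count of a bin at distance $r$ from the mean via the Gaussian density, apply a Chernoff bound to conclude that with high probability every $3t/2$-light bin lies beyond the critical radius $r^*$ of Lemma~\ref{lem aux light gaussian}, and then count the sample points beyond that radius using Corollary~\ref{cor:tailboundschisquared} together with a final Chernoff bound, yielding $n^{2/3+o(1)}$. The differences are only bookkeeping — you first condition on $\mu_B<2t$ before converting to a distance bound, and you subtract the $w\sqrt{d}/2$ rounding correction when passing from bin centers to points (the paper silently identifies the two) — and both arguments share the same unaddressed corner case of bins whose distance $r$ violates $\tfrac{w\sqrt{d}}{2}\le(\sqrt{2}-1)r$.
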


\begin{proof}
Without loss of generality we assume that the mean of the distribution is the origin. Let $B$ be a bin at distance $r$ from the origin, and $x\in\mathbb{R}^d$ be a point inside $B$. For $X$ from multivariate Gaussian we have

\begin{align*}
         \frac{1}{(2\pi\sigma^2)^{d/2}}e^{-\frac{(r + \frac{w\sqrt{d}}{2})^2}{2\sigma^2}}\le f(X=x).
\end{align*}
Hence, the expected number of points within $B$ is lower bounded by $\frac{n w^d}{(2\pi\sigma^2)^{d/2}}e^{-\frac{(r + \frac{w\sqrt{d}}{2})^2}{2\sigma^2}}$. For $r$ such that $\frac{w\sqrt{d}}{2}\le (\sqrt{2}-1)r$ we further have the lower bound of $\mathbb{E} [\vert B \vert]\geq \frac{n w^d}{(2\pi\sigma^2)^{d/2}}e^{-\frac{2r^2}{2\sigma^2}}$, where $B$ denotes the number of points from the dataset within $B$.

For simplicity of notation let $\mu(r) = \frac{n w^d}{(2\pi\sigma^2)^{d/2}}e^{-\frac{r^2}{\sigma^2}}$. For $r$ such that $\mu(r) \ge \frac{16C\log n}{\epsilon}$, 
Chernoff bounds (see Lemma~\ref{lem:chernoff}) we have

\begin{align*}   \Pr\left[|B|\le \frac{{12C\log n}}{\epsilon} \right]
    &\le \frac{1}{n^{2C}}.
\end{align*}
Thus if $r$ is such that $\mu(r)\ge \frac{16C\log n}{\epsilon}$ 
for some $r$, then bins at distance at most $r$ are $3t/2$-heavy with probability at least $1-\frac{1}{n^{2C}}$. In other words, all $3t/2$-light bins are at distance at least $r$.
By Lemma \ref{lem aux light gaussian}, the smallest $r$ such that all $3t/2$-light bins are at distance at least $r$ is 
$\sigma\sqrt{\log (\epsilon n) - \log(16 C \log n)+ d\log\left(\frac{w}{\sigma \sqrt{2\pi}}\right)}$.

From Corollary~\ref{cor:tailboundschisquared} it follows that the probability of a point taking distance at least 
$\sigma\sqrt{\log (\epsilon n) - \log(16 C \log n)+ d\log\left(\frac{w}{\sigma \sqrt{2\pi}}\right)}$
from the cluster center is

\begin{align*}
    \Pr\left[X \ge \sigma \sqrt{2d + 3 \left(\frac{\log (\epsilon n) -\log(16C \log n)+ d \log  (\frac{w}{\sigma\sqrt{2\pi}})-2d}{3}\right)}\right] &\le \exp\left({-\frac{(\log (\epsilon n) -\log(16C \log n)+ d (\log  (\frac{w}{\sigma\sqrt{2\pi}})-2)}{3}}\right)\\
    & \leq \frac{(16 C\log n)^{1/3}\cdot e^{-\frac{d}{3}(\log \frac{w}{\sigma \sqrt{2 \pi}}-2)}}{(\epsilon n)^{1/3}}
\end{align*}

Let $\mathcal{C}'$ be the set of points 
that are at least 
$\sigma \sqrt{2d + 3 \left(\frac{\log (\epsilon n) -\log(16C \log n)+ d \log  (\frac{w}{\sigma\sqrt{2\pi}})-2d}{3}\right)}$
far from the mean of the ditribution i.e. origin. Any point in $3t/2$-light bins belongs to $\mathcal{C}'$ with probability at least $1-\frac{1}{n^{2C}}$. Chernoff bound (see Lemma~\ref{lem:chernoff}) gives us
\begin{align*}
    |\mathcal{C}'|\le \epsilon^{-1/3} n^{2/3}{(16 C\log n)^{1/3}\cdot e^{-\frac{d}{3}(\log \frac{w}{\sigma \sqrt{2 \pi}}-2)}}
\end{align*}
with high probability. This completes the proof.
\end{proof}

\begin{proofof}{Theorem \ref{thm one gaussian tradeoff}}
From above analysis we have that the number of $t/2$-heavy bins is bounded by
$$M = \left(\frac{6\sigma}{w}\sqrt{\log n + d \log  \left(\frac{w}{\sigma\sqrt{2\pi}}\right)} + 2\right)^d$$ 
with high probability.
We also have that the total number of points in $3t/2$-light bins is upper bounded by 
$m = \epsilon^{-1/3} n^{2/3}{(16 C\log n)^{1/3}\cdot e^{-\frac{d}{3}(\log \frac{w}{\sigma \sqrt{2 \pi}}-2)}}$. Thus  we have
\begin{align}
M &= \left(\frac{6\sigma}{w}\sqrt{\log n + d \log  \left(\frac{w}{\sigma\sqrt{2\pi}}\right)} + 2\right)^d \\
& \leq \left(\frac{6\sigma}{w}\sqrt{2}\sqrt{\log n} +2\right)^d\\
& \leq \left(\frac{12\sigma}{w}\right)^d (\log n)^{d/2}
\end{align}
where the first inequality follows by $n\geq \left(\frac{w}{\sigma \sqrt{2\pi}}\right)^d$ and the second also follows for large $n$. We have
\begin{align}
    \epsilon m + 8M\log \frac{1}{\delta} &\leq \epsilon m +8 \log \left(\frac{1}{\delta}\right) \cdot \left(\frac{12\sigma}{w}\right)^d (\log n)^{d/2}\\
    & \leq \frac{1}{2}\epsilon n
\end{align}
where the second inequality follows from condition $\frac{n}{(\log n)^{d/2}}>16\log \left(\frac{1}{\delta}\right)\cdot \left(\frac{12\sigma}{w}\right)^d$.
Thus remains to apply Theorem \ref{thm:general-2} and we get

\begin{align}
\frac{\epsilon m + 8M\log \frac{1}{\delta}}{\epsilon n - \epsilon m - 4M\log \frac{1}{\delta}} +\frac{m}{n} + \frac{w\sqrt{d}}{2\sqrt{e}} &\leq \frac{\epsilon(\epsilon^{-1/3} n^{2/3}{(16 C\log n)^{1/3}\cdot e^{-\frac{d}{3}(\log \frac{w}{\sigma \sqrt{2 \pi}}-2)}})+8 \log \left(\frac{1}{\delta}\right) \cdot \left(\frac{12\sigma}{w}\right)^d (\log n)^{d/2}}{\frac{1}{2}\epsilon n}\\
&+\frac{n^{2/3}{(16 C\log n)^{1/3}\cdot e^{-\frac{d}{3}(\log \frac{w}{\sigma \sqrt{2 \pi}}-2)}}}{n}+ \frac{w\sqrt{d}}{2\sqrt{e}}\\
& \leq \frac{3 (16 C\log n)^{1/3}\cdot e^{-\frac{d}{3}(\log \frac{w}{\sigma \sqrt{2 \pi}}-2)}}{(\epsilon n)^{1/3}}+\frac{16 \log \left(\frac{1}{\delta}\right)\cdot \left(\frac{12 \sigma}{w}\right)^d (\log n)^{d/2}}{\epsilon n}+\frac{w\sqrt{d}}{2\sqrt{e}}
\end{align}
\end{proofof}

\section{Data dependent algorithm}

\begin{figure}[h]
    \centering
\includegraphics[width=0.3\textwidth]{figures/step1.pdf}\hfill
\includegraphics[width=0.3\textwidth]{figures/step2.pdf}\hfill
\includegraphics[width=0.3\textwidth]{figures/binning_example.pdf}
    \caption{Example of stages of data dependent partitioning of the dataset in $\mathbb{R}^2$.}
    \label{fig:binning}
\end{figure}

\subsection{Implicit sampling for data independent Algorithm \ref{alg:rounding-new}}\label{appendix secti implicit data indep}

For the data independent algorithm, implicit implementation of empty bins sampling is straightforward. Instead of storing $(\frac{R}{w})^{d}$ bins centers, it is enough to store only those corresponding to non empty bins, and sample empty centers via independent uniform random sampling of each coordinate from the set of possible values (and rejection if gluing them together gives center corresponding to a non empty bin). Note that this requires storing $O(\frac{R}{w}\cdot d)$ values for non empty bins instead of $O(\frac{R}{w})^{d}$.

\section{Implicit sampling of empty bins} \label{appendix sect implicit}

\begin{figure}
\centering
\scalebox{0.5}{
\begin{forest}
for tree={
    grow=south,
    circle, draw, minimum size=3ex, inner sep=0.75pt,
    s sep=2mm
        }
[
    [
        [, edge=dashed,fill=gray
        [,no edge, draw=none]
        [,no edge, draw=none]
        ]
        [ 
            [ 
            [
            [,fill=black[,no edge, draw=none]
        [,no edge, draw=none]]
            [, edge=dashed,fill=gray
            [,no edge, draw=none
            ]
        [,no edge, draw=none]]
            ]
            [, edge=dashed,fill=gray
            [,no edge, draw=none
            ]
        [,no edge, draw=none]]
            ]
            [, edge=dashed,fill=gray
            [,no edge, draw=none]
        [,no edge, draw=none]
        ]
        ]
    ]
    [
        [
            [, edge=dashed,fill=gray
            [,no edge, draw=none]
        [,no edge, draw=none]
        ]
            [,fill=black
            [,no edge, draw=none]
        [,no edge, draw=none]] 
        ]
        [, edge=dashed,fill=gray
        [,no edge, draw=none]
        [,no edge, draw=none]
        ]
    ]
]
\end{forest}}
\vspace{-0.5cm}
\caption{Tree with $h = 2$ and $h' = 5$. Black nodes are non-empty bins, gray nodes are empty bins we need to sample.}
\end{figure}

\begin{proofof}{Lemma \ref{lem:sampling explicit equivalent implicit}}
Equivalence of Algorithm \ref{alg:explicit-empty-bins} and Algorithm \ref{alg:implicit-empty-bins} is the consequence of independence of Bernoulli indicators (as Laplace noise are independent for different bins) and the fact that Binomial can be represented as the sum of independent Bernoullis.
\end{proofof}

\begin{lem} \label{lem total number empty bins}
For a dataset of size $n$, if the decision tree is data independent up to depth level $h$ and data dependent in the remaining part, then the number of empty bins is upper bounded by
\begin{align*}
    2^h + n(h'-h)
\end{align*}
where $h'$ denotes the total number of levels. 
\end{lem}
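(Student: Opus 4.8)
The plan is to count the empty leaves of the recursion tree built by Algorithm~\ref{alg:recursive-binning}, bucketed by the depth at which a leaf occurs. Since the first $h$ levels are split unconditionally, the tree coincides with the complete binary tree down to depth $h$; hence every node at depth $<h$ has two children, and so \emph{every} leaf of the tree — empty or not — lies at depth $h$ or deeper. Moreover I work in the regime relevant to the application (and to Lemma~\ref{lem:empty bin not split}), where the cut-off $\tau$ is chosen large enough that no empty bin is ever subdivided in the data-dependent part; equivalently, every empty node at depth $\ge h$ is a leaf. I split the empty leaves into those at depth exactly $h$ and those at depth $\ge h+1$, and bound the two groups separately; the first will give the $2^h$ term and the second the $n(h'-h)$ term.

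The first group is immediate: there are exactly $2^h$ nodes at depth $h$, so at most $2^h$ of them can be empty leaves. For the second group, fix an empty leaf $v$ at depth $\ell\in\{h+1,\dots,h'\}$ and consider its parent $u$, which sits at depth $\ell-1\in\{h,\dots,h'-1\}$ and is internal (it has $v$ as a child). I claim $u$ is non-empty: if $u$ were empty then, being at depth $\ge h$, it would not be subdivided under our assumption, contradicting that it has a child. Since $u$ is non-empty it contains at least one data point, which lands in exactly one of its two children upon the binary split; hence \emph{at least} one child of $u$ is non-empty, so $u$ has \emph{at most one} empty child. Consequently the map $v\mapsto\mathrm{parent}(v)$ restricted to empty leaves at depth $\ge h+1$ is at most $1$-to-$1$ onto the set of non-empty internal nodes lying at depths $h,\dots,h'-1$, so the number of such empty leaves is at most the number of non-empty internal nodes at depths $h,\dots,h'-1$.

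It remains to bound the latter. For each fixed depth $\ell$, the nodes at depth $\ell$ correspond to pairwise-disjoint axis-aligned regions (two distinct depth-$\ell$ nodes are incomparable in the tree and therefore have disjoint bins), so each of the $n$ data points lies in at most one of them; hence at most $n$ of the depth-$\ell$ nodes are non-empty. Summing this over the $h'-h$ depths $h,\dots,h'-1$ bounds the number of non-empty internal nodes there by $n(h'-h)$, so the second group contributes at most $n(h'-h)$ empty leaves. Adding the two buckets gives the claimed bound $2^h+n(h'-h)$.

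I do not expect a genuine obstacle here: the combinatorial facts used — that a non-empty binary split produces at most one empty child, and that same-depth bins are disjoint so at most $n$ are non-empty — are routine. The only points requiring care are purely organizational: correctly isolating the data-independent prefix (so that all leaves live at depth $\ge h$ and the ``parent is non-empty'' argument is valid only for depth $\ge h+1$, which is why depth-$h$ empty leaves must be counted by the crude $2^h$ bound instead), and making precise the regime ``no empty bin is subdivided'', under which every empty node at depth $\ge h$ is a leaf.
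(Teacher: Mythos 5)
Your proof is correct and follows essentially the same route as the paper's: bound the depth-$h$ layer crudely by $2^h$, and charge every deeper empty leaf to a non-empty node, using that a non-empty split has at most one empty child and that at most $n$ bins per level are non-empty (the paper phrases this per root-to-leaf path of a non-empty bin rather than per level, but it is the same double count). Your explicit flagging of the standing assumption that no empty bin is subdivided in the data-dependent part is a welcome clarification of a hypothesis the paper's proof leaves implicit.
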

\begin{proof}
We need to upper bound the number of leaves for such binary tree. Since the binary tree is complete up to depth $h$, we have at most $2^{h}$ leaves at level $h$. Furthermore, for any non empty bin we can have at most $h'-h$ empty bins between depth $h+1$ and $h'$. Thus, we have at most $n(h'-h)$ empty bins between depth $h+1$ and $h'$.
\end{proof}

\begin{proofof}{Lemma \ref{lem:empty bin not split}}
For a single empty bin, by Remark \ref{fact:tailbound4laplace} we have 
\begin{align*}
         \Pr[\text{Lap}(\frac{2(h'-h)}{\epsilon}) \ge \tau] = \frac{1}{2}e^{-\frac{\tau}{2(h'-h)/\epsilon}}.
     \end{align*}
Thus for $$\tau = \frac{2(h'-h)}{\epsilon}\log \left(\frac{1}{\delta}\cdot\left(2^h + n(h'-h)\right)\right)$$ the right hand side is less than $\frac{\delta}{2^h + n(h'-h)}$. As a consequence of Lemma \ref{lem total number empty bins} there are at most $2^h + n(h'-h$ empty bins and thus union bound argument completes the proof.
\end{proofof}

\paragraph{More implementation details:} Our algorithm for implicitly sampling the empty bins proceeds as follows: in the recursion tree, from left to right, our algorithm first finds common ancestor for any two consecutive non-empty bins, say $i$'th and $i+1$'th and calls it \emph{$i$'th common ancestor}. Then, it calculates the number of empty bins between $i$'th non-empty bin and $i$'th common ancestor. And similarly, it calculates the number of empty bins between $i+1$'th non-empty bin and $i$'th common ancestor. Note that by the above-mentioned binomial distribution implicit sampling argument, one can apply a binomial distribution sampling technique to each of these numbers, and it is not hard to locate the sampled empty bins. Finally, we need to apply an implicit sampling technique to empty bins at level $h$. This is done similarly to the rejection sampling technique we mentioned in implicit implementation of data independent algorithm.

\section{ Experiments}\label{appendix experiments}
\subsection{Experimental setting for comparison with \cite{DBLP:conf/icml/BalogTS18}} \label{appendix_balog_exp_setting}

For both dimension $2$ and $5$, the dataset consists of $n = 100, 000$ samples from
a multivariate mixture of Gaussians. The mixture has $10$ components, with mixing weights proportional to $(1, 1/2,\dots, 1/10)$, and their means are chosen from spherical Gaussian with mean $[100,\dots,100]$ and covariance $200I$.  Each point is simulated by first
sampling the mixture component, and then sampling from a spherical Gaussian centered at the mean of the chosen mixture component and with covariance $30I$. For accuracy of the comparison, we did not re-sample the dataset and used the exact version in the code of \cite{DBLP:conf/icml/BalogTS18}.

\subsection{Dependency on dataset size}\label{appendix: dep on size}
Intuitively, it is easier to hide the contribution of an individual in a large set, compared to a small set, in kernel density. We show this empirically using three datasets with different dataset sizes, but the same underlying mixture of Gaussians parameters as the 5 dimensional datasets in the experiments section, see Figure~\ref{Fig:dep-N}. This experiment also shows that our algorithm is able to produce synthetic datasets with better minimum error when the size of the dataset, $N$, is larger. This dependency in $N$ was also evident in our theoretical results in Theorem~\ref{thm:general-2} and Theorem~\ref{thm one gaussian tradeoff}.
\begin{figure}[ht!]
    \centering
    \includegraphics[scale=0.33]{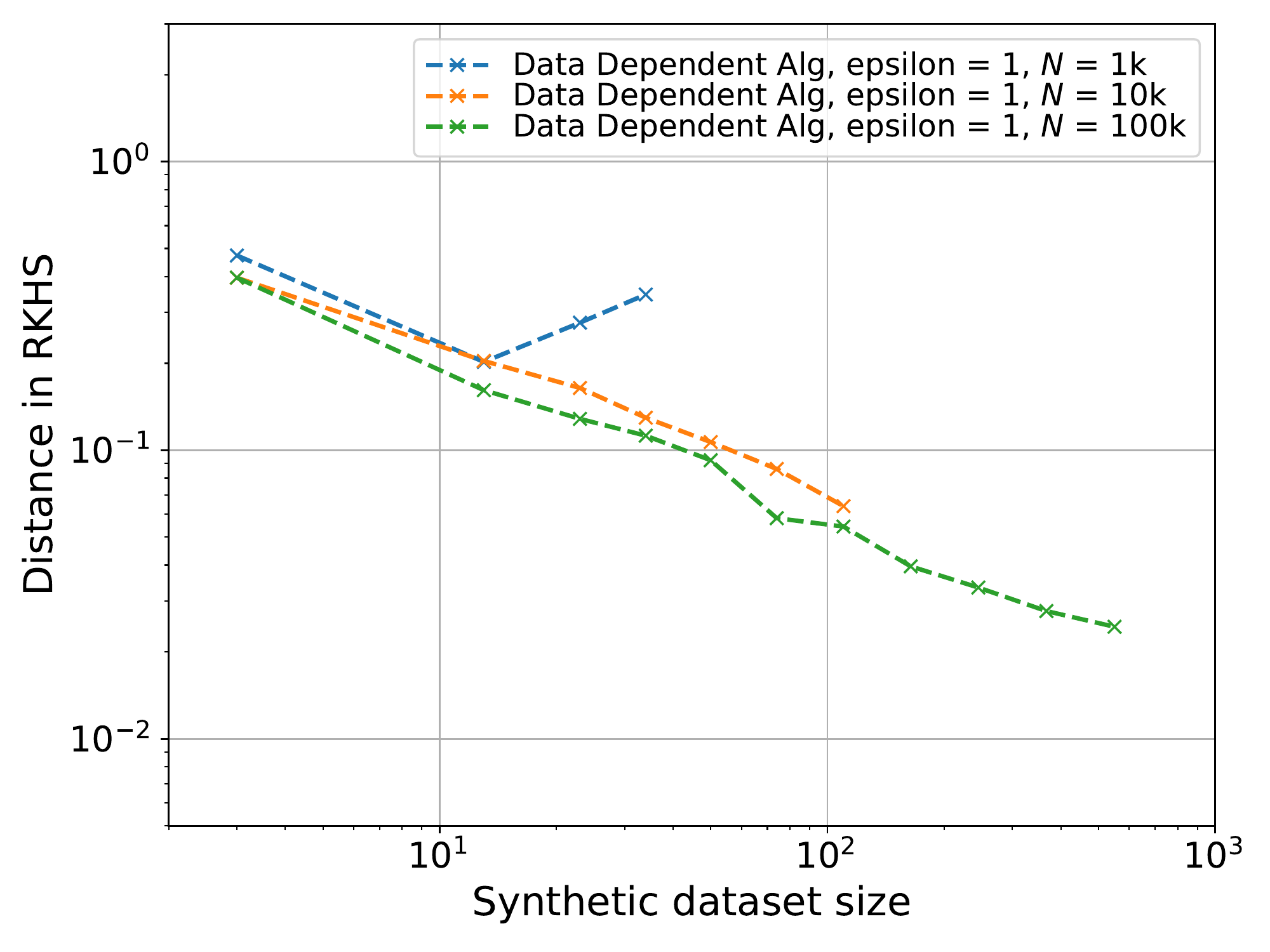}
    \caption{Using three 5 dimensional datasets with 1k, 10k and 100k data points, yet with the same underlying distribution parameters, we show that larger dataset size, $N$, naturally results in a better performance. Moreover, for larger $N$, our algorithm is capable of achieving better minimum error.}
    \label{Fig:dep-N}
\end{figure}

\subsection{Dependency on variance}\label{appendix: dep on variance}

Next, we show the effect of $\sigma$ (see Theorem~\ref{thm one gaussian tradeoff} for the definition of $\sigma$) in the mixture of Gaussians datasets. We consider two datasets with underlying mixture of Gaussians distributions in dimension $10$, with $\sigma = 30$ and $\sigma = 3$ for each cluster. As expected, the simulation results presented in Figure~\ref{fig:d10-sigma} confirm that our algorithm performs better in the setting where clusters are more concentrated around a center, i.e., small $\sigma$ case.

\begin{figure}[ht!]
    \centering
    \includegraphics[width=0.33\textwidth]{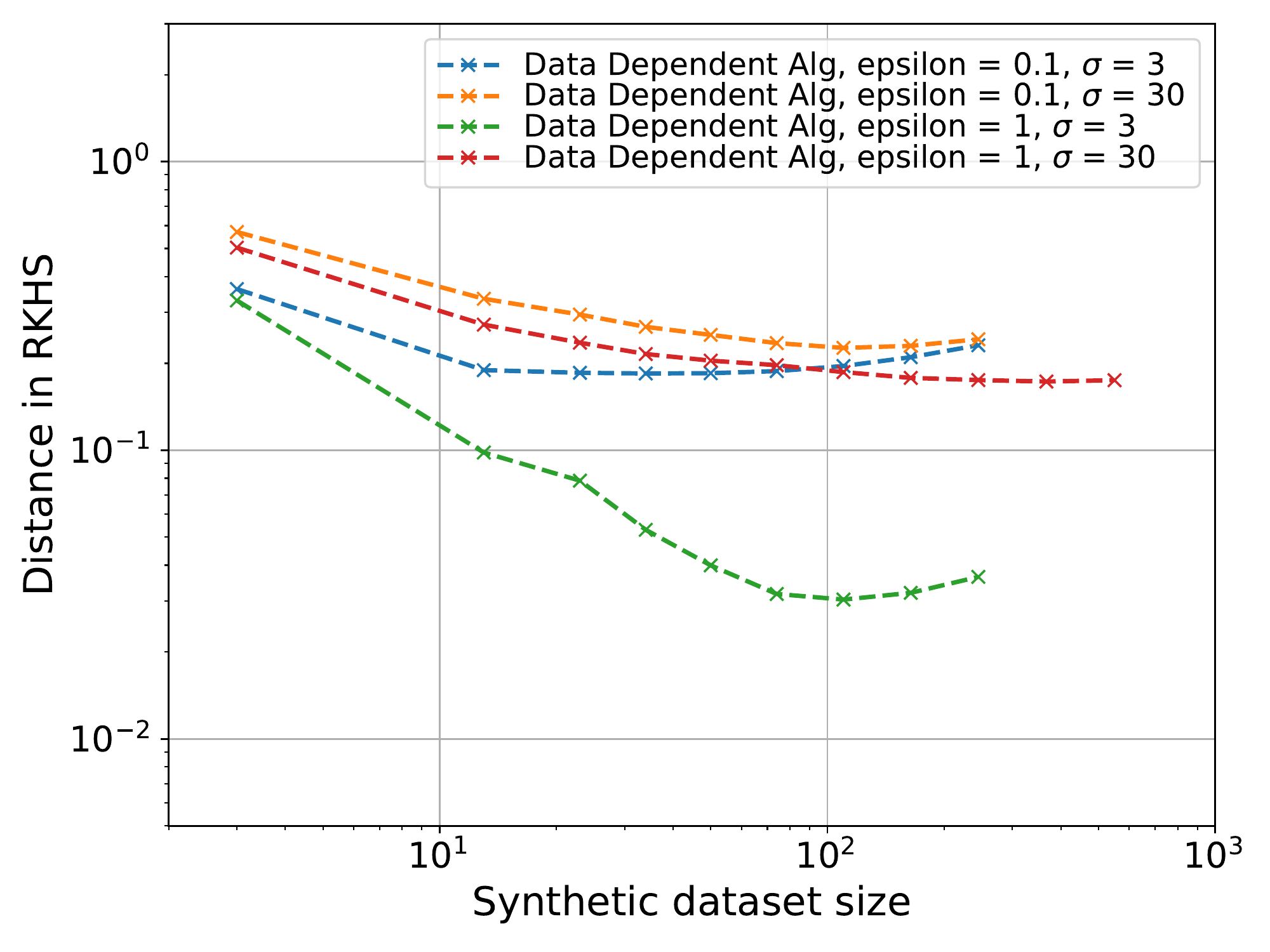}
    \caption{Performance comparison of our data dependent algorithm on 10-dimensional datasets with underlying mixture of Gaussians distribution with $\sigma = 3$ and $\sigma = 30$. Note that our algorithm performs better with smaller $\sigma$ as predicted by our theory.}
    \label{fig:d10-sigma}
\end{figure}

\subsection{Binary classification}\label{appendix:binary_classification}

We use a dataset from a Kaggle competition \url{https://www.kaggle.com/datasets/mlg-ulb/creditcardfraud} with information of credit card transactions which were either
fraudelent or not, which was also used in \cite{DBLP:conf/aistats/HarderAP21}. This dataset has 31 categories, 30 numerical features and a binary label. Similarly to \cite{DBLP:conf/aistats/HarderAP21}, we use all but the first feature (Time).

For both our data dependent algorithm and DP-MERF \citep{DBLP:conf/aistats/HarderAP21}, we use $80\%$ of input data for synthetic data generation, for various privacy budgets $\epsilon$. Synthetic data is then used to train $12$ classifiers (see Table \ref{tab:classification_data_dep}), which are tested on remaining $80\%$ of input data.

For training DP-MERF synthesizers, we set parameters as in \cite{DBLP:conf/aistats/HarderAP21}, i.e. number of epochs $4000$, number of Furier features $5000$, mini-batch side $0.5$, undersampling rate $0.005$. For our data dependent algorithm, we use undersampling rate of $0.005$, and set the number of data independent levels to be equal to $30$ and maximal number of levels to $60$.

As comparison metrics, we use ROC (area under the receiver operating curve). Table \ref{tab:classification_data_dep} shows average ROC for our data dependent algorithm over $20$ repetitions for each classifier, as well as average ROC over the classifiers. Table \ref{tab:classification_dp_merf} shows average ROC over classifiers for DP-MERF with $5$ repetitions for each classifier. 

Although our data dependent algorithm does not outperform DP-MERF, its performance degrades slower for increasing privacy.
\begin{table}[hb]
    \centering
    \caption{Data dependent algorithm: ROC for various levels of privacy. Average over 20 repetitions.}\label{tab:classification_data_dep}
    \begin{tabular}{ c c c c c }
      \toprule 
      &\bfseries $\epsilon=10$ & \bfseries $\epsilon=1$ & \bfseries $\epsilon=0.1$ & \bfseries $\epsilon=0.01$\\
      \midrule 
      \bfseries Logistic Regression & 0.705 & 0.545 &0.481 &0.527  \\ 
  \bfseries Gaussian Naive Bayes & 0.562 & 0.563 & 0.479 & 0.547  \\
  \bfseries Bernoulli Naive Bayes & 0.495 & 0.564 & 0.497 & 0.521  \\
\bfseries  Linear SVM & 0.758 & 0.524 & 0.508 & 0.546 \\
\bfseries  Decision Tree &  0.676 & 0.611 & 0.519 & 0.532  \\ 
\bfseries LDA &  0.518 & 0.580 & 0.480 & 0.542  \\
\bfseries Ada Boost &  0.632 & 0.572 & 0.485 & 0.521  \\
\bfseries Bagging &  0.673 & 0.579 & 0.518 & 0.508  \\
\bfseries Random Forest & 0.663 & 0.594 & 0.530 & 0.543  \\
\bfseries GBM &  0.631 & 0.582 & 0.521 & 0.523 \\
\bfseries Multi-layer percepton & 0.625 & 0.553 & 0.486 & 0.525  \\
\bfseries XGBoost &  0.588 & 0.598 & 0.527 & 0.478\\
\bfseries Average &   0.627 & 0.572 & 0.503 & 0.526 \\
      \bottomrule 
    \end{tabular}
\end{table}

\begin{table}[hb]\label{table appendix DP MERF}
    \centering
    \caption{DP-MERF: ROC for various levels of privacy. Average over 5 repetitions.}\label{tab:classification_dp_merf}
    \begin{tabular}{ c c c c }
      \toprule 
      &\bfseries $\epsilon=10$ & \bfseries $\epsilon=1$ & \bfseries $\epsilon=0.1$ \\
      \midrule 
      \bfseries Average &   0.880 & 0.792 & 0.564 \\
      \bottomrule 
    \end{tabular}
\end{table}

\clearpage

\section{Approximating Gaussian Distribution by Mixture of Uniforms}\label{appendix:proxy_mixture_uniforms}

\subsection{MMD}
Let us assume that the data is arising from a Gaussian distribution and we are estimating with the samples $\{z_i\}_1^n$. The maximum-mean discrepance (MMD) between the population P and the samples is given by: 
\begin{equation}
\mathrm{MMD}_{u}^{2}(\mathcal{N}_{d},Q_{n})=E_{x,x'\sim\mathcal{N}_{d}}[k(x,x')]-\frac{2}{n}\sum_{i=1}^{n}E_{x\sim\mathcal{N}_{d}}[k(x,z_{i})]+\frac{1}{n(n-1)}\sum_{i=1}^{n}\sum_{j\neq i}^{n}k(z_{i},z_{j}).\label{eq:mmdu}
\end{equation}
The expectations in the expression above can be computed analytically
to yield the formula~\cite{https://doi.org/10.1002/sta4.329}:

\[
\mathrm{MMD}_{u}^{2}(\mathcal{N}_{d},Q_{n})=\left(\frac{\gamma^{2}}{2+\gamma^{2}}\right)^{d/2}-\frac{2}{n}\left(\frac{\gamma^{2}}{1+\gamma^{2}}\right)^{d/2}\sum_{i=1}^{n}e^{-\frac{\Vert z_{i}\Vert^{2}}{2(1+\gamma^{2})}}+\frac{1}{n(n-1)}\sum_{i=1}^{n}\sum_{j\neq i}^{n}e^{-\frac{\Vert z_{i}-z_{j}\Vert^{2}}{2\gamma^{2}}}.
\]

\subsection{Widths of bins}
Let us assume that we have $2k+1$ boxes to approximate the Gaussian distribution where we assume an odd number of boxes to apply symmetry arguments. 
The distributions are given by:
\begin{align}
    P(x) = \frac{1}{\sqrt{2\pi}} e^{-x^2/2}
    Q(x) = w_0 I_0 + \sum_{i=1}^k w_i I_i + \sum_{i=1}^k w_{-i} I_{-i}
\end{align}
where $I_i = I[(2i-1)C <= x < (2i+1)C]$ and $\sum_{i=-k}^{k} w_i = 1$.
The KL divergence between distributions is given by:
\begin{align}
    D_{KL}(P||Q) := -\int_{\infty}^{\infty} p(x) \log\frac{p(x)}{q(x)} dx
\end{align}
and in particular for our setting is given by:
\begin{align}
   D_{KL}(Q||P) = \sum_{i=-k}^k w_i \log\frac{w_i}{2C} + \frac{1}{2} \log(2\pi) + \sum_{i=-k}^k \frac{w_i C^2}{6} (12i^2+1)  
\end{align}
Using the KL bounds for measuring divergence, we are able to obtain the following weights and size of the boxes. Each box is of size given by $2c$ and placed at location $2ci$ for $i\in[-k,k]$. 
Applying the method of Lagrange multipliers, we can obtain
the optimal box weights for a mixture of uniform distributions with respect to the Gaussian distribution:
\begin{align}
w_0 &= \frac{1}{1 + 2\sum_{i=1}^k e^{-2i^2c^2}} \\
w_i &= w_0 e^{-2i^2 c^2} \qquad \forall i \in [-k, k] \textrm{ and } i\ne 0
\end{align}

\end{document}